\documentclass[aps,pra,twocolumn,superscriptaddress]{revtex4-2}

\usepackage{amsmath,amsfonts}
\usepackage[caption=false,font=normalsize,labelfont=sf,textfont=sf]{subfig}

\usepackage[utf8]{inputenc}  
\usepackage[T1]{fontenc}    
\usepackage{hyperref}       
\usepackage{url}          
\usepackage{booktabs}        
\usepackage{amsfonts}        
\usepackage{nicefrac}        
\usepackage{microtype}       
\usepackage{bm}
\usepackage{braket}
\usepackage{graphicx}
\usepackage{amsthm}
\usepackage{amssymb}
\usepackage{amsmath}
\usepackage{bbm}
\usepackage{lipsum,mwe}

\usepackage[ruled,linesnumbered]{algorithm2e}

\usepackage{appendix}

\usepackage{adjustbox}
\usepackage[normalem]{ulem}
\newtheorem{theorem}{Theorem}
\newtheorem{lemma}{Lemma}

\newtheorem{assu}{Assumption}
\newtheorem{prop}{Proposition}

\newtheorem{prob}{Problem}

\newtheorem*{theorem-non}{Theorem}
\usepackage{mathtools}
\usepackage{bbm}
\usepackage[explicit]{titlesec}
\usepackage{nccmath}
\usepackage{xcolor}
\usepackage{makecell}
\usepackage{colortbl}

\usepackage{tkz-tab}
\usepackage{subcaption}
\usepackage{enumitem}
\usepackage{multirow}
\usepackage{caption}
\captionsetup{justification= Justified,singlelinecheck=false}
\DeclareMathOperator{\C}{C}
\DeclareMathOperator{\Q}{Q}
\DeclareMathOperator{\RQ}{RQ}

\DeclareMathOperator{\Tr}{Tr}

\DeclareMathOperator{\Fnorm}{F}

\DeclareMathOperator{\haar}{Haar}
\DeclareMathOperator{\swap}{SWAP}

\DeclareMathOperator{\Rot}{Rot}
\DeclareMathOperator{\R}{R}
\DeclareMathOperator{\Gene}{Gene}
\DeclareMathOperator{\ERM}{ERM}

\allowdisplaybreaks[4]

\begin{document}
	
	\title{Separable Power of Classical and Quantum Learning Protocols Through the Lens of No-Free-Lunch Theorem}

    \author{Xinbiao~Wang}
    \affiliation{Institute of Artificial Intelligence, School of Computer Science, Wuhan University, Wuhan 430072, China}

    \author{Yuxuan~Du}
    \email{duyuxuan123@gmail.com}
    \affiliation{School of Computer Science and Engineering, Nanyang Technological University, Singapore 639798, Singapore}

    \author{Kecheng~Liu}
    \affiliation{Center on Frontiers of Computing Studies, Peking University, Beijing 100871, China}

    \author{Yong~Luo}
    \affiliation{Institute of Artificial Intelligence, School of Computer Science, Wuhan University, Wuhan 430072, China}

    \author{Bo~Du}
    \affiliation{Institute of Artificial Intelligence, School of Computer Science, Wuhan University, Wuhan 430072, China}

    \author{Dacheng~Tao}
    \affiliation{School of Computer Science and Engineering, Nanyang Technological University, Singapore 639798, Singapore}
	
	
\begin{abstract}  
    The No-Free-Lunch (NFL) theorem, which quantifies problem- and data-independent generalization errors regardless of the optimization process, provides a foundational framework for comprehending diverse learning protocols' potential. Despite its significance, the establishment of the NFL theorem for quantum machine learning models remains largely unexplored, thereby overlooking broader insights into the fundamental relationship between quantum and classical learning protocols. To address this gap, we categorize a diverse array of quantum learning algorithms into three learning protocols designed for learning quantum dynamics under a specified observable and establish their NFL theorem. The exploited protocols, namely Classical Learning Protocols (CLC-LPs), Restricted Quantum Learning Protocols (ReQu-LPs), and Quantum Learning Protocols (Qu-LPs), offer varying levels of access to quantum resources. Our derived NFL theorems demonstrate quadratic reductions in sample complexity across CLC-LPs, ReQu-LPs, and Qu-LPs, contingent upon the orthogonality of quantum states and the diagonality of observables. We attribute this performance discrepancy to the unique capacity of quantum-related learning protocols to indirectly utilize information concerning the global phases of non-orthogonal quantum states, a distinctive physical feature inherent in quantum mechanics. Our findings not only deepen our understanding of quantum learning protocols' capabilities but also provide practical insights for the development of advanced quantum learning algorithms.
\end{abstract}  
	
	\maketitle
	
	\section{Introduction}\label{sec:introduction}
    In the realm of artificial intelligence and machine learning, the utilization of vast datasets has become a cornerstone for advancing deep learning algorithms \cite{hastie2009elements, halevy2009unreasonable,lecun2015deep, sun2017revisiting, zhang2021understanding}. Recent breakthroughs in natural language processing and computer vision serve as compelling evidence of this trend \cite{brown2020language,ouyang2022training,bai2022training,touvron2023llama,zhao2023survey}. Despite the numerous merits of large-volume data, the crucial determinant of deep learning success lies in the ability to effectively extract intricate patterns and knowledge from the information-rich landscapes of collected data \cite{bishop2006pattern,krizhevsky2012imagenet,bengio2013representation,schmidhuber2015deep, wang2020deep}. To this end, various advanced learning strategies have been proposed to maximize information extraction. Concrete instances include contrastive learning for identifying the intrinsic invariance in image data \cite{purushwalkam2020demystifying,dangovski2021equivariant,li2023augmentation}, generative adversarial networks for extracting information by generating synthetic data samples that are remarkably similar to real data \cite{goodfellow2020generative,creswell2018generative}, and transfer learning for recognizing patterns over different tasks \cite{torrey2010transfer,zhuang2020comprehensive,zhu2023transfer}. However, deep learning models have encountered the issues of the scaling law \cite{hestness2017deep,kaplan2020scaling,henighan2020scaling,rosenfeld2021scaling}, which indicates that the limited computational power of classical computers results in diminishing returns in performance improvement as the size of the training dataset grows. Therefore, alternative approaches that enable the efficient processing of large datasets and facilitate the development of advanced AI models are highly demanded.

    Quantum machine learning (QML), which utilizes quantum computers to construct learning models, is heralded as a potential game-changer for artificial intelligence \cite{schuld2015introduction,biamonte2017quantum,ciliberto2018quantum,tian2023recent,zeguendry2023quantum}.  A primary objective within QML is to effectively address problems that are intractable for classical algorithms, thereby delineating the boundaries between quantum and classical learning models in terms of learning performance. In this regard, the formulation of rigorous theoretical frameworks to deeply understand the capabilities of QML models is crucial for advancing this field.  
    Existing literature has seen the emergence of sophisticated quantum learning algorithms with provable advantages for specific datasets and problems.
    For instance, quantum kernel methods have been shown to significantly lower the generalization error compared to their classical counterparts on synthetic datasets, as evidenced by studies of Ref.~\cite{huang2021power, liu2021rigorous, wang2021towards}.
     Furthermore, in the domain of quantum states learning, models exploiting quantum memory have demonstrated an exponential reduction in query complexity compared to that without employing quantum memory for learning some specific properties of unknown states \cite{huang2021information, chen2022exponential}. Despite the advancement, a broader question of a universal separation between classical and quantum learning approaches—ones that do not rely on particular datasets or problems—remains less understood. To further unlock the potential of QML and harness its universal power, it is imperative to extend our exploration beyond narrowly defined scenarios and delve into the fundamental distinctions and capabilities that quantum learning algorithms offer over classical ones.
 
    \begin{figure*}[htbp]
		\centering
		\includegraphics[width=0.98\textwidth]{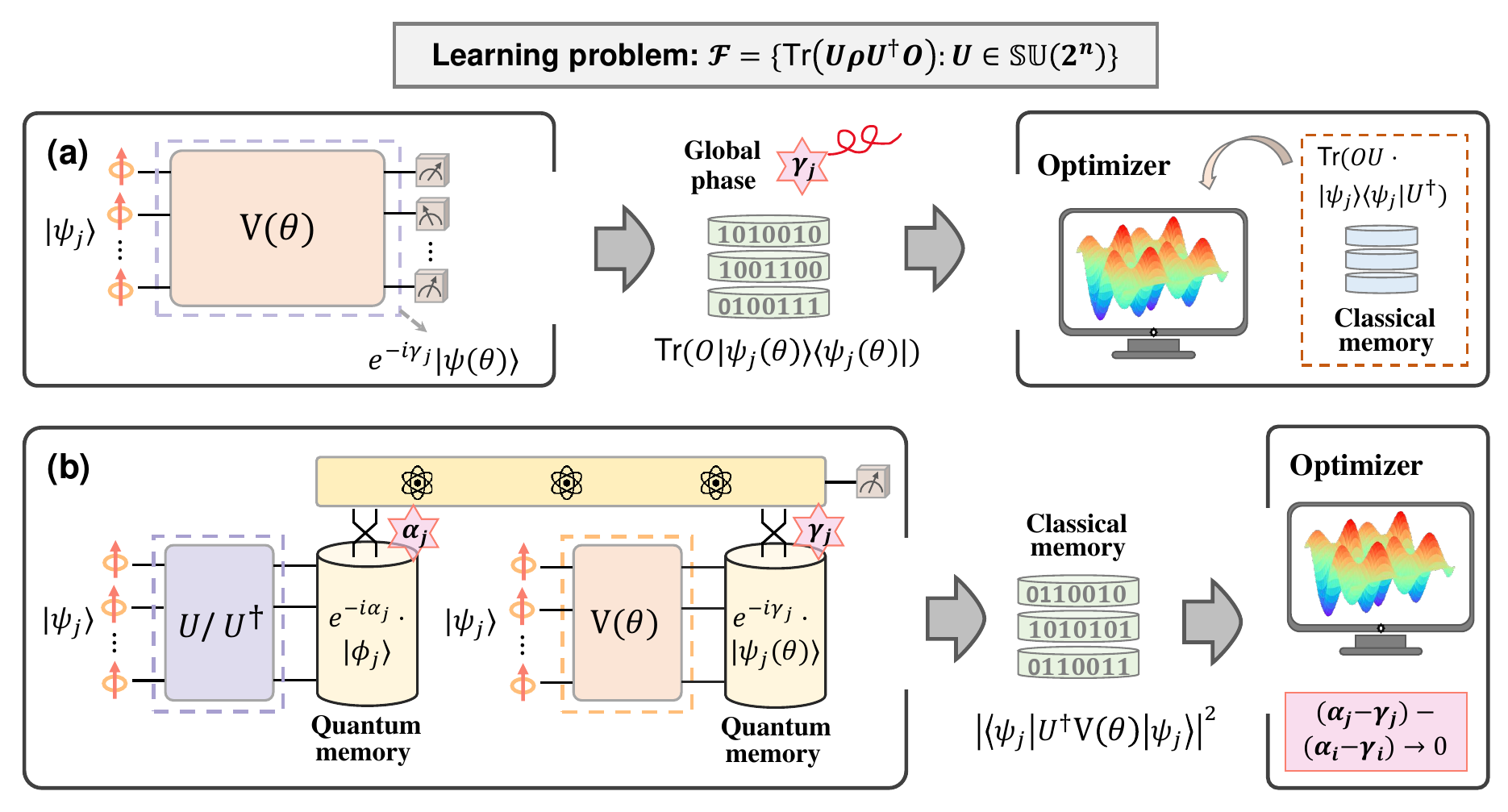}
		\caption{\small{\textbf{An overview of classical and (restricted) quantum learning protocols. 
				} Both the learning protocols shown in the upper panel (a) and the lower panel (b) aim to learn the target function $\Tr(U\rho U^{\dagger}O)$ for a given fixed observable $O$. \textbf{(a)} The classical learning protocols employ a tunable unitary $V(\bm{\theta})$ to evolve the input states $\ket{\bm{\psi}_j}$ to the output states $e^{-i\bm{\gamma}_j}\ket{\bm{\psi}_j(\bm{\theta})}:=V(\bm{\theta})\ket{\bm{\psi}_j}$ with $\bm{\gamma}_j$ being the global phase of states $V(\bm{\theta})\ket{\bm{\psi}_j}$, where $\bm{\theta}$ refers to tunable variables which could be discrete circuit structure or continuous parameters. A pre-defined observable $O$ is employed to measure the output states, leading to the information loss of the global phase $\bm{\gamma}_j$. An optimizer then is exploited to update the variables $\bm{\theta}$ according to the disparity of the measurement output and the target output $\Tr(OU\ket{\bm{\psi}_j}\bra{\bm{\psi}_j}U^{\dagger})$. \textbf{(b)} ReQu-LPs (or Qu-LPs) employ quantum operations to process the output states of the target unitary $U\ket{\bm{\psi}_j}$ (or $U^{\dagger}\ket{\bm{\psi}_j}$) and the tunable unitary $V(\bm{\theta})\ket{\bm{\psi}_j}$ in the same quantum system, where the output quantum states are stored in quantum memory. This allows for learning the phase information such that the phase difference $\bm{\alpha}_j-\bm{\gamma}_j$ over various training states converges to the same value as the optimization proceeds. }}
		\label{fig:scheme}
	\end{figure*}

    A pivotal metric for assessing the universal capabilities of a model is through the lens of the No Free Lunch (NFL) theorem, which is a fundamental concept in artificial intelligence that characterizes the capabilities of learning models \cite{wolpert1997no,ho2002simple, wolf2018mathematical}. One interpretation of the theorem is that the size of the training data determines the ultimate performance of models across different data types, regardless of the optimization procedure used. In particular, the NFL theorem considers the average learning performance across different training datasets and learning problems under the assumption of perfect training. This provides a viable solution for analyzing the data- and problem-independent power of quantum learning protocols. In this regard, establishing the NFL theorems for classical and quantum learning protocols on the same learning task not only contributes to understanding the power of the learning protocol itself, but also provides a fair manner for comparing the learning performance of various learning protocols on an equal footing. While initial endeavors have sought to formulate quantum NFL theorems specifically for compiling a quantum unitary \cite{poland2020no,sharma2022reformulation, wang2023transition}, they focused on the impact of entangled data on learning performance without considering the differences in various learning protocols (see Section~\ref{subsubsec:related_work_NFL} for details).
    
    \begin{table*}[ht]
		\centering
        \renewcommand{\arraystretch}{2.4}
		\caption{\small{\textbf{The achieved results of NFL theorems for various learning protocols.   
				} The notation $d=2^n$ refers to the dimension of an $n$-qubit unitary. $N$ is the size of the training dataset. $O$ and $\sum_{k\ne j}^d O_{kj}^2$ refers to the fixed observable and the square sum of the non-diagonal elements of $O$, respectively.}}
		\scalebox{1.10}{
			\begin{tabular}{c|c|c|c|c}
				\toprule
				\multicolumn{2}{c|}{Learning  Protocols}  & Classical & Restricted Quantum & Quantum \\
				\midrule
				\multirow{2}*{\makecell[l]{~ \\ Training \\States}} & Non-Orthogonal & \makecell[l]{$(d^2-N)/d^5 \times$ \\  $(d\Tr(O^2)-\Tr(O)^2)$} & \makecell[l]{$(d^2-N^2)/d^5 \times$ \\  $(d\Tr(O^2)-\Tr(O)^2)$} & \makecell[l]{$(d-N)/d^4 \times (d\Tr(O^2)-\Tr(O)^2)$ \\ $- N^2 \sum_{k\ne j}^d O_{kj}^2/d^4$} \\
                \cline{2-5}  
                & Orthogonal & \makecell[l]{$(d^2-N)/d^5\times$ \\  $(d\Tr(O^2)-\Tr(O)^2)$} & \makecell[l]{$(d^2-N)/d^5\times$ \\  $ (d\Tr(O^2)-\Tr(O)^2)$} & $(d-N)/d^4\times (d\Tr(O^2)-\Tr(O)^2)$ \\
				\bottomrule
			\end{tabular}
		}
		\label{table:NFL_bound}
	\end{table*}
 
	To address the identified knowledge gap of the universal power of QML models, we delve into exploring the task of learning an $n$-qubit quantum unitary $U$ under a known fixed observable $O$, which is the foundation of many practical tasks such as quantum state classification \cite{abohashima2020classification,li2022recent,du2023problem,qian2022dilemma}, quantum sensing \cite{degen2017quantum,alderete2022inference,coles2021pushing}, and quantum simulation \cite{georgescu2014quantum,endo2020variational,jones2019variational}. Many algorithms have been proposed to accomplish such tasks, such as variational quantum algorithms (VQAs) \cite{cerezo2021variational_VQA}, quantum process tomography with tensor networks \cite{torlai2023quantum}, and quantum circuit representation based on local inversions \cite{huang2024learning}.
    These algorithms are adaptable to different tasks by employing various learning protocols that depend on the level of access to quantum resources. For example, in quantum sensing tasks, direct access to the quantum system $U$ is possible, allowing for a straightforward learning protocol where learning models can directly interact with $U$. In contrast, in classification tasks, the learning model does not directly interact with $U$; instead, it utilizes classical data that encapsulates information about the target system $U$ for the learning process. Given the practical constraints on quantum resource usage, we categorize the mainstream algorithms into three distinct learning protocols: (1) \underline{C}Lassi\underline{C}al \underline{L}earning \underline{P}rotocols (CLC-LPs), which utilize measurement outputs from the target unitary $U$ for learning, as shown in Fig.~\ref{fig:scheme}(a); (2) \underline{Re}stricted \underline{Qu}antum \underline{L}earning \underline{P}rotocols (ReQu-LPs), which permit limited access to the target unitary during learning process; (3) \underline{Qu}antum \underline{L}earning \underline{P}rotocols (Qu-LPs), which allow access to the inverse of target unitary during learning process, as illustrated in Fig.~\ref{fig:scheme}(b).

    By establishing NFL theorems for these three different learning protocols separately, we rigorously show that there is a significant separation between the CLC-LPs and ReQu-LPs, as well as between the ReQu-LPs and Qu-LPs. We identify that the achieved separation depends on the orthogonality of training states and the properties of the observable $O$.
    
    \noindent \textit{Separation between CLC-LPs and the two quantum-related learning protocols:}
    The ReQu-LPs and Qu-LPs achieve a quadratic reduction from $4^n$ to $2^n$ in terms of the sample complexity compared to CLC-LPs, when the training quantum states are independent and non-orthogonal. The sample complexity $4^n$ for CLC-LPs matches the optimal query complexity for quantum state tomography with non-adaptive single copy measurements when not considering the sampling error of finite measurements \cite{lowe2022lower}. 
    
    \noindent \textit{Separation between ReQu-LPs and Qu-LPs:}
    The learning performance of ReQu-LPs reduces to the same level as that of CLC-LPs, when the training quantum states are orthogonal. In contrast, the learning performance of Qu-LPs remains the same for any type of training state when the observable is a diagonal matrix, achieving a quadratic reduction of sample complexity over the ReQu-LP with orthogonal training states. Additionally, such separation between ReQu-LPs and Qu-LPs shrinks with the increasing magnitude of the non-diagonal elements of the observable.
    
    \noindent \textit{The effect of inter-state relative phases on the separation.} We identify that the separation in learning performance between CLC-LPs and the two quantum learning protocols stems from the capability of ReQu-LPs and Qu-LPs to capture information about the inter-state relative phase, i.e., the difference between the global phases of output states of the target and learned unitary operators. As shown in Fig.~\ref{fig:scheme}, CLC-LPs utilize the measurement outputs as responses, wherein the measurement operation results in the loss of global phase information of the output states. In contrast, ReQu-LPs and Qu-LPs facilitate coherent quantum operations between the target and the optimized quantum systems, thereby effectively exploiting phase information. Additionally, it is important to note that capturing information about the inter-state relative phase is only possible when the training states are non-orthogonal.

    Overall, our main contributions can be  summarized from theoretical and empirical aspects as follows:
    \begin{itemize}
        \item \textbf{Theoretical contributions:} We first establish the quantum NFL theorem for the CLC-LPs, ReQu-LPs and Qu-LPs within the task of learning quantum dynamics under a fixed observable, as elucidated in Theorem~\ref{thm:NFL_Ob_maintext}, Theorem~\ref{thm:NFL_U_maintext}, and Theorem~\ref{thm:NFL_U_dagger_maintext}, respectively. These theorems unveil the average lower bounds of prediction error for various learning protocols, as summarized in Table~\ref{table:NFL_bound}, enabling us to comprehend the potential quantum advantage with quantum learning protocols. In particular, the achieved NFL theorems convey the following implications.
        \begin{enumerate}
            \item There is significant separation in terms of sample complexity between the CLC-LPs and the two quantum-related learning protocols (i.e., ReQu-LPs and Qu-LPs) when using non-orthogonal training states, as well as between ReQu-LPs and Qu-LPs for diagonal observables.
            \item We identify that the separation between the CLC-LPs and the two quantum-related learning protocols (i.e., ReQu-LPs and Qu-LPs) originates from the unique capability of the latter to effectively capture information about the inter-state relative phase over non-orthogonal states. To the best of our knowledge, this is the first study that uncovers the effect of relative phase on learning performance, an aspect overlooked in existing research.
        \end{enumerate} 
        \item \textbf{Numerical contributions:} We conduct numerical simulations for learning quantum unitaries under various observable to verify the established NFL theorems. All numerical results echo our theoretical results.
    \end{itemize}

    The remaining sections of this paper are structured as follows. In section~\ref{sec:preliminaries}, we introduce the preliminaries about quantum computing and variational quantum algorithms, and briefly review the related literature, including the NFL theorem for quantum machine learning as well as the established separation between classical and quantum machine learning in other learning tasks. Subsequently, we define various learning protocols for the task of learning a unitary under given observables in Section~\ref{sec:problem_setup}. The NFL theorems for each learning protocol are established in Section~\ref{sec:NFL} with the numerical simulations for verification conducted in Section~\ref{sec:numerics}. Section~\ref{sec:conclusion} concludes this paper with a discussion. All proofs are relegated to the supplementary material.

	\section{Preliminaries}\label{sec:preliminaries}
	
	In this section, we first review the essential foundations of quantum computing and variational quantum algorithms in Subsection~\ref{subsec:prep-Qc} and Subsection~\ref{subsec:VQA}, respectively. Subsequently, we introduce the related work of the NFL theorems in the context of QML and the established separation in classical and quantum learning protocols in Subsection~\ref{subsec:related_work}. The frequently used notations and abbreviations throughout the whole manuscript are summarized in Table~\ref{table:notation}.
	
	\begin{table}[ht]
		\centering
		\captionsetup{justification=centering}
		\caption{Notations and abbreviations}
		\scalebox{1}{
			\begin{tabular}{ll}
				\toprule
				Notation  & Description \\
				\midrule
				$n$ & Number of qubits \\
				$N$ & Size of training datasets \\
				$\mathcal{H}_d$ & $d$-dimensional Hilbert space \\
				$\ket{\bm{\phi}}$ & Quantum  state \\
				$\rho$ & Density matrix \\
                $\bm{\alpha},\bm{\beta},\bm{\gamma}$ & phase of quantum states \\
				$H^*$ & Complex conjugate of matrix $H$ \\
				$H^{\dagger}$ & Hermitian conjugate of matrix $H, H^{\dagger}=\left(H^{\mathrm{T}}\right)^*$ \\
				$\mathbb{I}$ & Identity matrix \\
				$U,V$ & Unitary matrix \\
				$[N]$ & Discrete set $\{1,2,\cdots,N\}$ \\
                $\haar$ & Haar distribution of quantum states (unitaries) \\
                CLC-LP & CLassiCal Learning Protocol \\
                ReQu-LP & Restricted Quantum Learning Protocol \\
                Qu-LP & Quantum Learning Protocol \\
				\bottomrule
			\end{tabular}
		}
		\label{table:notation}
	\end{table}
	
	\subsection{Quantum computing}
	\label{subsec:prep-Qc}
    \textbf{Quantum bit (qubit) and quantum states}. A quantum bit or qubit is the basic unit of quantum information. Unlike a classical bit that can be in a state of $0$ or $1$, a qubit state refers to a two-dimensional vector defined as  $\ket{\bm{\phi}}=a_0\ket{0}+a_1\ket{1}\in\mathbb{C}^2$ under Dirac notation, where $\ket{0}=[1,0]^{\top}$ and $\ket{1}=[0,1]^{\top}$ specify two unit bases, and the coefficients $a_0, a_1 \in \mathbb{C}$ satisfy $|a_0|^2+|a_1|^2=1$. Similarly, an $n$-qubit state is denoted by $ \ket{\bm{\psi}}=\sum_{i=1}^{2^n}a_i\ket{\bm{e}_i} \in \mathbb{C}^{2^n}$, where $\ket{\bm{e}_i}\in\mathbb{R}^{2^n}$ is the unit vector whose $i$-th entry is 1 and other entries are 0, and $\sum_{i=0}^{2^n-1}|a_i|^2=1$ with $a_i\in\mathbb{C}$. 

    \smallskip

    \noindent \textbf{Global phase, relative phase, and inter-state relative phase.} An $n$-qubit quantum state $\ket{\bm{\psi}}=\sum_{j=1}^{2^n}a_j\ket{\bm{e}_j} $ can be written as $\ket{\bm{\psi}}=e^{i\bm{\gamma}}\sum_{j=1}^{2^n}c_je^{i\bm{\gamma_j}}\ket{\bm{e_j}}$ with $c_j\in \mathbb{R}$ being a real number, $\bm{\gamma}_1=0$, and $a_j=e^{i\bm{\gamma}+\bm{\gamma}_j}c_j$. In this manner, $\bm{\gamma}$ and $\bm{\gamma}_j$ refer to the global phase and relative phase, respectively. In this study, we introduce another relative phase between states, dubbed the inter-state relative phase. Given two states $\ket{\bm{\psi}}$ and $\ket{\bm{\phi}}$ satisfying $\ket{\bm{\psi}}=e^{i\bm{\alpha}}\ket{\bm{\phi}}$, $\bm{\alpha}$ refer to the inter-state relative phase. In the end, the global phase is a special inter-state relative phase with the relative phase of $\ket{\bm{\phi}}$ satisfying $\bm{\gamma}_1=0$.

    \smallskip
    
    \noindent \textbf{Quantum gates and quantum circuits}. Quantum gates are the basic operations applied to qubits in a quantum computer. An $n$-qubit gate can be characterized by a unitary operator obeying $UU^{\dagger}=\mathbb{I}_{2^n}$. In practical quantum computing, only single-qubit and two-qubit gates are available. Typical single-qubit gates include the Pauli gates, which can be written as Pauli matrices:
		\begin{equation}
			X = \left[ \begin{array}{ccc}
				0 & 1 \\
				1 & 0 \\
			\end{array}
			\right],~  
			Y = \left[ \begin{array}{ccc}
				0 & -i \\
				i & 0 \\
			\end{array}
			\right],~
			Z = \left[ \begin{array}{ccc}
				1 & 0 \\
				0 & -1 \\
			\end{array}
			\right]. \label{eq:pauli}
		\end{equation}
		The more general quantum gates are their corresponding rotation gates $R_X(\bm{\theta})=e^{-i\frac{\bm{\theta}}{2}X}, R_Y(\bm{\theta})=e^{-i\frac{\bm{\theta}}{2}Y}$, and $R_Z(\bm{\theta})=e^{-i\frac{\bm{\theta}}{2}Z}$ with a tunable parameter $\bm{\theta}$.
    A quantum circuit is a sequence of quantum gates applied to qubits to perform a computation, aiming to achieve arbitrary large quantum gate operations.

    \smallskip
    
    \noindent \textbf{Quantum measurements}. Quantum measurement is the process of observing a quantum state, described by a Hermitian operator $H\in \mathcal{C}^{2^n \times 2^n}$ for an $n$-qubit system. Specifically, applying the measurement $H$ to the state $\ket{\bm{\psi}}$ allows for obtaining information about the state in terms of the expectation value 
	$\Tr(H \ket{\bm{\psi}}\bra{\bm{\psi}})$. For states $\ket{\bm{\psi}}=e^{i\bm{\gamma}}\ket{\bm{\phi}}$, the measurement outputs are independent with the inter-state relative phase $\bm{\gamma}$ as $\Tr(H \ket{\bm{\psi}}\bra{\bm{\psi}})=\Tr(H \ket{\bm{\phi}}\bra{\bm{\phi}})$.

    \smallskip
    
    \noindent \textbf{Haar measures, Haar states, and Haar unitaries}. The Haar measure is a fundamental concept in mathematics, characterizing the uniform distribution on a topological group, more specifically a unitary group $\mathbb{U}(2^n)$ \cite{nielsen2010quantum}. The Haar measure could be defined by its invariance under the group's action. Namely, let $\mu$ be the Haar measure, then $\mu(U) = \mu(UV)= \mu(VU)$ holds for any $U,V\in \mathbb{U}(2^n)$.  A Haar unitary refers to a unitary matrix that is selected according to the Haar measure from the group $\mathbb{U}(2^n)$. Random states obtained from the induced measure of Haar unitary are called Haar-random states. More precisely, any states obtained by acting the Haar unitary on an initial quantum state refer to the Haar-random states. In this regard, a Haar-random state in a Hilbert space is simply a state that is chosen at random from all possible states within that space with equal probability. For example, single-qubit Haar states are uniformly distributed on the Bloch sphere.

	\subsection{Variational quantum algorithms}\label{subsec:VQA}
	Variational quantum algorithms (VQAs) \cite{cerezo2021variational_VQA} provide a general framework for implementing the task of unitary learning with various learning protocols. As shown in Fig.~\ref{fig:VQA}, a VQA can be divided into four elemental blocks: the preparation of the initial state $\ket{\bm{\psi}}$, the implementation of a parameterized quantum circuit (PQC) $V(\bm{\theta})$, the measurement $H$, and the construction of the cost function $\mathcal{L}(\bm{\theta})$.

    \smallskip
    
    \noindent \textbf{Initial state preparation}. The input data in VQAs could be quantum states or classical vectors $\bm{x}$. In the latter case, the classical data must first be embedded into quantum states to enable processing on quantum computers. Typical encoding manners include amplitude embedding and gate embedding. Amplitude embedding maps the $2^n$-dimension vector $\bm{x}=(\bm{x}_1, \cdots, \bm{x}_{2^n})$ into the amplitude of an $n$-qubit quantum state $\ket{\bm{x}}=\sum_i \bm{x}_i\ket{\bm{e}_i}/\|\bm{x}\|_2 $ with $\ket{\bm{e}_i}$ being the computational basis. For gate embedding, the entries of classical vectors are encoded into the $n$-qubit parameterized gates $W(\bm{x})$ acting on the basis state $\ket{\bm{0}}^{\otimes n}$, leading to the quantum state representation $\ket{\bm{x}}=W(\bm{x})\ket{\bm{0}}^{\otimes n}$. In the subsequent discussion, we refer to the input data as quantum states $\ket{\bm{\psi}}$ assuming that all classical inputs have been encoded into quantum states.

    \smallskip
    
    \noindent \textbf{Implementations of PQCs}. The PQCs $V(\bm{\theta})$ consist of a sequence of parameterized gates with flexible structures, including hardware efficient ansatz, Hamiltonian variational ansatz, and tensor-network based ansatz \cite{kandala2017hardware,benedetti2019parameterized,du2021learnability,cong2019quantum,du2022efficient}. The PQCs can be directly applied to the prepared initial state or integrated into the process of initial state preparation, aiming to obtain the feature state $\ket{\bm{\psi}(\bm{\theta})}$ which takes the form of $V(\bm{\theta})\ket{\bm{\psi}}$ for the former case and $\prod_{k}V_k(\bm{\theta})W_k(\bm{x})\ket{\bm{0}}^{\otimes n}$ for the latter case \cite{perez2020data}.

    \smallskip
    
    \noindent \textbf{Cost function}.  The specific form of the cost function $\mathcal{L}(\cdot)$ varies with different learning tasks and learning algorithms. For example, in the classification task with training dataset $\mathcal{D}=\{\ket{\bm{\psi}_j}, \bm{y}_j\}$, the cost function refers to
    \begin{align}\label{eq:loss-QNN}
	\mathcal{L}_{\mathcal{D}}(\bm{\theta}) =    \frac{1}{N}\sum_{j=1}^N  (\Tr(V(\bm{\theta})\ket{\bm{\psi}_j}\bra{\bm{\psi}_j}V(\bm{\theta})^{\dagger}H) - \bm{y}_j)^2.
	\end{align}
    where $\bm{y}_j$ is the discrete label of $\ket{\bm{\psi}_j}$ and $H$ refers to a predefined projective measurement.

    \smallskip

    \noindent \textbf{Optimization of VQAs}. The optimization of VQA follows an iterative manner, where a classical optimizer utilizes the outputs from the quantum circuit to continuously update the trainable parameters $\bm{\theta}$ of the PQCs. This process aims to minimize the predefined objective function $\mathcal{L}_{\mathcal{D}}(\bm{\theta})$ in Eqn.~(\ref{eq:loss-QNN}). At the $t$-th iteration, the updating rule is 
    \begin{equation}
        \bm{\theta}^{(t+1)} = \bm{\theta}^{(t)}-\eta\frac{\partial \mathcal{L}_{\mathcal{D}}(\bm{\theta}^{(t)})}{\partial \bm{\theta}}, 
    \end{equation}    
    where $\eta$ is the learning rate. The updating continuously proceeds unless the loss $\mathcal{L}_{\mathcal{D}}$ is converged or the iteration $t$ exceeds the threshold. Finally, the optimized parameterized quantum circuits are employed to implement prediction on the unseen data. As the loss function $\mathcal{L}_{\mathcal{D}}$ measures how well the learned model performs in predicting the accurate response, it plays a crucial role in determining the model performance. 

    \begin{figure}  
    		\centering
    		\includegraphics[width=0.48\textwidth]{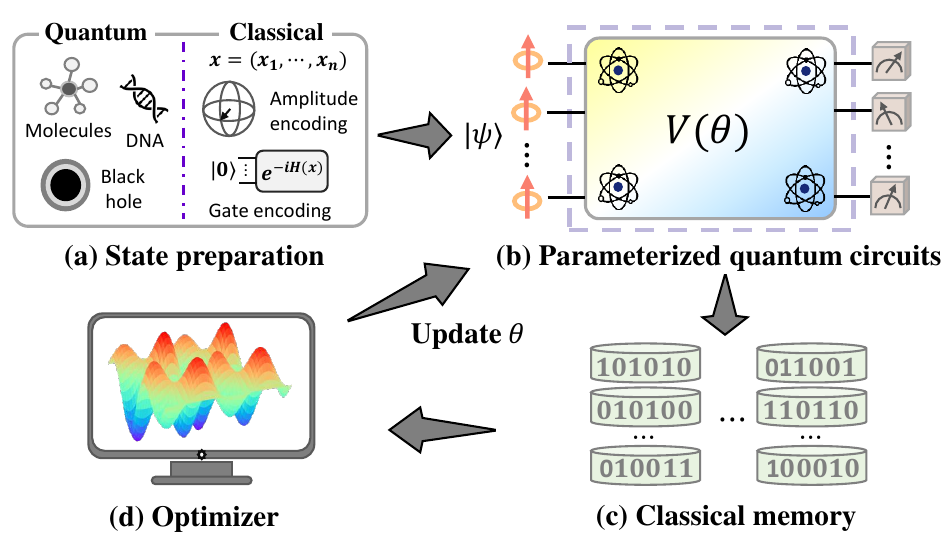}
    		\caption{\small{\textbf{Paradigm of variational quantum algorithms.} (a) Preparing input states by obtaining from quantum experiments or encoding classical data. (b) The input state is evolved by a parameterized quantum circuit $V(\bm{\theta})$ and then is measured to obtain the classical data stored in (c) classical memory. (d) The classical optimizer is employed to update the parameters $\bm{\theta}$ according to the measurement outputs.}}
    		\label{fig:VQA}
    	\end{figure}
     
    \smallskip
    
    \noindent \textbf{Evaluation metrics.} The typical metrics in evaluating the learning models include generalization error, sample complexity, or query complexity \cite{mohri2018foundations}. Generalization error $\R_{\Gene}$ refers to the difference between the prediction error $\R=\mathbb{E}_{\mathcal{D}}\mathcal{L}_{\mathcal{D}}(\bm{\theta})$ and training error $\R_{\ERM}=\mathcal{L}_{\mathcal{D}}(\bm{\theta})$, i.e., 
    \begin{equation}\label{eqn:def-gene-err}
    	\R_{\Gene}=\R-\R_{\ERM}.
    \end{equation}
   Moreover, generalization error is exactly the prediction error when the training error is zero. Sample complexity refers to the size of training datasets, or equivalently, the number of distinct quantum training states, which is generally considered in the ideal setting without considering the measurement error.
    
    Query complexity, a distinctive metric exclusive to QML, refers to the total number of queries of the explored quantum system. Remarkably, both query complexity and sample complexity have practical meanings from various considerations \cite{banchi2023statistical}. In this work, we only consider the relation between sample complexity and generalization error.

	\subsection{Related work}\label{subsec:related_work}
	In this subsection, we first review relevant literature pertaining to the  NFL theorem in the context of quantum machine learning in Subsubsection~\ref{subsubsec:related_work_NFL}. Subsequently, we discuss the differences in the achieved separation between classical and quantum learning protocols with that demonstrated in previous studies in Subsubsection~\ref{subsubsec:related_work_separa}.

	\subsubsection{Quantum NFL theorem}\label{subsubsec:related_work_NFL}
	The No-Free-Lunch (NFL) theorem is a renowned result in learning theory, emphasizing the constraints on the ability to learn a function solely from a training dataset. Its significance in classical learning theory is extensively acknowledged \cite{wolpert1996existence, wolpert1996lack, wolpert1997no, ho2002simple,wolf2018mathematical,adam2019no}. Ref.\cite{poland2020no} pioneered the NFL theorem's application to quantum machine learning, where both inputs and outputs are quantum states. Subsequently, Ref.\cite{sharma2022reformulation} refined the quantum NFL theorem for quantum-assisted learning protocols where the bipartite entangled states are prepared as the training states of learning models by introducing a reference quantum system. They highlighted that the use of such entangled data could offset the exponential training data size demands when learning unitaries with non-entangled data. Ref.~\cite{zhao2023learning} provides a unified information-theoretic reformulation of the quantum no-free-lunch theorem from the lens of representation space of quantum states, indicating the same power of mixed states \cite{yu2023optimal} as entangled states. Furthering this discourse, Ref.~\cite{wang2023transition} shifted the focus towards a pragmatic learning scenario—learning a target operator through finite measurement outcomes on quantum states, as opposed to learning directly from the quantum output states themselves. Their findings underlined that the degree of entanglement in quantum states presents a dual effect on the generalization performance of learning models contingent on the number of measurements. 
 
	However, despite the promising results shown by entangled data in reducing the size of training datasets, the preparation of entangled data is highly resource-intensive and tends to introduce additional quantum noise, particularly in the early stages of quantum computing. The problem settings explored in this work lean towards a more realistic approach, focusing on learning the quantum unitaries using unentangled data.

	\subsubsection{Separation between classical and quantum learning models} \label{subsubsec:related_work_separa}
	As the aim of quantum machine learning is to harness the peculiarities of quantum mechanics to improve upon the capabilities of classical learning models, the separation in learning performance between classical and quantum learning models has been explored in various learning tasks under various metrics, including classification \cite{havlivcek2019supervised,qian2022dilemma} and quantum system learning \cite{anshu2024survey}.

    For the classification tasks, Ref.~\cite{huang2021power} has demonstrated that for carefully sculpted synthetic data with quantum computers, quantum kernels could achieve a significant reduction in terms of the generalization error over the best classical kernel methods. Moreover, Ref.~\cite{liu2021rigorous} has shown the power of quantum kernels on classical data of discrete logarithm problems which could not be efficiently addressed by classical algorithms in polynomial time.

    Quantum system learning aims to fully characterize the quantum states or quantum channels by measuring the outcomes of the quantum systems.  A plethora of studies delved into exploring the separation in terms of query complexity between the learning protocols with and without employing quantum resources. In particular, Ref.~\cite{haah2016sample} obtains the optimal query complexity for quantum state learning by employing quantum memory to store multiple copies of a quantum state and performing entangled measurement on it, achieving a quadratic reduction in query complexity compared to the learning protocols without quantum memory. The same spirits are employed in the task of learning the properties of quantum states with classical shadow \cite{huang2020predicting}. Ref.~\cite{huang2021information} and Ref.~\cite{chen2022exponential} have demonstrated that in the average scenario, there is no separation between the learning protocols with and without quantum memory for such tasks, while there is an exponential separation in the worst case of specific quantum states. However, the utilization of quantum memory is highly resource-intensive in quantum resources, which are scarce in the early stages of quantum computing. In this regard, alternative algorithms with minimal quantum memory are proposed to achieve the same exponential separation by employing the conjugate states \cite{king2024exponential}. 

    Despite the advancement of quantum learning protocols, quantum advantages are established for specific data or problems, leaving the universal power of quantum learning protocols unknown. 

    \section{Problem setup for quantum dynamics learning}\label{sec:problem_setup}
    Quantum dynamics learning involves converting a black-box quantum evolution operator into a digital form that can be analyzed on the quantum or classical computer. As discussed in Section~\ref{sec:introduction}, quantum dynamics learning is essential for understanding the behavior and evolution of quantum systems, enabling precise control and manipulation of quantum states for various applications. By studying the evolution of quantum states, researchers can develop efficient algorithms such as quantum Fourier transformation \cite{nielsen2010quantum} and variational quantum eigensolver \cite{tilly2022variational}, optimize quantum circuits for quantum error correction \cite{chiaverini2004realization}, and tackle complex problems in fields like quantum information processing \cite{beckey2022variational, cerezo2020variational,larose2019variational}, quantum chemistry and material sciences \cite{bauer2016hybrid, google2020hartree, o2019calculating, peruzzo2014variational}, and particle physics \cite{avkhadiev2020accelerating, kokail2019self}. In other words, learning quantum dynamics is often a necessary step before executing algorithms on real quantum devices. Several companies have developed their own commercial quantum learning platforms to facilitate this process \cite{cross2017open, smith2016practical}.
    
    Here we consider the specific form of quantum dynamics learning---learning a quantum unitary $U$ under a given observable $O$, where $U$ could be the unknown dynamics of an experimental quantum system. Formally, the target function is given by 
    \begin{align}\label{eq:learning_model}
	f_{U}(\bm{\psi}) = \Tr(OU\ket{\bm{\psi}}\bra{\bm{\psi}}U^{\dagger}),
	\end{align}
    where $O$ is an arbitrary bounded Hermitian operator with $\|O\|\le \infty$. Let $\mathcal{D}=\{(\ket{\bm{\psi}_j},\bm{y}_j)\}_{j=1}^N$ be the training dataset of $N$ examples, where $\bm{y}_j$ could be quantum data or classical data, and is not necessarily represented in the form of Eqn.~(\ref{eq:learning_model}). In particular, the type of response $\bm{y}_j$ varies with different learning protocols as detailed below. The aim of quantum dynamics learning is to optimize a tunable unitary $V(\bm{\theta})$ to predict the output for an unseen input $\bm{\ket{\psi}}$ in as $h(\bm{\psi}) = \Tr(OV(\bm{\theta})\ket{\bm{\psi}}\bra{\bm{\psi}}V(\bm{\theta})^{\dagger})$, where $\bm{\theta}$ refers to tunable variables which could be discrete circuit structure or continuous parameters.

    \subsection{Risk function and No-Free-Lunch theorem} 
    The risk function (or prediction error) is a crucial measure in statistical learning theory to quantify how well the learned hypothesis function $h$ performs in predicting the concept $f$. As discussed in Subsection~\ref{subsec:VQA}, the risk function in the context of learning the target function of Eqn.~(\ref{eq:learning_model}) refers to 
    \begin{equation}\label{eq:risk}
        R_{U}(V_{{\mathcal{D}}}) = \int \mathrm{d} {\bm{\psi}} \left( f_U(\bm{\psi})-h_{{\mathcal{D}}}(\bm{\psi})  \right) ^2,
    \end{equation}
    where $h_{\mathcal{D}}(\bm{\psi}_j)=\Tr(OV_{\mathcal{D}}\ket{\bm{\psi}_j}\bra{\bm{\psi}_j}V_{\mathcal{D}}^{\dagger})$ with $V_{\mathcal{D}}$ referring to the trained unitary, and the integration is taken over the Haar states $\ket{\bm{\psi}}$. In the following, the term risk function and prediction error will be used interchangeably to elucidate the theoretical and empirical results.

    In traditional machine learning, the no-free-lunch (NFL) theorem aims at deriving the lower bound of the average risk function across all possible training datasets and target concepts, under the assumption of perfect training—that is, the learned hypothesis achieves zero training error for a given loss function \cite{wolpert1997no, wolf2023mathematical}. In the context of learning $n$-qubit quantum dynamics, the possible concept refers to the unitary operator within the unitary group, i.e., $U\in \mathbb{U}(d)$ with $d=2^n$ being the dimension of $n$-qubit quantum system. Moreover, the uniform distribution of the target concept corresponds to the Haar distribution of unitaries as introduced in Subsection~\ref{subsec:prep-Qc}. The type of training dataset $\mathcal{D}$ depends on the employed learning protocol. In this regard, we formulate the problem considered by the NFL theorem for quantum dynamics learning as follows.
    \begin{prob}\label{prob:NFL}
        Let $\mathcal{D}=\{(\ket{\bm{\psi}_j}, \bm{y}_j)|\ket{\bm{\psi}_j}\in \mathcal{H}_{\mathcal{X}},\bm{y}_j\in \mathcal{H}_{\mathcal{Y}} \}_{j=1}^N$ be the training dataset and $\mathcal{P}:\mathcal{H}_{\mathcal{X}}\to \mathcal{H}_{\mathcal{Y}}$ be the related learning protocol satisfying the perfect training assumption $\mathcal{P}(\ket{\bm{\psi}_j})=\bm{y}_j$ for any $j\in[N]$. Let $h_{\mathcal{D}}(\bm{\psi})=\Tr(OV_{\mathcal{D}}\ket{\bm{\psi}}\bra{\bm{\psi}}V_{\mathcal{D}}^{\dagger})$ be the learned hypothesis and $O$ an arbitrary observable. The NFL theorem for quantum dynamics learning considers the lower bound of the average risk function
        \begin{align}\label{eq:risk_function}
	\mathbb{E}_{U} \mathbb{E}_{\mathcal{D}}R_{U}(V_{{\mathcal{D}}}) = \mathbb{E}_{U} \mathbb{E}_{\mathcal{D}} \int \mathrm{d} {\psi} \left( f_U(\bm{\psi})-h_{{\mathcal{D}}}(\bm{\psi})  \right) ^2,
	\end{align}
        where the expectation is taken over the Haar unitary and all possible training datasets in some specific types.
    \end{prob}
   
    Notably, the average risk function depends on the hypothesis $h_{{\mathcal{D}}}(\cdot)$ learned from specific learning protocols. Under the framework of NFL theorems considering the average prediction error over various learning algorithms under the perfect training assumption regardless of the optimization process, various learning algorithms only differ in the varying restrictions of access to quantum resources. In this regard, we encapsulate different learning algorithms into three learning protocols according to the level of access to quantum resources, namely classical learning protocols, restricted quantum learning protocols, and quantum learning protocols, where their implementations are elucidated in the subsequent subsections.

 \noindent  \textbf{Remark.}  As discussed in Section~\ref{sec:introduction}the way of implementing $h(\bm{\psi})$ is versatile, where numerous advanced learning algorithms can be used to learn the target unitaries $U$, such as VQAs \cite{cerezo2020variational}, quantum process tomography with tensor networks \cite{torlai2023quantum}, and quantum circuit representation based on local inversions \cite{huang2024learning}. Besides, the optimization approaches of the tunable unitary $V(\bm{\theta})$ vary with different learning algorithms. The implementation of the tunable $V(\bm{\theta})$ is arbitrary, as long as it caters to the constraints on the available quantum resources in different learning protocols. For convenience in our discussion, we assume that the tunable unitary $V(\bm{\theta})$ is implemented on parameterized quantum circuits, which could be applied to various learning protocols. This assumption is mild as we only consider the optimized unitary to achieve perfect training regardless of its implementation.

    \begin{figure}  
		\centering
		\includegraphics[width=0.48\textwidth]{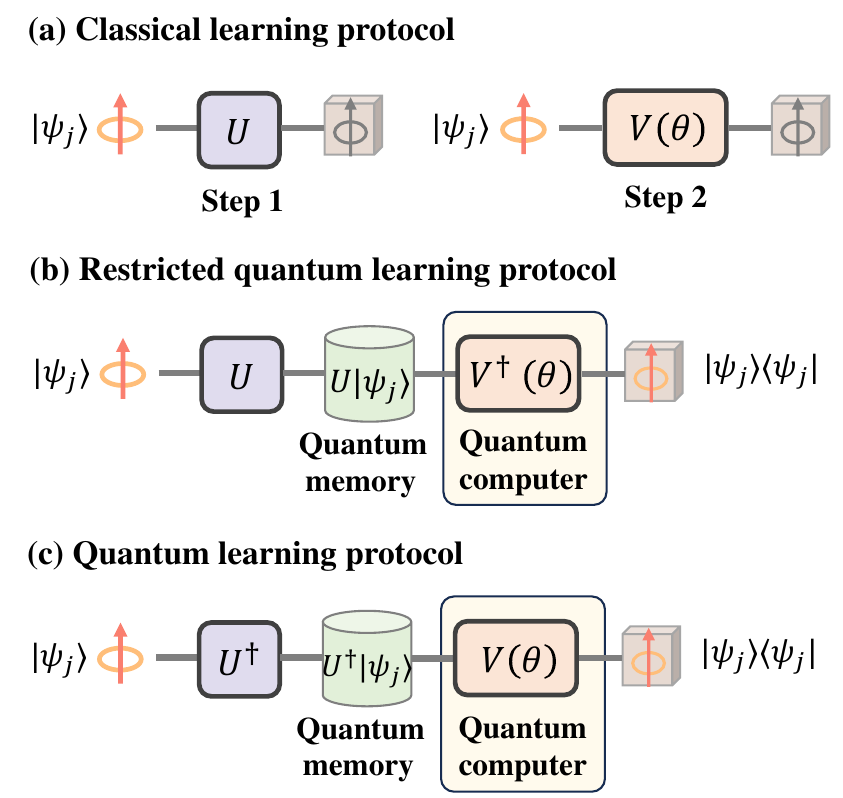}
		\caption{\small{\textbf{Scheme of various learning protocols.} The output states $U\ket{\bm{\psi}_j}$ and $U^{\dagger}\ket{\bm{\psi}_j}$ are stored in quantum memory. The tunable unitary $V(\bm{\theta})$ is implemented on quantum computers in ReQu-LPs and Qu-LPs. However, in the case of CLC-LPs, it can also be implemented on classical computers using tensor network techniques \cite{orus2019tensor} or deep neural networks \cite{gao2017efficient1}.}}
		\label{fig:learning_protocols}
	\end{figure}
 
    \subsection{Classical learning protocols}\label{subsec:CLP}
    The CLassiCal Learning Protocols (CLC-LPs) employ the states $\ket{\bm{\psi}_j}$ as inputs and the measurement output of the fixed observable $O$ on the evolved states $U\ket{\bm{\psi}_j}$ as the response, as shown in Fig.~\ref{fig:learning_protocols}(a). The input $\ket{\bm{\psi}_j}$ could be quantum states obtained in experiments or classical states of bits, referring to the computational basis $\ket{\bm{e}_j}$. Denote the fixed observable $O$ defined in Eqn.~(\ref{eq:learning_model}) as the spectral decomposition $O=\sum_{q=1}^{r} \bm{o}_q\ket{\bm{o}_q}\bra{\bm{o}_q}$, where  $\ket{\bm{o}_q}$ are the eigenvectors related to the eigenvalues $\bm{o}_q$ of observable $O$. As the expectation value of the observable $O$ could be obtained by measuring the output states with the projective measurement $\{\ket{\bm{o}_1}\bra{\bm{o}_1}, \cdots, \ket{\bm{o}_r}\bra{\bm{o}_r}\}$, the training dataset of $N$ examples takes the form 
    \begin{equation}\label{eq:data_c}
    \mathcal{D}_{\C}=\{(\ket{\bm{\psi}_j}, \bm{a}_j)~|~\ket{\bm{\psi}_j} \in \mathcal{H}_{d},\bm{a}_j=(\bm{a}_{j1}, \cdots, \bm{a}_{jr}) \}_{j=1}^N,
    \end{equation}
    where $\bm{a}_{jq}=\Tr(U^{\dagger}\ket{\bm{o}_q}\bra{\bm{o}_q}U\ket{\bm{\psi}_j}\bra{\bm{\psi}_j}U)$ refers to the expectation value of projective measurement $\ket{\bm{o}_q}\bra{\bm{o}_q}$. In this regard, the loss function is given by
    \begin{align}\label{eq:inco_loss_eigen}
        \mathcal{L}_C(\bm{\theta}) = & \frac{1}{N}\sum_{q=1}^r \sum_{j=1}^N  \bigg(\Tr(\ket{\bm{o}_q}\bra{\bm{o}_q}V(\bm{\theta})\ket{\bm{\psi}_j}\bra{\bm{\psi}_j}V(\bm{\theta})^{\dagger}- 
        \nonumber \\
        & \Tr(\ket{\bm{o}_q}\bra{\bm{o}_q}U\ket{\bm{\psi}_j}\bra{\bm{\psi}_j}U^{\dagger})\bigg)^2,
    \end{align}
    where the tunable unitary $V(\bm{\theta})$ is trained to learn the output of each projective measurement $\ket{\bm{o}_q}\bra{\bm{o}_q}$. 
    This leads to the form of perfect training assumption
    \begin{align}       \Tr(U^{\dagger}\ket{\bm{o}_q}\bra{\bm{o}_q}U\ket{\bm{\psi}_j}\bra{\bm{\psi}_j})&=\Tr(V_{\C}^{\dagger}\ket{\bm{o}_q}\bra{\bm{o}_q}V_{\C}\ket{\bm{\psi}_j}\bra{\bm{\psi}_j}), \label{eq:pt_inco}
    \end{align}
    where $V_{\C}$ is the well-trained unitary and $j\in[N], q\in [r]$. 
    Notably, the trained unitary $V_{\C}$ could be employed to predict the output of a class of observables with the form of $O_j=\sum_{q=1}^r \bm{o}_{jq}\ket{\bm{o}_q}\bra{\bm{o}_q}$ for any $\bm{o}_{jq} \in \mathbb{R}$ by post-processing the classical output of the projective measurement $\ket{\bm{o}_q}\bra{\bm{o}_q}$.

\noindent    \textbf{Remark}. (i) This learning protocol resembles the classical shadow \cite{huang2020predicting} where the outputs of random projective measurement are utilized to efficiently predict the expectation value of exponential diverse observables on a given quantum state, namely $\Tr(U^{\dagger}OU\rho)$ for $O\in \{O_j\}_{j=1}^{2^n}$. However, we focus on the sample complexity of learning a random quantum unitary, while classical shadow considers the query complexity for predicting the properties of specific states. (ii) This learning protocol is also dubbed the incoherent learning protocol in Ref.~\cite{jerbi2023power}. We suggest the name `classical learning protocols' because when the target unitary is generated by Clifford circuits, this learning protocol could be implemented completely on the classical computer without using any quantum resources by training on classical bitstrings, as Gottesman-Knill theorem \cite{gottesman1998heisenberg} shows that the Clifford circuits can be efficiently simulated with classical computers.

    \subsection{Restricted quantum learning protocols}\label{subsec:RQLP}
    Restricted Quantum Learning Protocols (ReQu-LPs) allow coherent operations to act on the target quantum system $U$, meaning that the target unitary $U$ and the tunable unitary $V(\bm{\theta})$ could be implemented on the same quantum devices. Specifically, the output states $U\ket{\bm{\psi}_j}$, derived from the given input state $\ket{\bm{\psi}_j}$, can be preserved in quantum memory. This storage facilitates subsequent direct operations using a parameterized quantum circuit, as depicted in Fig.~\ref{fig:learning_protocols}(b). In this end, the training data refers to the pair of input-output states $(\ket{\bm{\psi}_j}, U\ket{\bm{\psi}_j})$ stored in quantum memory. This differs from the CLC-LPs in the responses where the latter employs the measurement outcomes on the output state $U\ket{\bm{\psi}_j}$ for training.
    Given the training dataset with $N$ examples, i.e.,
    \begin{equation}\label{eq:data_RQ}
        \mathcal{D}_{\RQ}=\{(\ket{\bm{\psi}_j}, U\ket{\bm{\psi}_j} )~|~\ket{\bm{\psi}_j} \in \mathcal{H}_{d}\}_{j=1}^N,
    \end{equation}
    the loss function takes the trace norm of the difference between the evolved states under the target unitary $U$ and the tunable unitary $V(\bm{\theta})$, which is given by
	\begin{align}\label{eq:trace_norm_loss}
	\mathcal{L}_{\RQ}(\bm{\theta}) = \frac{1}{4N}\sum_{j=1}^N \big\| & U\ket{\bm{\psi}_j}\bra{\bm{\psi}_j}U^{\dagger}- 
	\nonumber \\ 
	& V(\bm{\theta})\ket{\bm{\psi}_j}\bra{\bm{\psi}_j}V(\bm{\theta})^{\dagger} \big\|_1^2.
	\end{align}
	By using the relation between the trace norm and fidelity \cite{nielsen2010quantum}, this loss function can be reformulated as the expression in terms of the average fidelity
	\begin{equation}\label{eq:coherent_loss_setup}
	\mathcal{L}_{\RQ}(\bm{\theta})=1-\frac{1}{N}\sum_{j=1}^N \bra{\bm{\psi}_j}V(\bm{\theta})^{\dagger}U\ket{\bm{\psi}_j}^2.
	\end{equation}
For ReQu-LPs, optimizing the loss function needs to calculate the inner product between the output states $U\ket{\bm{\psi}_j}$ and $V(\bm{\theta})\ket{\bm{\psi}_j}$, which could be efficiently computed employing the Loschmidt
	echo or swap test circuit as shown in Fig.~\ref{fig:scheme}(b). Moreover, given the training dataset $\mathcal{D}_{\RQ}$ in Eqn.~(\ref{eq:data_RQ}), the assumption of perfect training refers to $|\bra{\bm{\psi}_j}U^{\dagger}V_{\RQ}\ket{\bm{\psi}_j}|^2=1$ for any $j\in[N]$, or equivalently
    \begin{equation}
        U\ket{\bm{\psi}_j}=e^{i\bm{\alpha}_j}V_{\RQ}\ket{\bm{\psi}_j}, 
        \label{eq:pt_co}
    \end{equation}
    where $V_{\RQ}$ refers to the well-trained unitary on the training dataset $\mathcal{D}_{\RQ}$, $\bm{\alpha}_j$ refers to the inter-state relative phase between the output states and varies with different input states $\ket{\bm{\psi}_j}$. Particularly, the phase $\bm{\alpha}_j$ could be arbitrary and is uniformly distributed over any period of the function $e^{i\bm{\alpha}_j}$ in the context of NFL theorems considering the average case of all related ingredients.

    \subsection{Quantum learning protocols with access to $U^{\dagger}$}\label{subsec:QLP} 
    While ReQu-LPs are restricted to access to the vanilla target quantum system $U$, Quantum Learning Protocols (Qu-LPs) allow access to the transformation of the target unitary $U$. 
    In particular, employing the transformation of quantum systems such as conjugate or inverse has recently received significant attention as a potential way for achieving provable quantum advantage \cite{salmon2023provable, van2023quantum}.
    Here, we consider the case in which the access to the inverse of the target unitary $U^{\dagger}$ is allowed. Effective algorithms have been proposed to implement this transformation \cite{chen2024quantum}. In this case, the training dataset of size $N$ refers to 
    \begin{equation}\label{eq:data_Q}
        \mathcal{D}_{\Q}=\{(\ket{\bm{\psi}_j}, U^{\dagger}\ket{\bm{\psi}_j}): \ket{\bm{\psi}_j}\in \mathcal{H}_d\}_{j=1}^N.
    \end{equation}
    Moreover, the inverse of the tunable unitary is employed in the inference stage, which can be efficiently implemented by reversing the order of the gates and replacing each with its inverse. In particular, we denote the tunable unitary as $V(\bm{\theta})^{\dagger}$ for training and $V(\bm{\theta})$ for inference, as shown in Fig.~\ref{fig:learning_protocols}(c). In this regard, the loss function refers to
    \begin{equation}\label{eq:loss_co_dagger}
        \mathcal{L}_{\Q}(\bm{\theta})=1-\frac{1}{N}\sum_{j=1}^N \left|\bra{\bm{\psi}_j}V(\bm{\theta})U^{\dagger}\ket{\bm{\psi}_j}\right|^2,
    \end{equation}
    and the assumption of perfect training is given by 
    \begin{align}
        U^{\dagger}\ket{\bm{\psi}_j}&=e^{i\bm{\beta}_j}V_{\Q}^{\dagger}\ket{\bm{\psi}_j}, \label{eq:pt_co_dagger}
    \end{align}
    where $V_{\Q}$ refers to the well-trained unitary on $\mathcal{D}_{\Q}$, $\bm{\beta}_j$ refers to the inter-state relative phase between the output states $U^{\dagger}\ket{\bm{\psi}_j}$ and $V_{\Q}^{\dagger}\ket{\bm{\psi}_j}$. Similar to $\bm{\alpha}_j$ defined in Eqn.~(\ref{eq:pt_co}), the phase $\bm{\beta}_j$ is uniformly distributed over any period of the function $e^{i\bm{\beta}_j}$.
    
    Formally, the loss function defined in Eqn.~(\ref{eq:loss_co_dagger}) differs from that specified in Eqn.~(\ref{eq:coherent_loss_setup}) for ReQu-LPs solely by the order of multiplication of $U^{\dagger}$ and $V(\bm{\theta})$. This alteration significantly impacts the learning performance between ReQu-LPs and Qu-LPs for certain input states, which will be illustrated in Section~\ref{subsec:sep_theoretic}.

    \section{NFL theorem for quantum dynamics learning}\label{sec:NFL}
    In this section, we first introduce how to reduce the problem regarding NFL theorems of deriving a lower bound as elucidated in Problem~\ref{prob:NFL} to a simplified problem of deriving an upper bound in Subsection~\ref{subsec:central_lemma}. Then, we  establish the NFL theorems for CLC-LPs, ReQu-LPs, and ReQu-Qu-LPs in Section~\ref{subsec:CNFL}, Section~\ref{subsec:RQNFL}, and Section~\ref{subsec:QNFL}, respectively. Finally, we discuss the separation in terms of sample complexity between various learning protocols and identify the quantum resource used in quantum-related learning protocols causing such separation in Section~\ref{subsec:sep_theoretic}.

    \subsection{NFL theorem for QML}\label{subsec:central_lemma}
	Recall that the NFL theorem considers the lower bound of the average risk function as detailed in Problem~\ref{prob:NFL}. It is well known that deriving lower bounds is more challenging than deriving upper bounds, as there are well-established techniques for deriving upper bounds, such as using the Vapnik-Chervonenkis (VC) dimension \cite{vapnik1982necessary}, Rademacher complexity \cite{bartlett2002rademacher}, or covering numbers \cite{tikhomirov1993varepsilon}. In contrast, deriving lower bounds often requires more bespoke arguments tailored to the specific problem or model at hand. Fortunately, in the unitary learning problem, the problem of lower bounding the averaged risk function in Eqn.~(\ref{eq:risk_function}) could be transformed into the problem of upper bounding a trace function, as encapsulated in the following lemma with the proof deferred to Appendix~C.
	\begin{lemma}\label{lem:upper_to_lower}
		The averaged risk function defined in Eqn.~(\ref{eq:risk_function}) over the Haar input states yields
		\begin{align}\label{eq:risk_trans}
		R_U(V_{\mathcal{D}}) = \frac{1}{d(d+1)}\left[2\Tr(O^2)-2\Tr(U^{\dagger}OUV_{\mathcal{D}}^{\dagger}OV_{\mathcal{D}}) \right],
		\end{align}
        where $d=2^n$ refers to the dimension of the $n$-qubit quantum system, $V_{\mathcal{D}}$ refers to the learned unitary with training dataset $V_{\mathcal{D}}$.
	\end{lemma}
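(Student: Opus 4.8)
The plan is to reduce the integral over Haar-random input states to a single second-moment computation and then collapse everything using unitarity. Writing $\rho=\ket{\bm{\psi}}\bra{\bm{\psi}}$ and using cyclicity of the trace, both $f_U(\bm{\psi})=\Tr(U^{\dagger}OU\rho)$ and $h_{\mathcal{D}}(\bm{\psi})=\Tr(V_{\mathcal{D}}^{\dagger}OV_{\mathcal{D}}\rho)$ are linear functionals of $\rho$. Introducing the Hermitian operator $M:=U^{\dagger}OU-V_{\mathcal{D}}^{\dagger}OV_{\mathcal{D}}$, the integrand of Eqn.~(\ref{eq:risk}) is simply $(\Tr(M\rho))^2$, so $R_U(V_{\mathcal{D}})=\int\mathrm{d}\bm{\psi}\,(\Tr(M\rho))^2$.

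The second step is the standard copy trick. Since $(\Tr(M\rho))^2=\Tr(M\rho)\Tr(M\rho)=\Tr\big((M\otimes M)(\rho\otimes\rho)\big)$, I would pull the Haar average inside the trace and invoke the second-moment identity
\begin{equation*}
\int\mathrm{d}\bm{\psi}\,\rho^{\otimes 2}=\frac{\mathbb{I}+\swap}{d(d+1)},
\end{equation*}
which holds because the projector onto the symmetric subspace of $\mathcal{H}_d\otimes\mathcal{H}_d$ equals $\tfrac12(\mathbb{I}+\swap)$ and has dimension $\tfrac{d(d+1)}{2}$. This yields $R_U(V_{\mathcal{D}})=\frac{1}{d(d+1)}\big[\Tr((M\otimes M)\mathbb{I})+\Tr((M\otimes M)\swap)\big]$.

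The third step evaluates the two traces. The identity term factorizes as $\Tr((M\otimes M)\mathbb{I})=(\Tr M)^2$, while the swap term collapses through $\Tr((A\otimes B)\swap)=\Tr(AB)$ to $\Tr((M\otimes M)\swap)=\Tr(M^2)$. Unitarity now does the rest: by cyclicity $\Tr(U^{\dagger}OU)=\Tr(V_{\mathcal{D}}^{\dagger}OV_{\mathcal{D}})=\Tr(O)$, hence $\Tr M=0$ and the identity term drops out. Expanding $\Tr(M^2)=\Tr(A^2)-2\Tr(AB)+\Tr(B^2)$ with $A=U^{\dagger}OU$ and $B=V_{\mathcal{D}}^{\dagger}OV_{\mathcal{D}}$, and using $A^2=U^{\dagger}O^2U$ (likewise for $B$) gives $\Tr(A^2)=\Tr(B^2)=\Tr(O^2)$, so $\Tr(M^2)=2\Tr(O^2)-2\Tr(U^{\dagger}OUV_{\mathcal{D}}^{\dagger}OV_{\mathcal{D}})$. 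Substituting reproduces Eqn.~(\ref{eq:risk_trans}) exactly.

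Every ingredient here is a standard tool, so I do not anticipate a genuine obstacle; the crux is simply the second-moment formula, and the one subtlety worth stating carefully is that the average is taken over pure Haar-random \emph{states} rather than Haar-random unitaries. I would note that the induced state measure is precisely the one for which the symmetric-subspace projector formula holds, so the derivation applies verbatim. The remaining manipulations are routine trace identities powered by the unitarity of $U$ and $V_{\mathcal{D}}$.
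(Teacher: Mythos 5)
Your proof is correct and matches the paper's own argument essentially step for step: the paper likewise defines $G:=U^{\dagger}OU-V_{\mathcal{D}}^{\dagger}OV_{\mathcal{D}}$, applies the same second-moment identity $\int\mathrm{d}\bm{\psi}\,\ket{\bm{\psi}}\bra{\bm{\psi}}^{\otimes 2}=(\mathbb{I}_d^{\otimes 2}+\swap)/(d(d+1))$ via the copy trick, and reduces to $\Tr(G)^2+\Tr(G^2)$ with $\Tr(G)=0$ by unitarity. Your additional remark distinguishing the Haar state measure from the Haar unitary measure is a fine point the paper handles implicitly through its stated Property on Haar states.
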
 
	Lemma~\ref{lem:upper_to_lower} gives the following three implications. First, deriving the upper bound of the average of trace function $\Tr(U^{\dagger}OUV_{\mathcal{D}}^{\dagger}OV_{\mathcal{D}})$ suffices to obtain the lower bound of the average risk function $R_U(V_{\mathcal{D}})$. Such an upper-bounding problem could be effectively derived by combining it with the assumption of perfect training under various learning protocols, which will be elucidated in the following subsections. Second, there is a fixed $d$-dependent factor $1/d(d+1)$ in the prediction error. This is because we consider the learning performance in the average case where the Haar integration with respect to the input states leads to the factor $1/d(d+1)$ (Refer to Appendix~C for details). 
	Third, when the observable $O$ is proportional to the identity, the average risk function vanishes for any training dataset and target unitary. In this regard, we consider the observable satisfying $O\ne c\mathbb{I}_d$ for any complex number $c\in \mathbb{C}$.
 
	\subsection{NFL theorem for classical learning protocols}\label{subsec:CNFL}
    We first consider the NFL theorem in the CLassiCal Learning Protocol (CLC-LP) introduced in Section~\ref{subsec:CLP}. 
    Under this learning protocol with perfect training assumption in Eqn.~(\ref{eq:pt_inco}), the following NFL theorem gives the lower bound of the average risk function defined in Eqn.~(\ref{eq:risk_function}).

    \begin{theorem}
		[NFL for CLC-LPs] 
        Let $O$ be an arbitrary observable. For the learning task defined in Eqn.~(\ref{eq:learning_model}), when the CLC-LPs achieve the perfect training of Eqn.~(\ref{eq:pt_inco}) for the loss function defined in Eqn.~(\ref{eq:inco_loss_eigen}), the average risk function $R_U(V_{\C})$ defined in Eqn.~(\ref{eq:risk_function}) over all $n$-qubit unitaries $U$ and training datasets $\mathcal{D}_{\C}$ yields
		\begin{align}
		\mathbb{E}_{U,\mathcal{D}_{\C}} R_U(V_{\C}) \ge \Omega\left(\frac{(d^{2}-N)(d\Tr(O^2)-\Tr(O)^2)}{d^{5}}\right).
		\end{align}
		where $V_C$ refers to the learned unitary on the training dataset $\mathcal{D}_{\C}$ of size $N$, $d=2^n$ is the dimension of $n$-qubit quantum system, $\Omega$ means that the achieved lower bound omits the constant factor. 
		\label{thm:NFL_Ob_maintext}
	\end{theorem}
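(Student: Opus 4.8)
The plan is to begin from Lemma~\ref{lem:upper_to_lower}, which already performs the hard half of the reduction: it rewrites the Haar-averaged risk as
\begin{equation}
\mathbb{E}_{U,\mathcal{D}_{\C}}R_U(V_{\C}) = \frac{2}{d(d+1)}\left[\Tr(O^2) - \mathbb{E}_{U,\mathcal{D}_{\C}}\Tr(U^{\dagger}OUV_{\C}^{\dagger}OV_{\C})\right],
\end{equation}
so that lower bounding the risk is equivalent to \emph{upper} bounding the expected trace. My first move is to subtract the identity component of the observable, setting $\tilde{O}=O-\tfrac{\Tr(O)}{d}\mathbb{I}$: because $U^{\dagger}OU-V_{\C}^{\dagger}OV_{\C}=U^{\dagger}\tilde{O}U-V_{\C}^{\dagger}\tilde{O}V_{\C}$, the risk equals $\frac{1}{d(d+1)}\Tr(A^2)$ with $A=U^{\dagger}\tilde{O}U-V_{\C}^{\dagger}\tilde{O}V_{\C}$, and the traceless $\tilde{O}$ satisfies $d\Tr(\tilde{O}^2)=d\Tr(O^2)-\Tr(O)^2$, which is exactly the observable factor in the claimed bound. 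This also makes the $N=0$ baseline transparent: for an unconstrained $V_{\C}$, first-order Haar averaging gives $\mathbb{E}\Tr(U^{\dagger}\tilde{O}UV_{\C}^{\dagger}\tilde{O}V_{\C})=(\Tr\tilde{O})^2/d=0$, recovering a risk of order $\Tr(\tilde{O}^2)/d^2$ that matches the $N=0$ endpoint.

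Next I would convert the perfect-training assumption into a workable constraint on $V_{\C}$. Writing $O=\sum_q \bm{o}_q\ket{\bm{o}_q}\bra{\bm{o}_q}$, Eqn.~(\ref{eq:pt_inco}) states $|\bra{\bm{o}_q}U\ket{\bm{\psi}_j}|^2=|\bra{\bm{o}_q}V_{\C}\ket{\bm{\psi}_j}|^2$ for every eigenprojector and training state, i.e.\ $V_{\C}\ket{\bm{\psi}_j}$ and $U\ket{\bm{\psi}_j}$ agree up to an $O$-diagonal phase rotation $\Lambda_j$ that can be chosen \emph{independently for each} $j$. Thus the admissible learned unitaries are obtained by fixing arbitrary $O$-commuting phases $\{\Lambda_j\}$, imposing $V_{\C}\ket{\bm{\psi}_j}=\Lambda_jU\ket{\bm{\psi}_j}$ on the ($\le N$)-dimensional training span, and leaving $V_{\C}$ free on the complement. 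The observation that makes the bound hold for \emph{every} algorithm rather than a random one is that, conditioned on $\mathcal{D}_{\C}$, the action of $U$ on the complement of the training span is still Haar-distributed; hence $\mathbb{E}_U[R_U(V_{\C})\mid\mathcal{D}_{\C}]$ is insensitive to how the learner fixes the free parameters $\{\Lambda_j\}$ and the complementary block of $V_{\C}$, so I may integrate them against the induced Haar/uniform measure without loss.

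The core computation is then to evaluate $\mathbb{E}\Tr(U^{\dagger}\tilde{O}UV_{\C}^{\dagger}\tilde{O}V_{\C})$ under this measure. I would expand the trace in the $O$-eigenbasis as $\sum_{q,q'}\bm{o}_q\bm{o}_{q'}\,|\bra{\bm{o}_q}UV_{\C}^{\dagger}\ket{\bm{o}_{q'}}|^2$ and integrate the free degrees of freedom first: averaging the independent phases $\{\Lambda_j\}$ dephases off-diagonal ($q\ne q'$) cross-correlations between $U$ and $V_{\C}$, while Haar-averaging the complementary block of $V_{\C}$ contributes only its first-order term $\propto\mathbb{I}$, which washes out against the traceless $\tilde{O}$. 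What survives is the diagonal agreement of $V_{\C}$ with $U$ that the $N$ examples force on the training block; a second-moment (Weingarten) evaluation of this surviving piece over the Haar input states and Haar $U$ yields a cross term growing linearly in $N$, of order $\tfrac{N}{d^2}\Tr(\tilde{O}^2)$. Substituting back gives $\mathbb{E}\Tr(A^2)=\Omega\!\big((1-N/d^2)\Tr(\tilde{O}^2)\big)$, and with the $1/d(d+1)$ prefactor this is the advertised $\Omega\big((d^2-N)(d\Tr(O^2)-\Tr(O)^2)/d^5\big)$.

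The main obstacle I anticipate is this last step: controlling the surviving diagonal contribution when the $N$ constraints are coupled through the single shared unitary $V_{\C}$ and the training states are generic and non-orthogonal, so the naive ``$N$ independent rank-one constraints'' picture overcounts. The delicate point is to show the cross term grows linearly in $N$ \emph{regardless of the overlaps} among the $\ket{\bm{\psi}_j}$ — equivalently, that the classical (probability-only) constraints reduce the effective learnable dimension by exactly $N$, in contrast to the quantum protocols — which demands a careful Weingarten bookkeeping of the correlated training block rather than a term-by-term estimate, and is presumably where the orthogonal and non-orthogonal inputs are shown to yield the same CLC-LP bound.
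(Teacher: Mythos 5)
Your overall architecture coincides with the paper's own proof: the reduction via Lemma~\ref{lem:upper_to_lower}, the expansion of $\Tr(U^{\dagger}OUV_{\C}^{\dagger}OV_{\C})$ in the eigenbasis of $O$, the reformulation of the perfect-training condition~(\ref{eq:pt_inco}) as $V_{\C}\ket{\bm{\psi}_j}=\Lambda_j U\ket{\bm{\psi}_j}$ with per-example $O$-diagonal phases (the paper's $\bm{\gamma}_j^{(q)}$, which the paper likewise simply \emph{assumes} to be independent and uniform), the dephasing of the cross terms under phase averaging (the paper's $T_2,T_3$), and the Haar average over the untrained block producing the baseline (the paper's $T_4$); your centering $\tilde O=O-\Tr(O)\mathbb{I}/d$ is a clean repackaging that reproduces the paper's exact coefficients, e.g.\ $\mathbb{E}\,\Tr(U^{\dagger}\tilde OU V_{\C}^{\dagger}\tilde O V_{\C})= N\Tr(\tilde O^2)/(d^2-1)$. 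However, the one step you defer---that the surviving training-block term grows linearly in $N$ ``regardless of the overlaps''---is precisely the nontrivial computation, and your proposal contains no argument for it. The paper resolves it (its term $T_1$) by QR-decomposing the training-state matrix $\bm{\Psi}_{1:N}=\widetilde{\bm{\Psi}}_{1:N}R^{(N)}$, extending to a complete (generally non-orthogonal) basis, reducing the phase-averaged term to $\mathbb{E}_{\mathcal{D}_{\C}}\Tr(\Gamma^{-1})$ with $\Gamma=\bm{\Psi}_{1:N}^{\dagger}\bm{\Psi}_{1:N}$ the Gram matrix, and evaluating this via exchangeability of the diagonal entries together with $((R^{(N)})^{-1})_{11}=1$. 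Without control of this inverse-Gram quantity (or an equivalent), your assertion that the classical constraints reduce the effective dimension ``by exactly $N$'' remains a conjecture rather than a proof.

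A second, structural gap: your constraint picture---pin $V_{\C}$ on the $(\le N)$-dimensional training span and leave it Haar-free on the complement---tacitly assumes $N\le d$. But the theorem's content (sample complexity $d^2=4^n$) lives in the regime $d<N\le d^2$, where the inputs are necessarily linearly dependent, there is no free complement, and the per-example phases must be mutually consistent across many overlapping independent subsets of size $d$. The paper handles this regime by a separate argument: it expands $U^{\dagger}\ket{\bm{o}_p}$ over a linearly independent subset indexed by $\mathcal{I}$, writes the trace term through the phase matrix $P_{\bm{\gamma}}^{\mathcal{I}}$, and bounds $\mathbb{E}_{\bm{\gamma}}\max_{\mathcal{I}}\sum_{j,k\in\mathcal{I}} e^{i(\bm{\gamma}_j^{(q)}-\bm{\gamma}_k^{(q)})}\le \min\{N,d^2\}$, which is exactly what makes the bound degrade linearly in $N$ all the way to $N=d^2$ instead of saturating at $N=d$. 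As stated, your plan cannot produce the theorem for $N>d$, so you would need to supply this dependent-states analysis (or an equivalent) to complete the proof.
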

    The proof of Theorem~\ref{thm:NFL_Ob_maintext} is deferred to Appendix~D. In what follows, we combine the results established in the literature revolving around quantum state learning to comprehend the connection of quantum dynamics learning and quantum state learning, as well as the tightness of the achieved results of Theorem~\ref{thm:NFL_Ob_maintext}. Quantum state learning, also known as quantum state tomography, is a crucial task in quantum computing that aims to reconstruct quantum states with a specified level of precision by analyzing measurement outcomes from these states. For instance, when the observable $O$ is semi-definitely positive, the operator $U^{\dagger}OU/\Tr(O)$ could be treated as a mixed quantum state. It is straight to verify the property of quantum states $\Tr(U^{\dagger}OU/\Tr(O))=1$. In this regard, the task of learning quantum dynamics under the given observable $O$ is equivalent to the task of learning quantum states $U^{\dagger}OU/\Tr(O)$ with the measurement outputs $\bm{a}_j$ defined in Eqn.~(\ref{eq:data_c}) by regarding the input states $\ket{\bm{\psi}_j}\bra{\bm{\psi}_j}$ as the measurement operators. The optimal and tight query complexity for learning the quantum state with the finite number of non-adaptive measurements has been demonstrated to be $d^2/\varepsilon^2$, where $\varepsilon$ is the reconstruction precision \cite{lowe2022lower}.
    
    The result of Theorem~\ref{thm:NFL_Ob_maintext} indicates that achieving the zero generalization error requires the training dataset of at least $d^2=4^n$ examples, which scales exponentially with the number of qubits $n$. This lower bound matches the $d$-dependent term in the optimal lower bound $d^2/\varepsilon^2$ of learning the quantum state $U^{\dagger}OU/\Tr(O)$ discussed above.  The factor $1/\varepsilon^2$ is omitted in our discussion because our focus is on the learning protocols in the noiseless setting, where statistical measurement errors that typically cause this term are not considered.

    \subsection{NFL theorem for restricted quantum learning protocols}\label{subsec:RQNFL}
    For the Restricted Quantum Learning Protocols (ReQu-LPs) introduced in Section~\ref{subsec:RQLP},
    the NFL theorem is encapsulated in the following theorem, whose proof is deferred to Appendix~E.
    \begin{theorem}
		[NFL theorem for ReQu-LPs] 
        Let $O$ be an arbitrary observable. For the learning task defined in Eqn.~(\ref{eq:learning_model}), when the ReQu-LPs achieve the perfect training of Eqn.~(\ref{eq:pt_co}) with respect to the loss function defined in Eqn.~(\ref{eq:coherent_loss_setup}), the average risk function $R_U(V_{\RQ})$ defined in Eqn.~(\ref{eq:risk_function}) over all $n$-qubit unitaries $U$ and training datasets $\mathcal{D}_{\RQ}$ yields
		\begin{align}\label{eq:NFL_U}
		& \mathbb{E}_{U,\mathcal{D}_{\RQ}} R_U(V_{\RQ}) \ge \Omega \Bigg(\frac{(d^{2}-N^2)}{d^{5}} \cdot \mathbbm{1}(\bm{\alpha}\varpropto \mathbf{1}_N) 
        \nonumber \\
        & + \frac{(d^{2}-N)}{d^{5}} \cdot\mathbbm{1}(\bm{\alpha}\not\varpropto \mathbf{1}_N)\bigg) \times (d\Tr(O^2)-\Tr(O)^2),
		\end{align}
		where the equality holds for linearly independent states, $V_{\RQ}$ refers to the learned unitary on the training dataset $\mathcal{D}_{\RQ}$ of size $N$, $d=2^n$ is the dimension of the $n$-qubit quantum system, $\bm{\alpha}=(\bm{\alpha}_1, \cdots, \bm{\alpha}_N)$ and $\mathbf{1}_N$ refers to the all-ones vector of $N$-dimension, $\mathbbm{1}(\bm{\alpha}\varpropto \mathbf{1}_N)$ means that there exists a constant $c$ such that the phase vector $\bm{\alpha}=c\mathbf{1}_N$, i.e., $\bm{\alpha}_k=\bm{\alpha}_j$ for any $j,k\in [N]$. 
		\label{thm:NFL_U_maintext}
	\end{theorem}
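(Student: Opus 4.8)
The plan is to invoke Lemma~\ref{lem:upper_to_lower}, which reduces lower-bounding the averaged risk to upper-bounding the averaged trace $\mathbb{E}_{U,\mathcal{D}_{\RQ}}\Tr(U^{\dagger}OUV_{\RQ}^{\dagger}OV_{\RQ})$. The central idea is to absorb the perfect-training constraint into a clean reparametrization. Writing $V_{\RQ}=UP$, the condition in Eqn.~(\ref{eq:pt_co}) becomes $P\ket{\bm{\psi}_j}=e^{-i\bm{\alpha}_j}\ket{\bm{\psi}_j}$; that is, the training states are eigenvectors of the unitary $P$ with eigenvalues $e^{-i\bm{\alpha}_j}$, while $P$ acts as a free (Haar) unitary $P'$ on the orthogonal complement $S^{\perp}$ of the training span $S=\Span\{\ket{\bm{\psi}_j}\}$. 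Crucially, $P$ is assembled only from the fixed training states, the phases $\bm{\alpha}_j$, and the free completion $P'$, hence it is independent of $U$; this independence is exactly the averaging over all admissible well-trained unitaries that underlies the NFL framework.

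Substituting $V_{\RQ}=UP$ gives $\Tr(U^{\dagger}OUV_{\RQ}^{\dagger}OV_{\RQ})=\Tr(\hat{O}P^{\dagger}\hat{O}P)$ with $\hat{O}=U^{\dagger}OU$, which I would recast in tensor form as $\Tr[(\hat{O}\otimes\hat{O})\,\swap\,(P\otimes P^{\dagger})]$. Averaging first over the Haar unitary $U$ via the standard second-moment twirl $\mathbb{E}_U[\hat{O}\otimes\hat{O}]=c_1\mathbb{I}+c_2\swap$, with $c_1=(d\Tr(O)^2-\Tr(O^2))/(d(d^2-1))$ and $c_2=(d\Tr(O^2)-\Tr(O)^2)/(d(d^2-1))$, collapses the expression to
\begin{equation}
\mathbb{E}_U\Tr(U^{\dagger}OUV_{\RQ}^{\dagger}OV_{\RQ})=c_1 d + c_2\,|\Tr(P)|^2 ,
\end{equation}
using $\Tr[\swap(P\otimes P^{\dagger})]=\Tr(PP^{\dagger})=d$ and $\Tr(P\otimes P^{\dagger})=|\Tr(P)|^2$. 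Thus the entire $N$-dependence is funnelled into the single scalar $\mathbb{E}[|\Tr(P)|^2]$, which is moreover insensitive to the precise choice of training states.

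The remaining step is to evaluate $\mathbb{E}[|\Tr(P)|^2]$, and this is where the orthogonality of the training states enters. The key observation is that unitarity of $P$ forces $\langle\bm{\psi}_j|\bm{\psi}_k\rangle=e^{i(\bm{\alpha}_j-\bm{\alpha}_k)}\langle\bm{\psi}_j|\bm{\psi}_k\rangle$, so whenever two training states are non-orthogonal their phases must coincide; for linearly independent non-orthogonal states this forces $\bm{\alpha}\propto\mathbf{1}_N$ and $P=e^{-i\bm{\alpha}_1}\Pi_S\oplus P'$ with $\Pi_S$ the rank-$N$ projector onto $S$, whereas orthogonal states permit distinct phases and $P=\sum_{j}e^{-i\bm{\alpha}_j}\ket{\bm{\psi}_j}\bra{\bm{\psi}_j}\oplus P'$. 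Using that the phases are uniform ($\mathbb{E}[e^{-i\bm{\alpha}_j}]=0$) and that a Haar unitary on the $(d-N)$-dimensional complement satisfies $\mathbb{E}[|\Tr(P')|^2]=1$, the cross terms vanish and I obtain $\mathbb{E}[|\Tr(P)|^2]=N^2+1$ in the equal-phase (non-orthogonal) case and $\mathbb{E}[|\Tr(P)|^2]=N+1$ in the distinct-phase (orthogonal) case. Plugging these back yields $\Tr(O^2)-\mathbb{E}_{U,\mathcal{D}_{\RQ}}[\Tr(U^{\dagger}OUV_{\RQ}^{\dagger}OV_{\RQ})]=(d^2-\mathbb{E}[|\Tr(P)|^2])(d\Tr(O^2)-\Tr(O)^2)/(d(d^2-1))$, and combining with the prefactor $1/(d(d+1))$ from Lemma~\ref{lem:upper_to_lower} reproduces the two branches $(d^2-N^2)/d^5$ and $(d^2-N)/d^5$ stated in Eqn.~(\ref{eq:NFL_U}), with equality whenever the states are linearly independent.

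The main obstacle I anticipate is not the algebra but the justification that averaging over the Haar completion $P'$ and the uniform phases is the legitimate notion of averaging over well-trained unitaries, i.e.\ that no learner consistent with the training data can systematically do better on average than this completion; this rests on the Haar-invariance argument that the image subspace $U(S^{\perp})$ is uniformly distributed once $U$ is marginalized, rendering the averaged risk independent of the completion rule. A secondary point to handle carefully is the boundary regime $N\to d$, where the complement vanishes and the identity $\mathbb{E}[|\Tr(P')|^2]=1$ degenerates, though this does not affect the stated asymptotic ($\Omega$) bound.
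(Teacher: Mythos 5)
Your proof is correct, and it reaches the paper's exact intermediate quantities by a genuinely more compact route. The paper proves this theorem by inserting the resolution of identity $\sum_j\ket{\bm{e}_j}\bra{\bm{e}_j}$ around $V_{\RQ}^{\dagger}OV_{\RQ}$, splitting $\Tr(U^{\dagger}OUV_{\RQ}^{\dagger}OV_{\RQ})$ into four blocks $T_1,\dots,T_4$ evaluated separately with second-moment Haar lemmas, under an extra structural assumption (its Assumption~1) that the completion block of $W=U^{\dagger}V_{\RQ}$ factorizes as $Y=Y_UY_V$ with two \emph{independent} Haar factors. You instead reparametrize $V_{\RQ}=UP$ and twirl over $U$ in one shot, collapsing everything to $c_1 d+c_2\,\mathbb{E}\,|\Tr(P)|^2$; your values $\mathbb{E}\,|\Tr(P)|^2=N^2+1$ (aligned) and $N+1$ (non-aligned orthogonal) reproduce the paper's term-by-term totals exactly, e.g.\ $\mathbb{E}\Tr(\cdot)=\big(dN\Tr(O^2)+(d^2-N-1)\Tr(O)^2\big)/\big(d(d^2-1)\big)$ in the non-aligned case, and your phase-rigidity identity $\braket{\bm{\psi}_k|\bm{\psi}_j}=e^{i(\bm{\alpha}_j-\bm{\alpha}_k)}\braket{\bm{\psi}_k|\bm{\psi}_j}$ is literally the paper's Eqn.~(\ref{eq:phase_equality_text}), as is the QR/projector observation that aligned phases make $P$ act as a global phase on the training span. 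What your route buys: it makes transparent that all data- and phase-dependence funnels into the single scalar $|\Tr(U^{\dagger}V_{\RQ})|^2$, and it needs only independence of $P$ from $U$ plus a Haar completion on the complement, strictly weaker than the paper's $Y=Y_UY_V$ decomposition. What the paper's route buys: its blockwise bookkeeping extends mechanically to the linearly dependent regime $d<N\le d^2$, which your argument does not cover --- there the paper bounds $\mathbb{E}_{\bm{\alpha}}\max_{\mathcal{I}}\sum_{j,k\in\mathcal{I}}e^{i(\bm{\alpha}_j-\bm{\alpha}_k)}$ by $\min\{N,d^2\}$ over linearly independent subsets $\mathcal{I}$, producing the $(d^2-N)$ branch for $N>d$, whereas your derivation handles $N\le d$ (including, as you note, the degenerate $N=d$ complement). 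Two minor caveats: your claim that non-orthogonality forces $\bm{\alpha}\varpropto\mathbf{1}_N$ requires \emph{pairwise} non-orthogonality (with some orthogonal pairs, phases equalize only within connected components --- the same implicit caveat as in the paper), and your final denominator $d^2(d+1)(d^2-1)$ differs from the paper's $d^4(d+1)$, but both are absorbed by the $\Omega(\cdot/d^5)$ in the statement.
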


    The achieved results in Theorem~\ref{thm:NFL_U_maintext} indicate the critical importance of capturing the inter-state relative phase over various input states, i.e., $\bm{\alpha}_k=\bm{\alpha}_j$ for all $j,k\in [N]$, in determining the generalization error. In particular, when the inter-state relative phases $\bm{\alpha}_j$ vary among input states, a training dataset of size $d^2$ is necessary to achieve zero prediction error, mirroring the same requirement in CLC-LPs. In contrast, if the inter-state relative phases over various input states are identical— a condition we term 'phase alignment' — then only $d$ examples are sufficient to attain zero prediction error. This represents a significant reduction in sample complexity from $d^2$ to $d$, offering an advantage of Re-LPs over CLC-LPs.

    We now demystify when ReQu-LPs could enable the condition of phase alignment. Particularly, for any training data $(\ket{\bm{\psi}_i}, U\ket{\bm{\psi}_i})$ and $(\ket{\bm{\psi}_j}, U\ket{\bm{\psi}_j})$ with $i\ne j$, we have
	\begin{align}\label{eq:phase_equality_text}
	\braket{\bm{\psi}_k | \bm{\psi}_j} = &	\bra{\bm{\psi}_k}U^{\dagger}U\ket{\bm{\psi}_j}
	\nonumber \\
	= & e^{i(\bm{\alpha}_j-\bm{\alpha}_k)}  \bra{\bm{\psi}_k}V_{\RQ}^{\dagger} V_{\RQ}\ket{\bm{\psi}_j} 
	\nonumber \\
	= & e^{i(\bm{\alpha}_j-\bm{\alpha}_k)} \braket{\bm{\psi}_k | \bm{\psi}_j}
	\end{align}
	where the second equality exploits the assumption of perfect training $U\ket{\bm{\psi}_j}= e^{i\bm{\alpha}_j}V_{\RQ} \ket{\bm{\psi}_j}$. Eqn.~(\ref{eq:phase_equality_text}) implies that $\bm{\alpha}_{k}=\bm{\alpha}_j$ holds for all $k,j\in [N]$ if $\braket{\bm{\psi}_k|\bm{\psi}_j}\ne 0$ or equivalently the input states are not orthogonal. This indicates that non-orthogonal training states warrant the condition enabling phase alignment, which leads to improved generalization performance. Moreover, the non-orthogonal states can be obtained by randomly sampling Haar states in $\mathbb{S}\mathbb{U}(2^n)$. Remarkably, while arbitrary two Haar states have been shown nearly orthogonal in the Hilbert space of a large quantum system, namely the trace distance between them is exponentially small with the number of qubits $n$, the probability of arbitrary two Haar states in $\mathbb{S}\mathbb{U}(2^n)$ being strictly orthogonal is zero.
    
    \subsection{NFL theorem for quantum learning protocols}\label{subsec:QNFL}
    We now elucidate the NFL theorem for Quantum Learning Protocols (Qu-LPs) with access to $U^{\dagger}$.  
    Under the learning protocol detailed in Section~\ref{subsec:QLP}, we prove the following quantum NFL theorem, where
    the formal statement and proof are deferred to Appendix~F.

    \begin{theorem}
		[NFL for Qu-LPs with access to $U^{\dagger}$] 
        Let $O$ be an arbitrary observable. For the learning task defined in Eqn.~(\ref{eq:learning_model}), when the access to the inverse of target unitary $U^{\dagger}$ is allowed and the perfect training of Eqn.~(\ref{eq:pt_co}) is achieved, the average risk function $R_U(V_{\Q})$ defined in Eqn.~(\ref{eq:risk_function}) over all $n$-qubit unitaries $U$ and training sets $\mathcal{D}_{\Q}$ yields
		\begin{align}
		& \mathbb{E}_{U,\mathcal{D}_{\Q}} R_U(V_{\Q}) \ge \Omega \bigg(\frac{(d-N)(d\Tr(O^2)-\Tr(O)^2)}{d^{4}}   
        \nonumber \\
        & + \frac{N^2}{d^4} \sum_{k\ne j}^d O_{kj}^2 \cdot \mathbbm{1}(\bm{\beta} \not\varpropto \mathbf{1}_N) \bigg) \label{eq:NFL_U_dagger}
		\end{align}
		where the equality holds for linearly independent states, $V_{\Q}$ refers to the learned unitary on the training dataset $\mathcal{D}_{\Q}$ of size $N$, $d=2^n$ is the dimension of $n$-qubit quantum system, $\bm{\beta}=(\bm{\beta}_1, \cdots, \bm{\beta}_N)$ and $\mathbf{1}_N$ refers to the all-ones vector of $N$-dimension, $\mathbbm{1}(\bm{\beta}\varpropto \mathbf{1}_N)$ means that there exists a constant $c$ such that the phase vector $\bm{\beta}=c\mathbf{1}_N$, i.e., $\bm{\beta}_k=\bm{\beta}_j$ for any $j,k\in [N]$. 
		\label{thm:NFL_U_dagger_maintext}
	\end{theorem}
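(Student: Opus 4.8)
\textbf{Proof proposal for Theorem~\ref{thm:NFL_U_dagger_maintext}.}

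The plan is to mirror the structure used for ReQu-LPs (Theorem~\ref{thm:NFL_U_maintext}) but to track carefully how the reversed order of multiplication in the perfect-training condition of Eqn.~(\ref{eq:pt_co_dagger}) changes the analysis. By Lemma~\ref{lem:upper_to_lower}, it suffices to upper bound the average $\mathbb{E}_{U,\mathcal{D}_{\Q}}\Tr(U^{\dagger}OUV_{\Q}^{\dagger}OV_{\Q})$, since the lower bound on the risk then follows from the $2\Tr(O^2)-2\Tr(\cdot)$ form and the fixed prefactor $1/d(d+1)$. First I would use the perfect-training constraint $U^{\dagger}\ket{\bm{\psi}_j}=e^{i\bm{\beta}_j}V_{\Q}^{\dagger}\ket{\bm{\psi}_j}$ to pin down the action of $V_{\Q}^{\dagger}U$ (equivalently $U^{\dagger}V_{\Q}$, up to phases) on the $N$-dimensional subspace spanned by the training states. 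For linearly independent training states this fixes $V_{\Q}$ on that subspace as $V_{\Q}=U$ composed with the diagonal phase operator $\diag(e^{-i\bm{\beta}_1},\dots)$ in the (non-orthonormal) training basis, leaving $V_{\Q}$ free on the orthogonal complement to be averaged against the Haar measure.

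The second step is to split the trace into the contribution from the $N$-dimensional training subspace, where $V_{\Q}$ is determined, and the complementary $(d-N)$-dimensional subspace, where $V_{\Q}$ remains Haar-random. On the complement, I expect the Haar averaging to reproduce, up to the dimensional counting, the same $(d-N)(d\Tr(O^2)-\Tr(O)^2)/d^{4}$ structure that appears as the leading term: this is the part that survives regardless of phases. The decisive novelty lies on the training subspace. Because the order of $U^{\dagger}$ and $V$ is swapped relative to ReQu-LPs, the relevant object is not $\braket{\bm{\psi}_k|\bm{\psi}_j}$ sandwiching $V^{\dagger}_{\Q}V_{\Q}=\mathbb{I}$ (which forced phase alignment and killed the off-diagonal contributions), but rather terms of the form $\bra{\bm{\psi}_k}O\ket{\bm{\psi}_j}$ weighted by $e^{i(\bm{\beta}_j-\bm{\beta}_k)}$. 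When the $\bm{\beta}_j$ are all equal, these phases cancel and the observable's off-diagonal matrix elements $O_{kj}$ with $k\ne j$ recombine coherently and are effectively absorbed, recovering the clean $(d-N)$-type bound independent of the non-diagonal structure of $O$. When the $\bm{\beta}_j$ differ, the residual phases prevent this cancellation and leave behind an additive penalty proportional to $\sum_{k\ne j}^d O_{kj}^2$, which is exactly the $N^2\sum_{k\ne j}O_{kj}^2/d^4$ term gated by $\mathbbm{1}(\bm{\beta}\not\varpropto\mathbf{1}_N)$.

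The third step is to carry out the Haar integral over the free part of $V_{\Q}$ using first- and second-moment (Weingarten) formulas, being careful that the training states are linearly independent but not orthonormal, so one must either Gram-Schmidt orthogonalize and track the resulting overlap factors, or work directly with the projector onto the training subspace. The main obstacle I anticipate is precisely the bookkeeping on the training subspace: isolating how $\sum_{k\ne j}O_{kj}^2$ emerges from $\Tr(U^{\dagger}OUV_{\Q}^{\dagger}OV_{\Q})$ under the reversed-order constraint, and verifying that this off-diagonal penalty appears only in the non-aligned phase case. Concretely, I would expand the trace in the eigenbasis of $O$ so that $O_{kj}$ are literal matrix entries, substitute $U^{\dagger}\ket{\bm{\psi}_j}=e^{i\bm{\beta}_j}V_{\Q}^{\dagger}\ket{\bm{\psi}_j}$, and average over the phases $\bm{\beta}_j$ treated as uniform and independent (as stipulated after Eqn.~(\ref{eq:pt_co_dagger})); the cross terms with $k\ne j$ vanish under phase averaging unless $\bm{\beta}_k=\bm{\beta}_j$, and tracking which survive yields the indicator-gated correction. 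The final step is to assemble the complement and subspace contributions, confirm the bound is saturated (equality) for linearly independent states, and specialize to diagonal $O$ — where $\sum_{k\ne j}O_{kj}^2=0$ — to recover the phase-independent $(d-N)(d\Tr(O^2)-\Tr(O)^2)/d^4$ rate claimed in Table~\ref{table:NFL_bound}.
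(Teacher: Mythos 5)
Your overall route coincides with the paper's: reduce via Lemma~\ref{lem:upper_to_lower} to upper bounding $\mathbb{E}\Tr(U^{\dagger}OUV_{\Q}^{\dagger}OV_{\Q})$, use the perfect-training constraint to put $W:=UV_{\Q}^{\dagger}$ into the block form of Lemma~\ref{lem:U_dagger_V} and Eqn.~(\ref{eq:Q_perf_tr_mat}) (phases $e^{i\bm{\beta}_j}$ on the training subspace, a Haar-random $Y$ on the complement), average the phases to gate the off-diagonal contribution, and integrate over $Y$ to recover the $(d-N)$ term. You also correctly isolated the decisive structural difference from ReQu-LPs: because $V_{\Q}=W^{\dagger}U$, cyclicity gives $\Tr(U^{\dagger}OUV_{\Q}^{\dagger}OV_{\Q})=\Tr(OWOW^{\dagger})$, so the \emph{unconjugated} matrix elements of $O$ survive the average; this is exactly why the non-diagonal structure of $O$ enters. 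Note that once you see this collapse, the $U$-integration disappears entirely and only the first-moment integral over $Y$ (Property~\ref{prop:Tr(WW)}) is needed, so your planned second-moment Weingarten calculus over the free part of $V_{\Q}$, and the Gram--Schmidt bookkeeping attached to it, are superfluous.

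There is, however, one step that fails as written and one omission that matters. First, you propose to ``expand the trace in the eigenbasis of $O$ so that $O_{kj}$ are literal matrix entries''; in the eigenbasis of $O$ every off-diagonal entry vanishes, $\sum_{k\ne j}O_{kj}^2\equiv 0$, so your plan executed literally can never produce the penalty term and would instead ``prove'' the diagonal-observable bound for all $O$. The entries that actually appear after substituting Eqn.~(\ref{eq:pt_co_dagger}) are $\bra{\bm{\psi}_k}O\ket{\bm{\psi}_j}$, i.e.\ matrix elements of $O$ in the frame in which $W$ is block-diagonal, and the theorem's fixed-basis quantity $\sum_{k\ne j}^{d}O_{kj}^2$ emerges only after a second averaging step you omit: under $\mathbb{E}_{\mathcal{D}_{\Q}}$ the diagonal positions carrying the phases $e^{i\bm{\beta}_j}$ are uniformly distributed among the $d$ diagonal slots of $W$, and averaging over these locations converts the block quantities into $\mathbb{E}_{\mathcal{D}}\Tr(O_{\mathcal{D}\mathcal{D}}^2)=N\Tr(O^2)/d$, $\mathbb{E}_{\mathcal{D}}\sum_{k\ne j\in\mathcal{D}}O_{kj}^2=\frac{N(N-1)}{d(d-1)}\sum_{k\ne j}^{d}O_{kj}^2$, and $\mathbb{E}_{\mathcal{D}}\Tr(O_{\mathcal{D}_c\mathcal{D}_c})=(d-N)\Tr(O)/d$, which assemble into Eqn.~(\ref{eq:NFL_U_dagger}); without it your bound would depend on the particular training states rather than on the fixed entries of $O$. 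A minor related point: your causal phrasing is inverted --- when the $\bm{\beta}_j$ disagree, the phase average \emph{annihilates} the off-diagonal contributions to $\Tr(OWOW^{\dagger})$, lowering the attainable trace and thereby raising the risk; the penalty is the loss of those coherent terms, not a residue they leave behind. Your conclusions (penalty gated by $\mathbbm{1}(\bm{\beta}\not\varpropto\mathbf{1}_N)$, vanishing for diagonal $O$, equality for linearly independent states) are nonetheless the correct ones.
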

    The results established in Theorem~\ref{thm:NFL_U_dagger_maintext} reveal that the learning performance of Qu-LPs depends on the diagonality of observable $O$ and the condition of phase alignment simultaneously. Specifically, when the observable $O$ is diagonal, the sample complexity for achieving zero prediction error always yields the scaling of $d=2^n$, regardless of whether the condition of phase alignment $\mathbbm{1}(\bm{\beta}\varpropto \mathbf{1}_N)$ is met. Moreover, for the non-diagonal observable, the average risk increases with the magnitude of the square sum of the non-diagonal elements of the observable $O$, particularly when the condition of phase alignment $\mathbbm{1}(\bm{\beta}\varpropto \mathbf{1}_N)$ is not satisfied. Consequently, a larger training dataset is required for the non-diagonal observable to achieve zero prediction error.

    \noindent \textbf{Remark.} The criteria for achieving phase alignment in Qu-LPs are identical to those in ReQu-LPs with access to $U$, specifically requiring the input states to be non-orthogonal.

    \subsection{Separation in various learning protocols}\label{subsec:sep_theoretic}
    In this subsection, we combine the established NFL theorems for various learning protocols in Theorems~\ref{thm:NFL_Ob_maintext}-\ref{thm:NFL_U_dagger_maintext} to comprehend the separable ability between various learning protocols.

    \smallskip
    
    \noindent \textbf{Separation between CLC-LPs and quantum-related learning protocols (ReQu-LPs and Qu-LPs).} When the training states are linearly independent and non-orthogonal such that the phase alignment $\mathbbm{1}(\bm{\alpha}\varpropto \mathbf{1}_N)$ or $\mathbbm{1}(\bm{\beta}\varpropto \mathbf{1}_N)$ is satisfied, the sample complexity required to achieving the zero prediction error scales with $d^2$ for CLC-LPs (Theorem~\ref{thm:NFL_Ob_maintext}) and $d$ for ReQu-LPs (Theorem~\ref{thm:NFL_U_maintext}) and Qu-LPs (Theorem~\ref{thm:NFL_U_dagger_maintext}). This suggests a quadratic reduction from CLC-LPs to ReQu-LPs as well as Qu-LPs. On the other hand, when the training states are orthogonal, the achieved lower bound for the ReQu-LPs is the same as that achieved in CLC-LPs, indicating the vanishing separation between the two learning protocols. 

    \smallskip

    \noindent \textbf{Separation between ReQu-LPs and Qu-LPs.}
    When the training states are orthogonal, compared to the achieved results for ReQu-LPs in Theorem~\ref{thm:NFL_U_maintext}, Qu-LPs with access to the inverse of target unitary $U^{\dagger}$ could still reduce the sample complexity from $d^2$ to $d$ for the diagonal observable $O$, regardless of the orthogonality of training states (or phase alignment). This is because the term $\sum_{k\ne j}^d O_{kj}^2$ related to the indicator of phase non-alignment $\mathbbm{1}(\bm{\beta}\not \varpropto \mathbf{1}_N)$ in Eqn.~(\ref{eq:NFL_U_dagger}) is always zero for the diagonal observable $O$. This observation provides another piece of evidence for achieving quantum advantages of employing the transformation of the target quantum system, as demonstrated by recent studies from the perspective of query complexity \cite{salmon2023provable, king2024exponential}. 
    Moreover, when the observable is non-diagonal, the separation in terms of sample complexity between the quantum learning protocols with access to $U$ (ReQu-LPs) and $U^{\dagger}$ (Qu-LPs) will diminish as the magnitude of the non-diagonal elements of observable $O$ increases.

    \smallskip

     \begin{figure*}[htbp]
		\centering
		\includegraphics[width=0.98\textwidth]{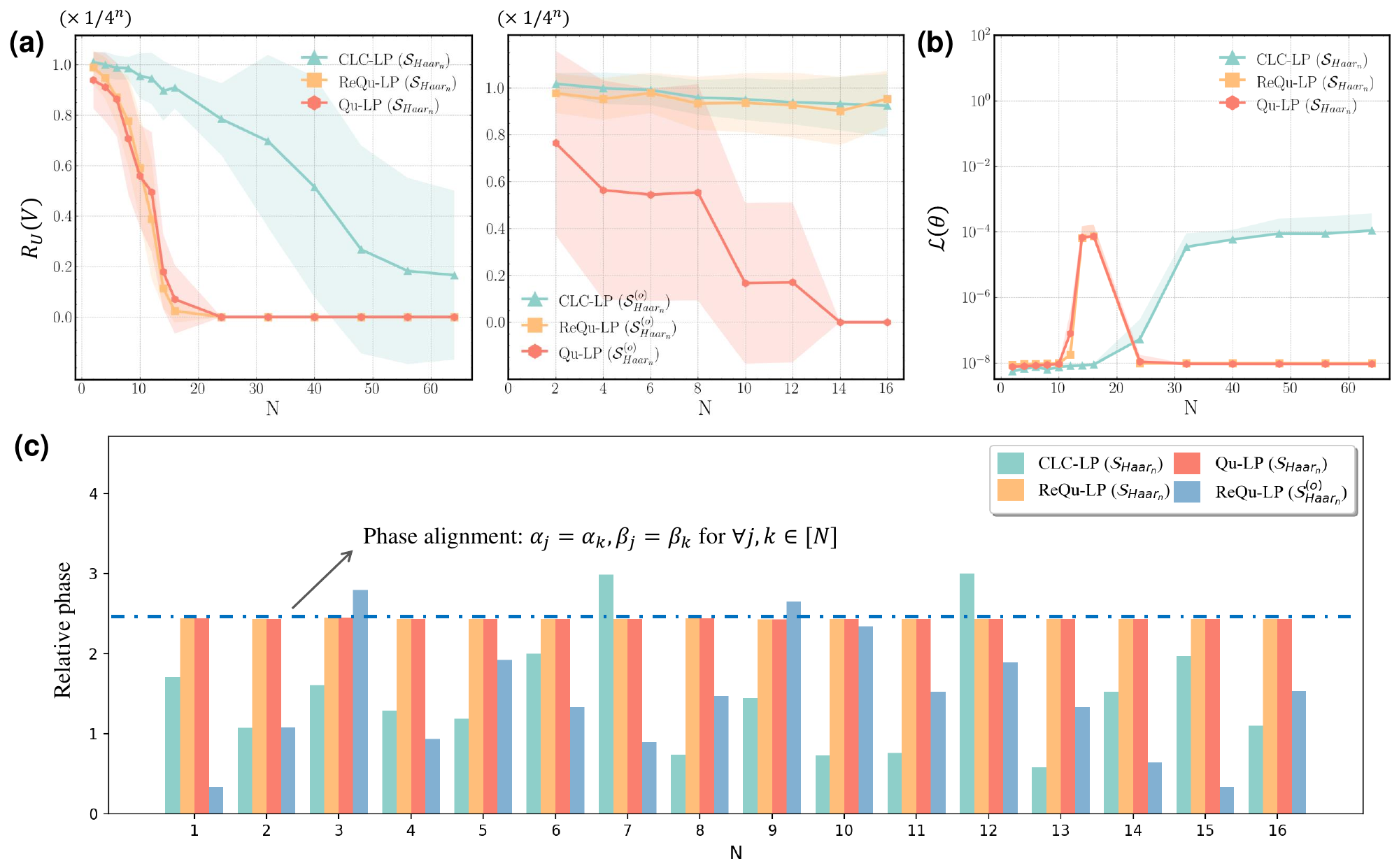}
		\caption{\small{\textbf{Numerical results for quantum dynamics learning.  
				} \textbf{(a)} The averaged prediction error with varying training dataset size $N$ under various learning protocols and different types of input states when the number of qubits $n=4$. \textbf{(b)} The averaged training error with varying training dataset size $N$ under various LPs when $n=4$. \textbf{(c)} The magnitude of the inter-state relative phase defined in Eqn.~(\ref{eq:pt_co}) and Eqn.~(\ref{eq:pt_co_dagger}) for various LPs.
                The labels `CLC-LP', `ReQu-LP', `Qu-LP' refer to the classical learning protocol, restricted quantum learning protocol, and quantum learning protocol with access to $U^{\dagger}$. $R_U(V)$ and $\mathcal{L(\bm{\theta})}$ refers to the prediction error and training error. The label `$(\times 1/4^n)$’ means that the plotted prediction error is normalized by a multiplier factor $1/4^n$.
     The label `$\mathcal{S}_{\haar_n}^{(o)}$' and `$\mathcal{S}_{\haar_n}$' refer to the input states being the orthogonal $n$-qubit Haar states and the $n$-qubit Haar states.}}
		\label{fig:Ob_basis_0}
	\end{figure*}
 
    \noindent \textbf{The impact of phase alignment on the separation.} As discussed in Section~\ref{subsec:RQNFL}, there is an exact correspondence between the learning performance of ReQu-LPs and the phenomenon of phase alignment. Specifically, when the phase alignment $\mathbbm{1}(\bm{\alpha}\not \varpropto \mathbf{1}_N)$ does not happen, the learning performance of ReQu-LPs is reduced to the same as that of CLC-LPs. Furthermore, the occurrence of phase alignment is precluded for CLC-LPs, as they inherently cannot capture the inter-state relative phase between training states due to the measurement-first paradigm inherent in their learning process, wherein the information of global phase of output states is lost during the measurement process represented by $\Tr(U^{\dagger}OU\ket{\bm{\psi}_j}\bra{\bm{\psi}_j})$.
    
    \begin{figure}  
		\centering
		\includegraphics[width=0.48\textwidth]{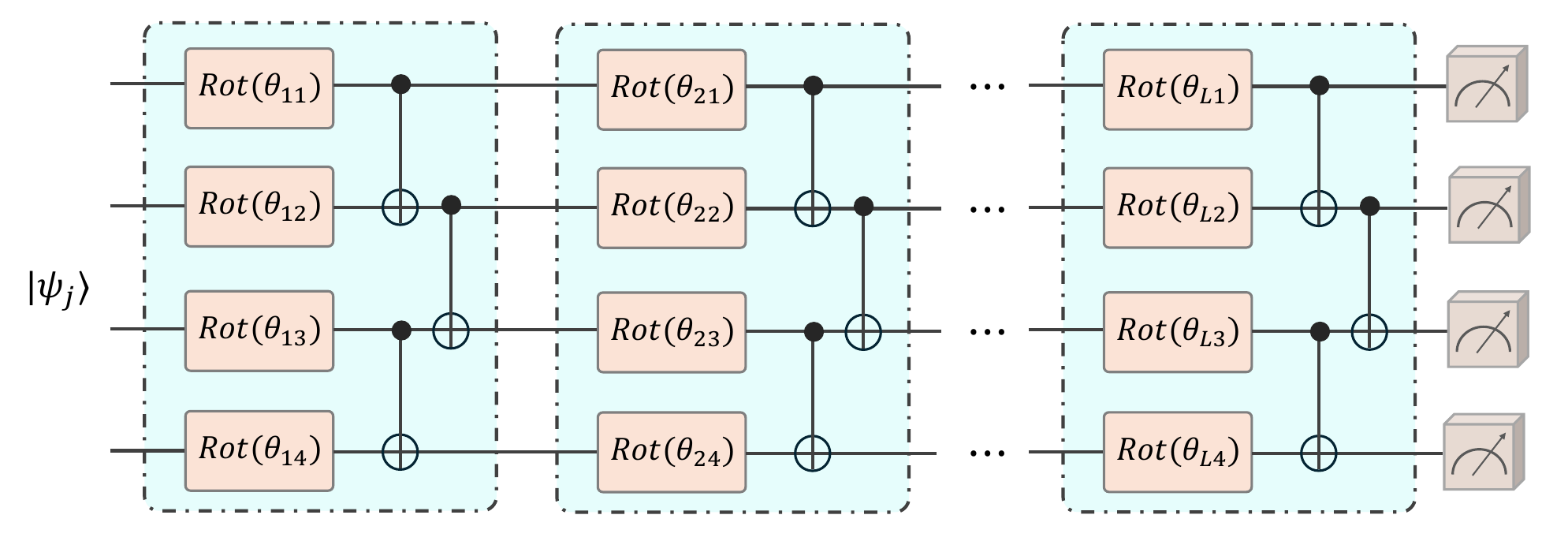}
		\caption{\small{\textbf{Structure of hardware efficient ansatz (HEA)}. $\Rot(\bm{\theta}_{j1})$ represents the general rotation gate with three rotation angles $\bm{\theta}_x, \bm{\theta}_y, \bm{\theta}_z$. }}
		\label{fig:CircStruc}
	\end{figure}
	
	\section{Experiments} \label{sec:numerics}
	We conduct numerical simulations to verify the separation in learning performance between various learning protocols discussed above, showing the significance of capturing the phase information during the training process. All numerical simulations are implemented by Pennylane frameworks \cite{bergholm2018pennylane}. 

    \begin{figure*}[htbp]
		\centering
		\includegraphics[width=0.98\textwidth]{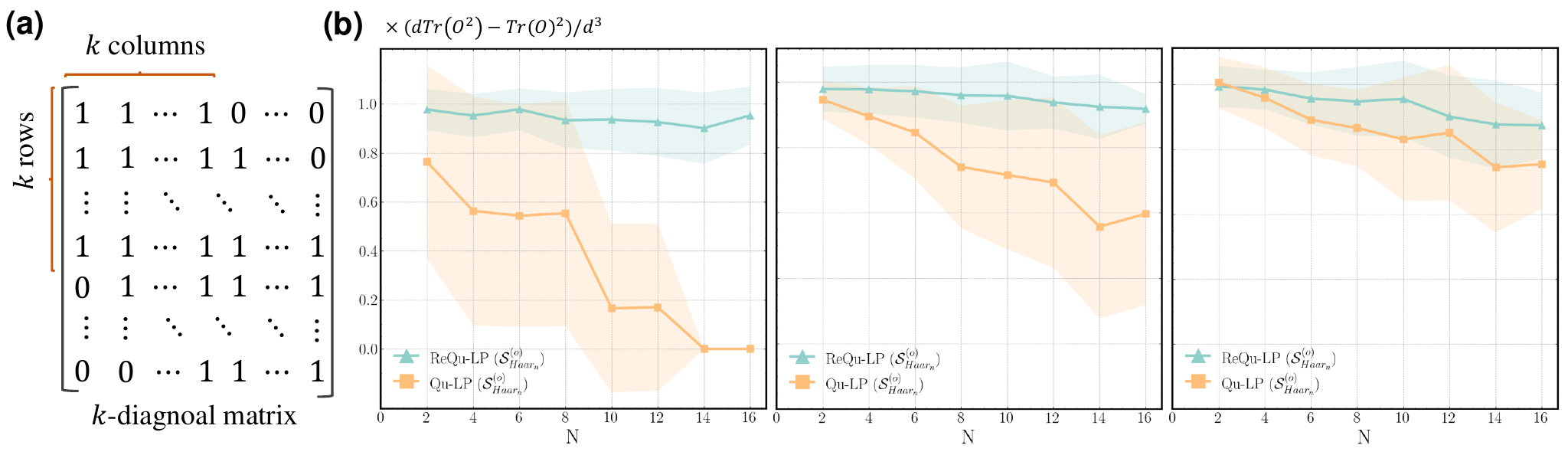}
		\caption{\small{\textbf{Numerical results for non-diagonal matrix.  
				} \textbf{(a)} The $k$-diagonal matrix with $k$ referring to the diagonal offset. \textbf{(b)} The averaged prediction error with varying training dataset size $N$ under ReQu-LPs and Qu-LPs with access to $U^{\dagger}$ when the number of qubits $n=4$. The panels from left to right correspond respectively to the $k$-diagonal observables of rank $5$, with $k$ values of $1$, $2$, and $4$. The notations are identical to those in Fig.~\ref{fig:Ob_basis_0}.}}
		\label{fig:non_diag_ob}
	\end{figure*}
 
	\subsection{Datasets}
    The target unitary is an $n$-qubit Haar random unitary $U$ with $n\in \{4,5\}$. The training datasets corresponding to the CLC-LPs, ReQu-LPs, and Qu-LPs take the form of Eqn.~(\ref{eq:data_c}), Eqn.~(\ref{eq:data_RQ}), and Eqn.~(\ref{eq:data_Q}). Two types of states $\ket{\bm{\psi}_j}$ sampling from different distributions are considered. The first is the class of orthogonal Haar random $n$-qubit states $\mathcal{S}_{\haar_n}^{(o)}=\{(\ket{\bm{\psi}_1}, \cdots, \ket{\bm{\psi}_N}): \braket{\bm{\psi}_i|\bm{\psi}_j}=\sigma_{ij}\}$. In this case, one can first sample a Haar unitary and random uniformly choose $N$ column vectors of this unitary as the input states. This leads to the training dataset of maximal size less than $2^n$.  The second is the class of  $n$-qubit Haar random states $\mathcal{S}_{\haar_n}$. The training dataset size takes $N \in \{1,2,\cdots,16\}$ for $\mathcal{S}_{\haar_n}^{(o)}$ and $N\in \{1,2, \cdots, 64\}$ for $\mathcal{S}_{\haar_n}$. As discussed in Section~\ref{subsec:RQNFL}, the probability of the Haar random states being strictly orthogonal is almost zero. 
 
    \subsection{Model construction and evaluation metric}
    \textit{Circuit optimization:} We take the same hardware efficient ansatz (HEA) $V(\bm{\theta})$ as the parameterized quantum unitary towards achieving the perfect training in various learning protocols, as shown in Fig.~\ref{fig:CircStruc}. The HEA consists of single rotation gates and CNOT gates. The number of layers is set as $L=30$ for $n=4$ and $L=42$ for $n=5$. The ADAM optimizer is employed to update the parameters in the HEA $V(\bm{\theta})$ with learning rate $\eta=0.01$. The maximal iteration step is set as $T=1000$ for $n=4$ and $T=500$ for $n=5$.
    
    \textit{Evaluation metric:} To show the separation between the classical learning protocols and quantum learning protocols, we
	record the averaged prediction error and averaged training error defined in Eqn.~(\ref{eq:risk_function}) by learning four different unitaries $U$ with 10 random training datasets. Moreover, the inter-state relative phases between the output states of the target unitary and trained parameterized unitary are recorded to show the signification of employing the information of phase in learning protocols.

    \subsection{Experimental results} 
    \noindent \textbf{Separation between the CLC-LPs, ReQu-LPs, and Qu-LPs.} The simulation results for the observable $O$ being projective measurement $(\ket{\bm{0}}\bra{\bm{0}})^{\otimes n}$ are displayed in Fig.~(\ref{fig:Ob_basis_0}) (a)-(c). In particular, Fig.~\ref{fig:Ob_basis_0} (a) indicates that the Qu-LPs could achieve a significant reduction in the sample complexity for achieving the zero prediction error from $N=64$ to $N=16$, compared to CLC-LPs when the input states are sampled from Haar distribution. 
    Moreover, when the input states are orthogonal, the prediction performance of the ReQu-LPs with access to $U$ is the same as the CLC-LPs, while the Qu-LPs with access to the inverse of the target unitary $U^{\dagger}$ retains the same prediction performance as the case of using Haar input states, achieving quantum advantages over the other two learning protocols. In particular, when the training dataset size is $N=16$, the Qu-LPs with access to $U^{\dagger}$ reach zero prediction error, and the CLC-LPs and ReQu-LPs only have a small reduction of the prediction error than that of $N=1$ and get a prediction error higher than $0.8$. The same phenomenon also happens in the case of $n=5$, as shown in Table~\ref{tab:5-qubit}. These numerical results echo with the theoretical results established in Section~\ref{sec:NFL} for the projective measurement.  

    \smallskip

    \begin{table*}  
        \centering
        \renewcommand\arraystretch{1.2}
        \caption{Averaged prediction error normalized with a multiplier factor $1/4^n$ when the number of qubits $n=5$.}
        \begin{tabular}{l|cccc|cccc}
        \toprule[1pt]
             \multirow{2}*{\makecell[l]{Learning \\ Protocols}} & \multicolumn{4}{c}{$\mathcal{S}_{\haar_n}$} & \multicolumn{4}{c}{$\mathcal{S}_{\haar_n}^{(o)}$}\\
             \cline{2-5} \cline{6-9}
             & $N=4$ & $N=32$ & $N=64$ & $N=128$ & $N=4$ & $N=16$ & $N=24$ & $N=32$ \\
              \midrule
             Classical & 0.635 & 0.633 & 0.589 & \cellcolor{black!20}0 & 0.619 & 0.635 & 0.624 & 0.621 \\ 
             Quantum-$U$ & 0.632 & \cellcolor{black!20}0.236 & 0.093 & 0.061 & 0.632 & 0.624 & 0.590 & 0.609  \\ 
             Quantum-$U^{\dagger}$  & \cellcolor{black!20}0.628& 0.243 & \cellcolor{black!20}0.075 & 0.079 & \cellcolor{black!20}0.615 & \cellcolor{black!20}0.605 & \cellcolor{black!20}0.266 & \cellcolor{black!20}0.043  \\ 
        \bottomrule[1pt]
        \end{tabular}
        \label{tab:5-qubit}
    \end{table*}

    \begin{table*}  
        \centering
        \renewcommand\arraystretch{1.2}
        \caption{Averaged prediction error normalized with a multiplier factor $1/4^n$ with a various number of layers of PQC when the number of qubits $n=4$. The minimum prediction error under various hyper-parameter settings is highlighted in gray color.}
        \begin{tabular}{l|cccc|cccc|cccc}
        \toprule[1pt]
             \multirow{2}*{\makecell[l]{Learning \\ Protocols}} & \multicolumn{4}{c}{$L=5$} & \multicolumn{4}{c}{$L=10$} & \multicolumn{4}{c}{$L=20$} \\
             \cline{2-5} \cline{6-9} \cline{10-13}
             & $N=2$ & $N=16$ & $N=32$ & $N=64$  & $N=2$ & $N=16$ & $N=32$ & $N=64$ & $N=2$ & $N=16$ & $N=32$ & $N=64$ \\
              \midrule
             Classical & \cellcolor{black!20}0.907 & 0.835 & \cellcolor{black!20}0.465 & \cellcolor{black!20}0 & \cellcolor{black!20}0.955 & 0.933 & 0.658 & \cellcolor{black!20}0.148 & 1.027 & 0.886 & 0.514 & 0.134 \\ 
             Quantum-$U$ & 0.968 & 0.843 & 0.854 & 0.733 & 0.995 & \cellcolor{black!20}0.512 & 0.347 & 0.263 & 1.041 & \cellcolor{black!20}0.124 & \cellcolor{black!20}0.007 & \cellcolor{black!20}0.004 \\ 
             Quantum-$U^{\dagger}$  & 0.951 & \cellcolor{black!20}0.756 & 0.676 & 0.776 & 0.956 & 0.554 & \cellcolor{black!20}0.325 & 0.319 & \cellcolor{black!20}0.991 & 0.215 &  \cellcolor{black!20}0.007 & \cellcolor{black!20}0.004 \\ 
        \bottomrule[1pt]
        \end{tabular}
        \label{tab:varyin_layer}
    \end{table*}
    
    \noindent \textbf{The significance of phase alignment.} Fig.~\ref{fig:Ob_basis_0}(c) shows the consistency of the phase alignment and the learning performance shown in Fig.~\ref{fig:Ob_basis_0}(a), which highlights the significance of learning phase information in incurring the separation in sample complexity between CLC-LPs and ReQu-LPs. In particular, when the ReQu-LPs perform better with a smaller prediction error than the CLC-LPs for the Haar input states, the inter-state relative phase $\bm{\alpha}_j$ and $\bm{\beta}_j$ over various input states defined in Eqn.~(\ref{eq:pt_co}) and  Eqn.~(\ref{eq:pt_co_dagger}) are all equal, while the CLC-LPs fails to do this. On the other hand, when the ReQu-LPs can not enable the inter-state relative phase over orthogonal states to be equal, the average prediction error of such ReQu-LPs is reduced to the same trend as that of the CLC-LPs.

    \smallskip
    
    \noindent \textbf{The results for non-diagonal observables.} We conduct numerical simulations on the non-diagonal observable to show the impact of the magnitude of non-diagonal elements of the fixed observable $O$ on the prediction performance of Qu-LPs with access to $U^{\dagger}$. In particular, we consider the observable $O$ of rank $r$, where the first $r$-dimensional sub-matrix is $k$-diagonal matrix of elements $1$ and the other elements in $O$ are zero, as shown in Fig.~\ref{fig:non_diag_ob}(a). The rank is set as $r=5$ and the diagonal offset is set as $k\in \{1, 2, 4\}$. The setting of other hyper-parameters is the same as the case of $n=4$. The numerical results are shown in Fig.~\ref{fig:non_diag_ob}(b). In particular, the separation in terms of the prediction error between the ReQu-LPs and Qu-LPs with access to $U^{\dagger}$ is narrowed as the diagonal offset $k$, or equivalently the magnitude of non-diagonal elements, increases from the case of $k=1$ to the case of $k=4$. 

    \smallskip
    
    \noindent \textbf{The impact of the number of layers of PQCs.} 
    The number of layers in PQCs significantly influences the training error across different learning protocols. This relationship has been underscored by the studies in the context of over-parameterization theory of quantum neural networks \cite{larocca2021theory, wang2022symmetric, liu2023analytic}, indicating that PQCs must possess a parameter count of $4^n$ to ensure global convergence---a prerequisite for achieving perfect training---particularly in tasks of unitary learning with quantum learning protocols. To highlight the importance of the perfect training, we conduct numerical simulation on various amounts of layers for the case of $n=4$, where the number of layers is set as $L=\{5,10,20\}$. The input states are sampled from the Haar distribution and the other hyper-parameters are set as the same as the case of $n=4$. The numerical results are shown in Table~\ref{tab:varyin_layer}. In particular, when the PQCs are shallow with $L=5$, the CLC-LPs perform a smaller prediction error than Qu-LPs. Moreover, the CLC-LP achieves zero prediction error when $N=64$, while ReQu-LPs and Qu-LPs still retain a large prediction error over $0.73$. This is because the CLC-LPs require fewer parameters to reach the over-parameterization regime than the two quantum-related learning protocols such that when $L=5$ the former achieves perfect training regime, while the latter two learning protocols get stuck in local minima. Moreover, when the PQCs are deep enough with $L=20$ such that all learning protocols achieve perfect training, the ReQu-LPs and Qu-LPs perform better than CLC-LPs as discussed above, where achieving a small prediction error less than $0.14$ requires the training dataset size of $N=16$ for CLC-LPs and of $N=64$ for ReQu-LPs and Qu-LPs. 
	
	\medskip 
 
	\section{Conclusion} \label{sec:conclusion}
    In this article, we rigorously explored the universal learning performance of three distinct learning protocols for learning a target unitary under a fixed observable—namely, classical learning protocols (CLC-LPs), restricted quantum learning protocols (ReQu-LPs), and quantum learning protocols (Qu-LPs)—each employing quantum resources in ascending order. By establishing no-free-lunch (NFL) theorems tailored to these protocols, our theoretical results reveal a quadratic separation in terms of sample complexity between CLC-LPs, ReQu-LPs, and Qu-LPs. Our analysis suggests that the separation between CLC-LPs and the two quantum-related learning protocols originates from their differing capabilities in capturing the inter-state relative phase—a feature that ReQu-LPs and Qu-LPs possess when training states are non-orthogonal, but which CLC-LPs lack regardless of the training states. The conducted numerical simulations accord with the theoretical result. The results achieved in this study not only deepen our understanding of the capabilities of quantum learning protocols from both problem- and data-dependent perspectives but also elucidate the inherent quantum mechanisms that enable these capabilities. This insight provides practical guidance for designing advanced quantum learning protocols.

	\clearpage
	\newpage
	\appendix
	\onecolumngrid 
	
	\tableofcontents
	
	\medskip
	
	\noindent \textit{\textbf{Roadmap:}} Appendix~\ref{app_sec:notation} introduces the notations used throughout the whole paper. Some useful properties about the calculation of integration with respect to Haar states and Haar unitaries are provided in Appendix~\ref{app_sec:prelimilaries}. The remaining portions are devoted to theoretical results and mathematical proofs. Appendix~\ref{app_sec:central_lemma} provides a central lemma that reduces the problem of deriving a lower bound considered in No-Free-Lunch (NFL) theorems to an upper bounding problem. The proof of NFL theorems for classical learning protocols, restricted quantum learning protocols, and quantum learning protocols are given in Appendix~\ref{app_sec:NFL_SC},~\ref{app_sec:NFL_SQ}, and~\ref{app_sec:NFL_SQ_dagger}. 
	
	\section{Notation}\label{app_sec:notation}
	We unify the notations throughout the whole paper. The number of qubits and training data size are denoted by $n$ and $N$, respectively. The corresponding dimension of Hilbert space to the $n$-qubit system is denoted by $d=2^n$. Let $\mathcal{H}_d$ denote a $d$-dimensional Hilbert space. The notation $[N]$ refers to the set $\{1, 2, \cdots, N\}$. We denote $\|\cdot\|_1$ as the trace norm, i.e. the sum of the absolute singular values, and $\|\cdot\|_{\Fnorm}$ as the Frobenius norm. The cardinality of a set is denoted by $|\cdot|$. We use the standard bra-ket notation for pure quantum states. The training dataset is denoted by $\mathcal{D}$ with proper subscripts to distinguish various learning protocols. We interchangeably use the notation $\mathbbm{1}(x=y)$ and $\delta_{xy}$ to refer to the indicator function, which takes the value of $1$ if $x=y$ and $0$ otherwise. We denote $\mathbb{I}_N$ and $\mathbf{1}_N$ as the $N$-dimensional identity matrix and the $N$-dimensional all-ones vector, i.e., $\mathbf{1}_N=(1,1,\cdots,1)$, respectively.
	
	\section{Haar integration}\label{app_sec:prelimilaries}
	In this section, we present a set of properties that enable the analytical computation of integrals of polynomial functions over the unitary group and the quantum state with respect to the unique normalized Haar measure. For a more comprehensive discussion on this subject, we refer the readers to Refs.~\cite{collins2006integration,puchala2011symbolic,cerezo2021cost,zhao2022hamiltonian}.

	\begin{prop}[Lemma~1 of Ref.~\cite{cerezo2021cost}]
    \label{prop:Tr(WW)}
		Let $\{ W_y\}_{y\in Y} \subset \mathbb{U}(d)$ form a unitary $t$-design with $t\ge1$, and let $A, B: \mathcal{H}_d\to\mathcal{H}_d$ be arbitrary linear operators. Then
		\begin{equation}
			\frac{1}{|Y|}\sum_{y\in Y}\Tr[W_{y}AW_{y}^{\dagger}B]=\int_{\haar}\mathrm{d}W\Tr[WAW^{\dagger}B]=\frac{\Tr[A]\Tr[B]}{d}.
		\end{equation}
	\end{prop}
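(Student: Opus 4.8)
The plan is to prove the two equalities separately, treating the first as a definitional reduction and reserving the real work for the Haar integral. For the first equality, I would expand the trace in a fixed orthonormal basis,
\[
\Tr[WAW^{\dagger}B]=\sum_{i,j,k,l}W_{ij}A_{jk}(W^{\dagger})_{kl}B_{li}=\sum_{i,j,k,l}W_{ij}\,\overline{W_{lk}}\,A_{jk}B_{li},
\]
which exhibits the integrand as a homogeneous polynomial of degree one in the entries of $W$ and degree one in the entries of $\overline{W}$. A unitary $t$-design with $t\ge 1$ reproduces, by definition, the Haar average of every such balanced monomial of type at most $(1,1)$, so the finite average $\frac{1}{|Y|}\sum_{y}\Tr[W_yAW_y^{\dagger}B]$ coincides termwise with $\int_{\haar}\mathrm{d}W\,\Tr[WAW^{\dagger}B]$. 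This reduces the claim to evaluating the single Haar integral.

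For the integral I would use the twirl-and-Schur argument rather than a direct index computation. Introduce the linear map $\Phi(A):=\int_{\haar}\mathrm{d}W\,WAW^{\dagger}$. The left- and right-invariance of the Haar measure, $\mathrm{d}(VW)=\mathrm{d}(WV)=\mathrm{d}W$ for any fixed $V\in\mathbb{U}(d)$, gives after the change of variables $W\mapsto V^{\dagger}W$ that $V\Phi(A)V^{\dagger}=\int_{\haar}\mathrm{d}W\,(VW)A(VW)^{\dagger}=\Phi(A)$. Hence $\Phi(A)$ commutes with every unitary, and since the defining representation of $\mathbb{U}(d)$ on $\mathcal{H}_d$ is irreducible, Schur's lemma forces $\Phi(A)=c(A)\,\mathbb{I}$ for some scalar $c(A)$ depending linearly on $A$.

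To pin down the constant I would take the trace: on one hand $\Tr[\Phi(A)]=\int_{\haar}\mathrm{d}W\,\Tr[WAW^{\dagger}]=\Tr[A]$ by cyclicity and $W^{\dagger}W=\mathbb{I}$, while on the other hand $\Tr[c(A)\,\mathbb{I}]=c(A)\,d$, so $c(A)=\Tr[A]/d$. Contracting $\Phi(A)=\tfrac{\Tr[A]}{d}\mathbb{I}$ with $B$ then yields $\int_{\haar}\mathrm{d}W\,\Tr[WAW^{\dagger}B]=\Tr[\Phi(A)B]=\tfrac{\Tr[A]}{d}\Tr[B]$, which is exactly the asserted value.

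This is a standard first-moment (single-twirl) identity, so I do not expect a genuine obstacle; the only point requiring care is the degree bookkeeping in the reduction step, namely verifying that the integrand is balanced of type $(1,1)$ so that a $t$-design with $t\ge 1$—rather than a higher design—already suffices. As an alternative to the Schur argument, one could instead invoke the elementary first-moment Weingarten formula $\int_{\haar}\mathrm{d}W\,W_{ij}\overline{W_{lk}}=\tfrac{1}{d}\delta_{il}\delta_{jk}$ and substitute it into the expanded sum above, which reproduces $\tfrac{1}{d}\sum_{j,k}A_{jk}\delta_{jk}\sum_{i}B_{ii}=\tfrac{1}{d}\Tr[A]\Tr[B]$; I prefer the invariance argument since it avoids committing to a basis.
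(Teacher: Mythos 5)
Your proof is correct. Note that the paper itself offers no proof of this statement: it is imported verbatim as Lemma~1 of Ref.~\cite{cerezo2021cost}, so there is no in-paper argument to compare against. Your two steps are exactly the standard ones. The reduction from the $t$-design average to the Haar integral is sound, since $\Tr[WAW^{\dagger}B]$ is indeed a balanced polynomial of degree $(1,1)$ in the entries of $W$ and $\overline{W}$, which is precisely what a $1$-design reproduces. Your twirl-and-Schur argument is also complete: left invariance of the Haar measure gives $V\Phi(A)V^{\dagger}=\Phi(A)$ for all $V\in\mathbb{U}(d)$, irreducibility of the defining representation forces $\Phi(A)=c(A)\,\mathbb{I}$, and taking traces fixes $c(A)=\Tr[A]/d$, whence $\Tr[\Phi(A)B]=\Tr[A]\Tr[B]/d$. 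The cited reference establishes the identity via the entrywise first-moment formula $\int_{\haar}\mathrm{d}W\,W_{ij}\overline{W_{lk}}=\delta_{il}\delta_{jk}/d$, which is the alternative route you sketch at the end (your index bookkeeping there is also correct); the Schur route you prefer buys basis independence and makes transparent why only a $1$-design is needed, while the Weingarten route generalizes more mechanically to the higher moments used in Propositions~2 and~3 of the paper.
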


    \begin{prop}[Lemma~2 of Ref.~\cite{cerezo2021cost}]
    \label{prop:Tr(WWWW)}
		Let $\{ W_y\}_{y\in Y} \subset \mathbb{U}(d)$ form a unitary $t$-design with $t\ge 2$, and let $A, B, C, D: \mathcal{H}_d\to\mathcal{H}_d$ be arbitrary linear operators. Then
		\begin{align}
			\frac{1}{|Y|}\sum_{y\in Y}\Tr[W_{y}AW_{y}^{\dagger}BW_{y}CW_{y}^{\dagger}D]=&\int_{\haar}\mathrm{d}W \Tr[WAW^{\dagger}BWCW^{\dagger}D] \nonumber \\
			=&\frac{1}{d^2-1}\left(\Tr[AC]\Tr[B]\Tr[D] + \Tr[A]\Tr[C]\Tr[BD] \right) \nonumber \\
			& - \frac{1}{d(d^2-1)}\left(\Tr[A]\Tr[B]\Tr[C]\Tr[D] + \Tr[AC]\Tr[BD] \right).
		\end{align}
	\end{prop}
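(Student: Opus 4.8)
The statement has two parts: the $t$-design average equals the Haar integral, and the Haar integral equals the stated combination of traces. The first part is immediate. The integrand $\Tr[WAW^\dagger BWCW^\dagger D]$ is a polynomial of degree $2$ in the entries of $W$ and degree $2$ in the entries of $W^*$, i.e.\ a balanced monomial of bidegree $(2,2)$; by definition a unitary $t$-design with $t\ge 2$ reproduces the Haar average of every such polynomial exactly, so the leftmost and middle expressions coincide. It therefore suffices to evaluate $I:=\int_{\haar}\mathrm{d}W\,\Tr[WAW^\dagger BWCW^\dagger D]$.

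The cleanest route is through the second-moment twirl and Schur--Weyl duality. Define the twirl $\mathcal{T}(X)=\int_{\haar}\mathrm{d}W\,(W\otimes W)\,X\,(W^\dagger\otimes W^\dagger)$ on $\mathcal{H}_d\otimes\mathcal{H}_d$. Since $\mathcal{T}(X)$ commutes with every $W\otimes W$, Schur--Weyl duality forces it into the commutant of $\mathbb{U}(d)^{\otimes 2}$, which is spanned by the identity $\mathbb{I}$ and the swap $\mathbb{S}$. Writing $\mathcal{T}(X)=a\mathbb{I}+b\mathbb{S}$ and taking traces against $\mathbb{I}$ and against $\mathbb{S}$ (using $\Tr[\mathbb{I}]=d^2$, $\Tr[\mathbb{S}]=d$, $\mathbb{S}^2=\mathbb{I}$, and the fact that $\mathbb{S}$ commutes with $W\otimes W$ so that $\Tr[\mathbb{S}\,\mathcal{T}(X)]=\Tr[\mathbb{S}X]$) gives a $2\times 2$ linear system whose solution is $a=(d\Tr[X]-\Tr[\mathbb{S}X])/(d(d^2-1))$ and $b=(d\Tr[\mathbb{S}X]-\Tr[X])/(d(d^2-1))$.

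The remaining step is to express the target trace as a doubled-space trace so that the twirl applies with $X=A\otimes C$. Using the swap identities $\Tr[\mathbb{S}(P\otimes Q)]=\Tr[PQ]$ and $\Tr[P\otimes Q]=\Tr[P]\Tr[Q]$, one checks by matching index contractions that
\begin{equation}
\Tr[WAW^\dagger BWCW^\dagger D]=\Tr_{12}\big[(W\otimes W)(A\otimes C)(W^\dagger\otimes W^\dagger)\,\mathbb{S}\,(D\otimes B)\big].
\end{equation}
Integrating over $W$ and inserting $\mathcal{T}(A\otimes C)=a\mathbb{I}+b\mathbb{S}$ with $\Tr[X]=\Tr[A]\Tr[C]$ and $\Tr[\mathbb{S}X]=\Tr[AC]$ yields $I=a\,\Tr_{12}[\mathbb{S}(D\otimes B)]+b\,\Tr_{12}[D\otimes B]=a\Tr[BD]+b\Tr[B]\Tr[D]$. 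Substituting the values of $a,b$ and regrouping the four resulting products reproduces the claimed right-hand side.

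I expect the main obstacle to be the bookkeeping in the doubled-space rewriting: one must pin down the correct fixed operator ($\mathbb{S}(D\otimes B)$ rather than, say, $\mathbb{S}(B\otimes D)$) by carefully tracking which output index of each $W^\dagger$ is fed through $B$ or $D$ into the next $W$. As an independent check, or as an alternative self-contained proof, I would instead expand the trace into matrix entries and apply the Collins--Śniady Weingarten formula for the second moment, summing over $(\sigma,\tau)\in S_2\times S_2$ with $\mathrm{Wg}(e,d)=1/(d^2-1)$ and $\mathrm{Wg}((12),d)=-1/(d(d^2-1))$; the four pairings contract to exactly the four trace products above, cross-checking the index conventions.
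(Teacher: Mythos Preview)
Your argument is correct. The doubled-space identity $\Tr[WAW^\dagger BWCW^\dagger D]=\Tr_{12}\big[(WAW^\dagger\otimes WCW^\dagger)\,\mathbb{S}\,(D\otimes B)\big]$ checks out index by index, the twirl formula $\mathcal{T}(X)=a\mathbb{I}+b\mathbb{S}$ with the stated $a,b$ is the standard Schur--Weyl computation, and substituting gives precisely the four trace products with the right coefficients. The $t$-design remark is also fine.

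As for comparison: the paper does not actually prove this proposition. It is quoted verbatim as Lemma~2 of Ref.~\cite{cerezo2021cost} and used as a black box in the Haar-integration toolkit of Appendix~\ref{app_sec:prelimilaries}. So you have supplied a self-contained proof where the paper simply cites one. Your route via the second-moment twirl and Schur--Weyl duality is the standard representation-theoretic derivation; the Weingarten expansion you mention as a cross-check is the other common route and is essentially what Refs.~\cite{collins2006integration,puchala2011symbolic} (also cited by the paper) would give. Either is entirely adequate here.
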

 
	\begin{prop}[Lemma~3 of Ref.~\cite{cerezo2021cost}]
    \label{prop:Tr(WW)Tr(WW)}
		Let $\{ W_y\}_{y\in Y} \subset \mathbb{U}(d)$ form a unitary $t$-design with $t\ge 2$, and let $A, B, C, D: \mathcal{H}_d\to\mathcal{H}_d$ be arbitrary linear operators. Then
		\begin{align}
			\frac{1}{|Y|}\sum_{y\in Y}\Tr[W_{y}AW_{y}^{\dagger}B]\Tr[W_{y}CW_{y}^{\dagger}D]=&\int_{\haar}\mathrm{d}W \Tr[WAW^{\dagger}B]\Tr[WCW^{\dagger}D] \nonumber \\
			=&\frac{1}{d^2-1}\left(\Tr[A]\Tr[B]\Tr[C]\Tr[D] + \Tr[AC]\Tr[BD] \right) \nonumber \\
			& - \frac{1}{d(d^2-1)}\left(\Tr[AC]\Tr[B]\Tr[D] + \Tr[A]\Tr[C]\Tr[BD] \right).
		\end{align}
	\end{prop}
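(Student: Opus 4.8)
The plan is to reduce the asserted identity to a single Haar integral and then evaluate it with the second-order Weingarten calculus on $\mathbb{U}(d)$. Because the integrand $\Tr[WAW^{\dagger}B]\Tr[WCW^{\dagger}D]$ is a polynomial of degree two in the entries of $W$ and degree two in the entries of $\bar{W}$, any unitary $t$-design with $t\ge 2$ reproduces the Haar average by the defining property of a $t$-design; hence the first equality is immediate and it suffices to compute $\int_{\haar}\mathrm{d}W\,\Tr[WAW^{\dagger}B]\Tr[WCW^{\dagger}D]$.

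First I would write the two traces in index form, $\Tr[WAW^{\dagger}B]=\sum W_{ij}A_{jk}\bar{W}_{lk}B_{li}$ and likewise $\Tr[WCW^{\dagger}D]=\sum W_{i'j'}C_{j'k'}\bar{W}_{l'k'}D_{l'i'}$, so that the integrand is a product of two entries of $W$ and two of $\bar{W}$. Applying the standard second-moment formula $\int_{\haar}\mathrm{d}W\,W_{i_1 j_1}W_{i_2 j_2}\bar{W}_{i'_1 j'_1}\bar{W}_{i'_2 j'_2}=\sum_{\sigma,\tau\in S_2}\mathrm{Wg}(\sigma\tau^{-1},d)\,\delta_{i_1 i'_{\sigma(1)}}\delta_{i_2 i'_{\sigma(2)}}\delta_{j_1 j'_{\tau(1)}}\delta_{j_2 j'_{\tau(2)}}$ produces four terms indexed by the pairs $(\sigma,\tau)\in S_2\times S_2$, weighted by the known Weingarten values $\mathrm{Wg}(e,d)=1/(d^2-1)$ and $\mathrm{Wg}((12),d)=-1/(d(d^2-1))$.

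The core of the calculation is then to contract the Kronecker deltas of each of the four terms against the operator entries $A_{jk}B_{li}C_{j'k'}D_{l'i'}$ and collapse the resulting sums into traces. The two choices with $\sigma=\tau$, namely $(e,e)$ and $((12),(12))$, carry weight $1/(d^2-1)$ and yield $\Tr[A]\Tr[B]\Tr[C]\Tr[D]$ and $\Tr[AC]\Tr[BD]$ respectively; the two choices with $\sigma\ne\tau$, namely $(e,(12))$ and $((12),e)$, carry weight $-1/(d(d^2-1))$ and yield $\Tr[AC]\Tr[B]\Tr[D]$ and $\Tr[A]\Tr[C]\Tr[BD]$. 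Summing the four contributions reproduces the claimed expression verbatim.

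The main obstacle is purely bookkeeping: correctly matching each permutation pair to the induced index identifications (which row index of $W$ is glued to which row index of $\bar{W}$, and similarly for the columns) and verifying that the contractions assemble into the right products of single traces versus the crossed traces $\Tr[AC]$ and $\Tr[BD]$. A cleaner but equivalent route that sidesteps the index chase would lift the product of traces to the doubled space via $\Tr[WAW^{\dagger}B]\Tr[WCW^{\dagger}D]=\Tr_{d^2}\big[(W\otimes W)(A\otimes C)(W^{\dagger}\otimes W^{\dagger})(B\otimes D)\big]$ and invoke Schur--Weyl duality: the two-copy twirl $\int_{\haar}\mathrm{d}W\,(W\otimes W)(\cdot)(W^{\dagger}\otimes W^{\dagger})$ maps onto $\Span\{\mathbb{I}_{d^2},\swap\}$ with coefficients fixed by taking overlaps with $\mathbb{I}_{d^2}$ and $\swap$, and evaluating these with $\Tr[A\otimes C]=\Tr[A]\Tr[C]$, $\Tr[(A\otimes C)\swap]=\Tr[AC]$, together with the analogues for $B\otimes D$, gives the identity directly.
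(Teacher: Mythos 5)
Your proposal is correct: the paper offers no proof of this proposition, importing it verbatim as Lemma~3 of Ref.~\cite{cerezo2021cost}, and the derivation there is exactly the second-order Weingarten computation you outline, with the pairings $(\sigma,\tau)=(e,e)$ and $((12),(12))$ producing the $\frac{1}{d^2-1}$ terms $\Tr[A]\Tr[B]\Tr[C]\Tr[D]$ and $\Tr[AC]\Tr[BD]$, and the mixed pairings producing the $-\frac{1}{d(d^2-1)}$ cross terms, just as you state. Your alternative two-copy twirl via Schur--Weyl duality is an equivalent repackaging of the same second-moment calculation and likewise reproduces the identity, so both of your routes are sound.
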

	
	\begin{prop}\label{prop:Tr(psipsi)}
		Let the quantum state $\ket{\bm{\phi}} \in \mathcal{H}_d$ follows the Haar distribution, and let $A: \mathcal{H}_d\to\mathcal{H}_d$ be arbitrary linear operators. Then
		\begin{equation}
			\int_{\haar}\ket{\bm{\phi}}\bra{\bm{\phi}} \mathrm{d} \bm{\phi} = \frac{\mathbb{I}_d}{d}, \quad  \mbox{and hence}  \quad \int_{\haar}\mathrm{d} \bm{\phi} \Tr\left(\ket{\bm{\phi}}\bra{\bm{\phi}}A \right)
			= \frac{\Tr(A)}{d}.
		\end{equation}
	\end{prop}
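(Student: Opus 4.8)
The plan is to establish the operator identity $\int_{\haar}\ket{\bm{\phi}}\bra{\bm{\phi}}\,\mathrm{d}\bm{\phi}=\mathbb{I}_d/d$ first, after which the scalar ``and hence'' statement follows by pulling $A$ out through the linearity of the trace and the integral. The starting observation is that a Haar-distributed state is generated by acting with a Haar-random unitary on a fixed reference vector, exactly as recalled in Subsection~\ref{subsec:prep-Qc}: writing $\ket{\bm{\phi}}=W\ket{\bm{e}_1}$ with $W\sim\haar$ on $\mathbb{U}(d)$ reproduces the induced measure on states. Hence the target object becomes $M:=\int_{\haar}W\ket{\bm{e}_1}\bra{\bm{e}_1}W^{\dagger}\,\mathrm{d}W$, a single matrix-valued Haar integral.

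First I would read off the entries of $M$ using Proposition~\ref{prop:Tr(WW)}, which is already available. For any basis vectors $\ket{\bm{e}_i},\ket{\bm{e}_j}$,
\begin{equation}
\bra{\bm{e}_i}M\ket{\bm{e}_j}=\int_{\haar}\mathrm{d}W\,\Tr\!\left(W\ket{\bm{e}_1}\bra{\bm{e}_1}W^{\dagger}\ket{\bm{e}_j}\bra{\bm{e}_i}\right)=\frac{\Tr(\ket{\bm{e}_1}\bra{\bm{e}_1})\,\Tr(\ket{\bm{e}_j}\bra{\bm{e}_i})}{d}=\frac{\delta_{ij}}{d},
\end{equation}
where the middle equality applies Proposition~\ref{prop:Tr(WW)} with $A=\ket{\bm{e}_1}\bra{\bm{e}_1}$ and $B=\ket{\bm{e}_j}\bra{\bm{e}_i}$. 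Since every entry of $M$ equals $\delta_{ij}/d$, I conclude $M=\mathbb{I}_d/d$, which is the first claimed identity.

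Conceptually the same conclusion follows directly from the defining invariance of the Haar measure, without invoking Proposition~\ref{prop:Tr(WW)}: for any fixed $V\in\mathbb{U}(d)$, the substitution $W\mapsto VW$ together with left-invariance gives $VMV^{\dagger}=M$, so $M$ commutes with the entire unitary group and Schur's lemma forces $M=c\,\mathbb{I}_d$; taking the trace and using $\Tr(\ket{\bm{\phi}}\bra{\bm{\phi}})=1$ then fixes $c=1/d$. I would present the matrix-entry computation as the formal proof and relegate the Schur argument to a remark, since the former is fully self-contained given the preceding propositions. The remaining ``and hence'' claim is then immediate by linearity:
\begin{equation}
\int_{\haar}\mathrm{d}\bm{\phi}\,\Tr\!\left(\ket{\bm{\phi}}\bra{\bm{\phi}}A\right)=\Tr\!\left(\left(\int_{\haar}\ket{\bm{\phi}}\bra{\bm{\phi}}\,\mathrm{d}\bm{\phi}\right)A\right)=\Tr\!\left(\frac{\mathbb{I}_d}{d}A\right)=\frac{\Tr(A)}{d}.
\end{equation}

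There is no genuine obstacle in this proof, as it is a standard first-moment computation. The only two points requiring care are the reduction of integration over Haar states to integration of $W\ket{\bm{e}_1}$ over Haar unitaries, which is precisely the definition of Haar-random states recalled earlier, and the exchange of the trace with the integral, which is justified because the integrand is a bounded matrix-valued function on the compact group $\mathbb{U}(d)$.
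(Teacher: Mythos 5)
Your proof is correct. There is, however, no in-paper argument to compare it against: the paper states Proposition~\ref{prop:Tr(psipsi)} without proof, listing it among standard Haar-integration facts (its neighbours, Propositions~\ref{prop:Tr(WW)}--\ref{prop:Tr(WW)Tr(WW)}, are imported verbatim from the cited literature), so your derivation is a self-contained substitute rather than a variant of the paper's route. Both of your arguments are sound. The reduction $\ket{\bm{\phi}}=W\ket{\bm{e}_1}$ with $W$ Haar matches exactly the paper's definition of Haar-random states in Subsection~\ref{subsec:prep-Qc}, and the entrywise computation is a legitimate application of Proposition~\ref{prop:Tr(WW)} with $A=\ket{\bm{e}_1}\bra{\bm{e}_1}$ and $B=\ket{\bm{e}_j}\bra{\bm{e}_i}$ (the Haar measure is in particular a $1$-design), giving $M_{ij}=\delta_{ij}/d$ and hence $M=\mathbb{I}_d/d$; one might object that this looks circular since Proposition~\ref{prop:Tr(WW)} is itself the first-moment Haar identity, but within the paper's logical structure, where that proposition is taken as given, the derivation is perfectly valid. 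Your invariance-plus-Schur remark is arguably the preferable argument to foreground: $VMV^{\dagger}=M$ for all $V\in\mathbb{U}(d)$ by left-invariance, irreducibility of the defining representation forces $M=c\,\mathbb{I}_d$, and $\Tr(M)=1$ fixes $c=1/d$, with no prior moment formula needed. The ``and hence'' step, exchanging the trace with the integral of a bounded matrix-valued function over the compact group, is unproblematic as you note.
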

	
	\begin{prop}\label{prop:Tr(psipsi)Tr(psipsi)}
		Let the quantum state $\ket{\bm{\phi}} \in \mathcal{H}_d$ follows the Haar distribution, and let $A: \mathcal{H}_d\to\mathcal{H}_d$ be arbitrary linear operators. Then
		\begin{equation}
			\int_{\haar}\ket{\bm{\phi}}\bra{\bm{\phi}}^{\otimes 2} \mathrm{d} \bm{\phi} = \frac{\mathbb{I}_d^{\otimes 2}+\swap}{d(d+1)} \quad  \mbox{and hence}  \quad 
			\int_{\haar}\mathrm{d} \bm{\phi} \Tr\left(\ket{\bm{\phi}}\bra{\bm{\phi}}A \right)^2
			= \frac{\Tr(A)^2+\Tr(A^2)}{d(d+1)}.
		\end{equation}
	\end{prop}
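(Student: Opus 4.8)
The plan is to establish the operator identity first and then read off the scalar consequence from it. I would begin by writing a Haar-random pure state as $\ket{\bm{\phi}}=W\ket{\bm{e}_1}$ for $W$ Haar-distributed on $\mathbb{U}(d)$, so that the target operator becomes a twofold twirl,
\begin{equation}
M := \int_{\haar}\ket{\bm{\phi}}\bra{\bm{\phi}}^{\otimes 2}\,\mathrm{d}\bm{\phi}
= \int_{\haar}\mathrm{d}W\; W^{\otimes 2}\,(\ket{\bm{e}_1}\bra{\bm{e}_1})^{\otimes 2}\,(W^{\dagger})^{\otimes 2}.
\end{equation}
By construction $M$ commutes with $W^{\otimes 2}$ for every $W\in\mathbb{U}(d)$, so it lies in the commutant of the twofold tensor representation of the unitary group. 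By Schur--Weyl duality this commutant is spanned by the image of the symmetric group $S_2$, i.e.\ by $\{\mathbb{I}_d^{\otimes 2},\swap\}$; hence $M=a\,\mathbb{I}_d^{\otimes 2}+b\,\swap$ for scalars $a,b$ still to be fixed.

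The two scalars then follow from two elementary constraints. First, since each $\ket{\bm{\phi}}\bra{\bm{\phi}}^{\otimes 2}$ is supported on the symmetric subspace, $\swap\,M=M$; using $\swap^2=\mathbb{I}_d^{\otimes 2}$ this forces $a=b$. Second, taking the trace and using $\Tr(\ket{\bm{\phi}}\bra{\bm{\phi}}^{\otimes 2})=1$ together with $\Tr(\mathbb{I}_d^{\otimes 2})=d^2$ and $\Tr(\swap)=d$ gives $1=a(d^2+d)$, so $a=b=1/[d(d+1)]$, which yields the claimed identity $M=(\mathbb{I}_d^{\otimes 2}+\swap)/[d(d+1)]$.

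For the scalar statement I would use $\Tr(X)\Tr(Y)=\Tr(X\otimes Y)$ to rewrite $\Tr(\ket{\bm{\phi}}\bra{\bm{\phi}}A)^2=\Tr\big((\ket{\bm{\phi}}\bra{\bm{\phi}}^{\otimes 2})\,A^{\otimes 2}\big)$, interchange integral and trace by linearity, and substitute the value of $M$. The result collapses to $\Tr\big((\mathbb{I}_d^{\otimes 2}+\swap)\,A^{\otimes 2}\big)/[d(d+1)]$, and invoking the standard swap-trace identities $\Tr(A^{\otimes 2})=\Tr(A)^2$ and $\Tr(\swap\,A^{\otimes 2})=\Tr(A^2)$ delivers $(\Tr(A)^2+\Tr(A^2))/[d(d+1)]$. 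As an independent cross-check, the same scalar formula drops out of Proposition~\ref{prop:Tr(WW)Tr(WW)} upon identifying its $A,C$ slots with $\ket{\bm{e}_1}\bra{\bm{e}_1}$ and its $B,D$ slots with the present operator $A$, after which $\Tr(\ket{\bm{e}_1}\bra{\bm{e}_1})=\Tr((\ket{\bm{e}_1}\bra{\bm{e}_1})^2)=1$ makes the two $1/(d^2-1)$ terms combine into the factor $1/[d(d+1)]$.

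The main obstacle is the representation-theoretic step: rigorously justifying that the twirl lands in $\Span\{\mathbb{I}_d^{\otimes 2},\swap\}$. This rests on Schur--Weyl duality, equivalently on the fact that the only operators commuting with every $W^{\otimes 2}$ are linear combinations of the two permutation operators on the doubled space. Once that structural fact is in hand, fixing $a$ and $b$ through the symmetry constraint $\swap\,M=M$ and the normalization $\Tr(M)=1$ is entirely routine.
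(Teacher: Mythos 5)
Your proof is correct, and it is worth noting that the paper itself never proves this proposition: Properties~\ref{prop:Tr(psipsi)}--\ref{prop:Tr(psipsi)Tr(psipsi)} are stated in Appendix~B as standard Haar-integration facts with references (Collins--\'Sniady, Pucha{\l}a--Miszczak, Cerezo et al.), so you have supplied a self-contained argument where the paper supplies a citation. Your route is the standard one: left-invariance of the Haar measure shows the twirl $M=\int \ket{\bm{\phi}}\bra{\bm{\phi}}^{\otimes 2}\mathrm{d}\bm{\phi}$ commutes with every $W^{\otimes 2}$, Schur--Weyl places it in $\Span\{\mathbb{I}_d^{\otimes 2},\swap\}$, and the two constraints $\swap M=M$ (valid since $\mathbb{I}_d^{\otimes 2}$ and $\swap$ are linearly independent for $d\ge 2$; $d=1$ is trivial) and $\Tr(M)=1$ pin down $a=b=1/[d(d+1)]$ --- equivalently, $M$ is the normalized projector onto the symmetric subspace. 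Your cross-check via Proposition~\ref{prop:Tr(WW)Tr(WW)} is also arithmetically right: with $A=C=\ket{\bm{e}_1}\bra{\bm{e}_1}$ and $B=D$ the operator in question, the two terms combine as $\bigl(\tfrac{1}{d^2-1}-\tfrac{1}{d(d^2-1)}\bigr)\bigl(\Tr(A)^2+\Tr(A^2)\bigr)=\tfrac{\Tr(A)^2+\Tr(A^2)}{d(d+1)}$, and this specialization is essentially the derivation implicit in the paper's sourcing, since Proposition~\ref{prop:Tr(WW)Tr(WW)} is itself the cited Lemma~3 of Cerezo et al. The one thing your Schur--Weyl route buys beyond that shortcut is the operator identity $M=(\mathbb{I}_d^{\otimes 2}+\swap)/[d(d+1)]$ itself, which the Proposition-\ref{prop:Tr(WW)Tr(WW)} specialization does not deliver and which the paper genuinely needs in operator form in the proof of Lemma~\ref{lem:upper_to_lower} (the third equality of Eqn.~(\ref{eq:app_risk_simplify})).
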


	\section{Proof of Lemma 1}\label{app_sec:central_lemma}
	We first recall that the learning problem considered in this study aims to predict the output of quantum states evolved by an unknown unitary $U$ under a given observable $O$, i.e.,
    \begin{align}\label{eq:app_learning_model}
	f_{U}(\bm{\psi}) = \Tr(OU\ket{\bm{\psi}}\bra{\bm{\psi}}U^{\dagger}),
	\end{align} 
    where $\ket{\bm{\psi}}$ is the input state. Denote the hypothesis learned from the training dataset $\mathcal{D}$ by $h_{\mathcal{D}}(\bm{\psi})=\Tr(V_{\mathcal{D}}^{\dagger}OV_{\mathcal{D}}\ket{\bm{\psi}}\bra{\bm{\psi}})$ with $V_{\mathcal{D}}$ being the trained unitary. To quantify how well the hypothesis $h_{\mathcal{D}}$ performs in predicting the target concept $f_{U}$, we define the risk function as 
    \begin{equation}\label{eq:app_risk}
        R_{U}(V_{{\mathcal{D}}}) := \int \mathrm{d} \bm{\psi} \left( f_U(\bm{\psi})-h_{{\mathcal{D}}}(\bm{\psi})  \right) ^2 = \mathbb{E}_{\ket{\bm{\psi}} \sim \haar} \Tr\left(O\left(V_{\mathcal{D}}\ket{\bm{\psi}}\bra{\bm{\psi}}V_{\mathcal{D}}^{\dagger}-U\ket{\bm{\psi}}\bra{\bm{\psi}}U^{\dagger}\right)\right) ^2.
    \end{equation}

 The no-free-lunch (NFL) theorem considers the lower bound of the average risk function over all possible target unitary and training datasets, i.e., $\mathbb{E}_{U}\mathbb{E}_{\mathcal{D}}R_U(V_{\mathcal{D}})$, which is a hard problem to analyze in both quantum and classical learning theory. In this regard, we give the following lemma to reduce the problem of lower bounding the risk function to the problem of a manageable upper bound problem regarding the trace function $\Tr(U^{\dagger}OUV_{\mathcal{D}}^{\dagger}OV_{\mathcal{D}})$.
	\begin{lemma}\label{lem:app_upper_to_lower}
		The averaged risk function defined in Eqn.~(\ref{eq:app_risk}) over the Haar input state yields
		\begin{equation}\label{eq:app_risk_trans}
			R_U(V_{\mathcal{D}}) = \frac{1}{d(d+1)}\left[2\Tr(O^2)-2\Tr(U^{\dagger}OUV_{\mathcal{D}}^{\dagger}OV_{\mathcal{D}}) \right]
		\end{equation}
	\end{lemma}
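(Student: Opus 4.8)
The plan is to expand the squared term in the risk function of Eqn.~(\ref{eq:app_risk}) and compute each resulting Haar integral over the input state $\ket{\bm{\psi}}$ using the moment formulas collected in Appendix~\ref{app_sec:prelimilaries}. Writing $A := V_{\mathcal{D}}^{\dagger} O V_{\mathcal{D}}$ and $B := U^{\dagger} O U$, and using cyclicity of the trace together with $\Tr(O X \ket{\bm{\psi}}\bra{\bm{\psi}}) = \bra{\bm{\psi}} X^{\dagger} O \cdots$ type rearrangements, I would first rewrite the integrand purely in terms of expectation values of the form $\Tr(A\ket{\bm{\psi}}\bra{\bm{\psi}})$ and $\Tr(B\ket{\bm{\psi}}\bra{\bm{\psi}})$. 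Concretely,
\begin{equation}
R_U(V_{\mathcal{D}}) = \int \mathrm{d}\bm{\psi}\, \Big( \Tr(A\ket{\bm{\psi}}\bra{\bm{\psi}}) - \Tr(B\ket{\bm{\psi}}\bra{\bm{\psi}}) \Big)^2,
\end{equation}
where I have absorbed the conjugation by $V_{\mathcal{D}}$ and $U$ into the operators $A$ and $B$, both of which are Hermitian with the same spectrum as $O$.

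Next I would expand the square into three pieces, $\Tr(A\ket{\bm{\psi}}\bra{\bm{\psi}})^2$, $\Tr(B\ket{\bm{\psi}}\bra{\bm{\psi}})^2$, and the cross term $-2\,\Tr(A\ket{\bm{\psi}}\bra{\bm{\psi}})\Tr(B\ket{\bm{\psi}}\bra{\bm{\psi}})$, and integrate each separately. The two pure-square terms are handled directly by Proposition~\ref{prop:Tr(psipsi)Tr(psipsi)}, giving $(\Tr(A)^2+\Tr(A^2))/d(d+1)$ and the analogous expression in $B$. The cross term requires the second-moment tensor $\int_{\haar} \ket{\bm{\psi}}\bra{\bm{\psi}}^{\otimes 2}\,\mathrm{d}\bm{\psi} = (\mathbb{I}_d^{\otimes 2} + \swap)/d(d+1)$; contracting this with $A \otimes B$ and using $\Tr((A\otimes B)\,\swap) = \Tr(AB)$ yields $(\Tr(A)\Tr(B) + \Tr(AB))/d(d+1)$. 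Combining all three contributions produces
\begin{equation}
R_U(V_{\mathcal{D}}) = \frac{1}{d(d+1)}\Big[ \Tr(A)^2 + \Tr(A^2) + \Tr(B)^2 + \Tr(B^2) - 2\Tr(A)\Tr(B) - 2\Tr(AB) \Big].
\end{equation}

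To finish, I would simplify using the unitary invariance of the trace: since $A$ and $B$ are both unitarily similar to $O$, we have $\Tr(A)=\Tr(B)=\Tr(O)$ and $\Tr(A^2)=\Tr(B^2)=\Tr(O^2)$. The terms $\Tr(A)^2 + \Tr(B)^2 - 2\Tr(A)\Tr(B) = (\Tr(O)-\Tr(O))^2 = 0$ cancel exactly, while $\Tr(A^2)+\Tr(B^2) = 2\Tr(O^2)$ and the cross term $\Tr(AB) = \Tr(V_{\mathcal{D}}^{\dagger} O V_{\mathcal{D}} U^{\dagger} O U) = \Tr(U^{\dagger} O U V_{\mathcal{D}}^{\dagger} O V_{\mathcal{D}})$ after cyclic permutation. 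This gives exactly Eqn.~(\ref{eq:app_risk_trans}). I do not anticipate a genuine obstacle here, since every integral is covered by the stated propositions; the only place demanding care is the bookkeeping of the cross term—correctly contracting the $\swap$ operator against $A\otimes B$ and tracking which operator is conjugated by $V_{\mathcal{D}}$ versus $U$ so that the final trace appears in the asymmetric ordering $\Tr(U^{\dagger}OUV_{\mathcal{D}}^{\dagger}OV_{\mathcal{D}})$ claimed in the lemma.
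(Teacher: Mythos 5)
Your proposal is correct and is essentially the same argument as the paper's: the paper merely bundles the difference into a single operator $G := U^{\dagger}OU - V_{\mathcal{D}}^{\dagger}OV_{\mathcal{D}}$ and applies the second-moment identity $\int_{\haar}\ket{\bm{\psi}}\bra{\bm{\psi}}^{\otimes 2}\,\mathrm{d}\bm{\psi} = (\mathbb{I}_d^{\otimes 2}+\swap)/d(d+1)$ once, obtaining $[\Tr(G)^2+\Tr(G^2)]/d(d+1)$ with $\Tr(G)=0$ by unitary invariance, whereas you expand the square first and apply the same moment formula to the three terms separately. The intermediate expressions, the cancellation via $\Tr(A)=\Tr(B)=\Tr(O)$, and the final cyclic rearrangement of $\Tr(AB)$ are identical in both routes.
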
 
	
	\begin{proof}	
		[Proof of Lemma~\ref{lem:upper_to_lower}]
		Define $G:=U^{\dagger}OU-V_{\mathcal{D}}^{\dagger}OV_{\mathcal{D}}$. Then we have
		\begin{align}
			R_U(V_{\mathcal{D}}) 
			& =  \mathbb{E}_{\ket{\bm{\psi}} \sim \haar} \Tr\left[ G  \ket{\bm{\psi}}\bra{\bm{\psi}} \right]^2 \nonumber \\
			& = \Tr\left[ G^{\otimes 2} \mathbb{E}_{\ket{\bm{\psi}} \sim \haar} \ket{\bm{\psi}}\bra{\bm{\psi}}^{\otimes 2} \right] \nonumber \\
			& = \frac{1}{d(d+1)}\Tr\left[ G^{\otimes 2} (\mathbb{I}_d^{\otimes 2} + \swap)\right] \nonumber \\
			& = \frac{1}{d(d+1)}\left[\Tr(G)^2+\Tr(G^2)\right]\nonumber \\
			& = \frac{1}{d(d+1)}\left[2\Tr(O^2)-2\Tr(U^{\dagger}OUV_{\mathcal{D}}^{\dagger}OV_{\mathcal{D}}) \right], \label{eq:app_risk_simplify}
		\end{align}
		where the second equality employs $\Tr(A)^2=\Tr(A\otimes A)$, the third equality exploits Property~\ref{prop:Tr(psipsi)Tr(psipsi)}, and the $\swap$ is the swap operator on the space $\mathcal{H}_d^{\otimes 2}$.	
	\end{proof}

	\section{Proof of Theorem 1 (NFL theorem for classical learning protocols)}\label{app_sec:NFL_SC}
	We first recall the definition of classical learning protocols, which is specified by the training dataset and the assumption of perfect training. In particular, for a given the observable $O$ with the spectral decomposition $O=\sum_{q=1}^{r} \lambda_q\ket{\bm{o}_q}\bra{\bm{o}_q}$, the training pairs consist of the input states $\ket{\bm{\psi}_j}$ and the expectation of the projective measurement $\ket{\bm{o}_q}\bra{\bm{o}_q}$ on the output states $U\ket{\bm{\psi}_j}$. This leads to the training dataset as
\begin{equation}\label{eq:app_data_c}
    \mathcal{D}_{\C}=\{(\ket{\bm{\psi}_j}, \bm{a}_j)~|~\ket{\bm{\psi}_j} \in \mathcal{H}_{d},\bm{a}_j=(\bm{a}_{j1}, \cdots, \bm{a}_{jr}) \}_{j=1}^N,
    \end{equation}
    where $\bm{a}_{jq}=\Tr(\ket{\bm{o}_q}\bra{\bm{o}_q}U\ket{\bm{\psi}_j}\bra{\bm{\psi}_j}U^{\dagger})$ refers to the expectation value of the projective measurement $\ket{\bm{o}_q}\bra{\bm{o}_q}$. In this regard, the assumption of perfect training is given by 
    \begin{align}       \Tr(U^{\dagger}\ket{\bm{o}_q}\bra{\bm{o}_q}U\ket{\bm{\psi}_j}\bra{\bm{\psi}_j})&=\Tr(V_{\C}^{\dagger}\ket{\bm{o}_q}\bra{\bm{o}_q}V_{\C}\ket{\bm{\psi}_j}\bra{\bm{\psi}_j}), \mbox{~for~} \forall j\in[N], \forall  q\in [r],
    \label{eq:app_pt_inco}
    \end{align}
    where $V_{\C}$ is the trained unitary on the training dataset $\mathcal{D}_C$. We are now ready to give the proof of Theorem~\ref{thm:NFL_Ob_maintext}.
	
	\begin{theorem-non}[Formal statement of Theorem~\ref{thm:NFL_Ob_maintext}]
        Let $O=\sum_{q=1}^{r} \lambda_q\ket{\bm{o}_q}\bra{\bm{o}_q}$ be an arbitrary observable.
		For the learning model defined in Eqn.~(\ref{eq:app_learning_model}), when the classical learning protocols achieve the perfect training of Eqn.~(\ref{eq:app_pt_inco}), the average risk function over all possible $n$-qubit unitaries $U$ and training datasets $\mathcal{D}_{\C}$ defined in Eqn.~(\ref{eq:app_data_c}) yields
		\begin{align}
		\mathbb{E}_{U,\mathcal{D}_{\C}} R_U(V_{\C}) \ge \Omega\left(\frac{(d^{2}-N)(d\Tr(O^2)-\Tr(O)^2)}{d^{5}}\right).
		\end{align}
		where $d$ and $N$ refer to the dimension of an $n$-qubit quantum system with $d=2^n$ and the size of the training dataset, respectively. 
	\end{theorem-non}
	
	\begin{proof}
		[Proof of Theorem~\ref{thm:NFL_Ob_maintext}]  Following Lemma~\ref{lem:upper_to_lower}, a critical step for deriving the lower bound of the risk function is to obtain the upper bound of $\Tr(U^{\dagger}OUV_{\C}^{\dagger}OV_{\C})$, which can be written as
        \begin{equation}\label{eq:UOUVOV_eigen_decomp}
            \Tr(U^{\dagger}OUV_{\C}^{\dagger}OV_{\C})=\sum_{p=1}^{d}\sum_{q=1}^{d}\lambda_p \lambda_q\Tr(U^{\dagger}\ket{\bm{o}_p}\bra{\bm{o}_p}UV_{\C}^{\dagger}\ket{\bm{o}_q}\bra{\bm{o}_q}V_{\C}).
        \end{equation}
        In the following, we elaborate on how to obtain the upper bound of the term $\Tr(U^{\dagger}\ket{\bm{o}_p}\bra{\bm{o}_p}UV_{\C}^{\dagger}\ket{\bm{o}_q}\bra{\bm{o}_q}V_{\C})$ with the assumption of perfect training in Eqn.~(\ref{eq:app_pt_inco}) by separately considering the case of the input states being linearly independent and linearly dependent. 
        
		\smallskip

        \noindent \underline{\textit{The input states are linearly independent.}} To analyze the relation between the training dataset $\{\ket{\bm{\psi}_j} \}_{j=1}^N$ and the average risk $\mathbb{E}_{U,\mathcal{D}_{\C}} \Tr(U^{\dagger}\ket{\bm{o}_p}\bra{\bm{o}_p}UV_{\C}^{\dagger}\ket{\bm{o}_q}\bra{\bm{o}_q}V_{\C})$, we first consider how to express $U^{\dagger}\ket{\bm{o}_p}$ in linearly independent states $\{\ket{\bm{\psi}_j} \}_{j=1}^N$ with $N\le d$. Applying QR decomposition to the matrix $\bm{\Psi}_{1:N}=[\ket{\bm{\psi}_1}, \cdots, \ket{\bm{\psi}_N}]$ yields the corresponding orthogonal matrix $\widetilde{\bm{\Psi}}_{1:N}=[\ket{\widetilde{\bm{\psi}}_1}, \cdots, \ket{\widetilde{\bm{\psi}}_N}]$ such that $\ket{\widetilde{\bm{\psi}}_1}=\ket{\bm{\psi}_1}$ and $\bm{\Psi}_{1:N}=\widetilde{\bm{\Psi}}_{1:N}R^{(N)}$ with $R^{(N)}$ being the $N$-dimensional upper triangular matrix and the states set $\{\ket{\bm{\psi}_j} \}_{j=1}^N$ and $\{\ket{\widetilde{\bm{\psi}}_j} \}_{j=1}^N$ are equivalent. 
        Spanning the orthogonal training input states $\{\ket{\widetilde{\bm{\psi}}_j} \}_{j=1}^N$ to the complete orthogonal basis in the Hilbert space $\mathcal{H}_d$ yields $\widetilde{\bm{\Psi}}=[\ket{\widetilde{\bm{\psi}}_1}, \cdots, \ket{\widetilde{\bm{\psi}}_N}, \ket{\widetilde{\bm{\psi}}_{N+1}}, \cdots, \ket{\widetilde{\bm{\psi}}_{d}} ]$.
        Due to the equivalence between the basis $\{\ket{\bm{\psi}_j} \}_{j=1}^N$ and $\{\ket{\widetilde{\bm{\psi}}_j} \}_{j=1}^N$, another set of complete basis yields 
        \begin{equation}\label{eq:orth_basis}
        \bm{\Psi}=[\ket{\bm{\psi}_1}, \cdots, \ket{\bm{\psi}_N}, \ket{\widetilde{\bm{\psi}}_{N+1}}, \cdots, \ket{\widetilde{\bm{\psi}}_{d}} ] \mbox{~with~} \bm{\Psi}=\widetilde{\bm{\Psi}}R=\widetilde{\bm{\Psi}}\left( {\begin{array}{cc}
				R^{(N)} & 0 \\
				0 & \mathbb{I}_{d-N} \\
		\end{array} } \right)
        \end{equation}
        where $\ket{\widetilde{\bm{\psi}}_{j}}$ are mutually orthogonal for any $j\in[d]\backslash[N]$ and are orthogonal with $\ket{\bm{\psi}_j}$ for any $j\in [N]$, $R$ and $R^{(N)}$ refer to the upper triangular matrices in the QR decomposition of $\bm{\Psi}$ and $\bm{\Psi}_{1:N}$. In this regard, expressing the states $U^{\dagger}\ket{\bm{o}_p}$ in the complete basis $\bm{\Psi}$ yields
        \begin{equation}\label{eq:Uketo}
            U^{\dagger}\ket{\bm{o}_p} = \sum_{j=1}^N \bm{u}_j^{(p)} \ket{\bm{\psi}_j} + \sum_{j=N+1}^d \widetilde{\bm{u}}_j^{(p)} \ket{\widetilde{\bm{\psi}}_j} = \bm{\Psi} \cdot (\bm{u}^{(p)})^{\intercal},
        \end{equation}
        where $\bm{u}^{(p)}=(\bm{u}_1^{(p)},\cdots,\bm{u}_N^{(p)}, \widetilde{\bm{u}}_{N+1}^{(p)},\cdots, \widetilde{\bm{u}}_d^{(p)})$ refers to the related coefficients. Then the inner product between $U^{\dagger}\ket{\bm{o}_p}$ and $\ket{\widetilde{\bm{\psi}}_j}$ for $j\in [d] \backslash [N]$ yields
        \begin{align}
            \bra{\widetilde{\bm{\psi}}_j} U^{\dagger}\ket{\bm{o}_p} = \sum_{k=1}^N \bm{u}_k^{(p)} \braket{\widetilde{\bm{\psi}}_j|\bm{\psi}_k} + \sum_{k=N+1}^d \widetilde{\bm{u}}_k^{(p)} \braket{\widetilde{\bm{\psi}}_j|\widetilde{\bm{\psi}}_k}= \widetilde{\bm{u}}_j^{(p)}, \label{eq:tilde_u_j}
        \end{align}
        where the second equality follows the orthogonality between  $\ket{\widetilde{\bm{\psi}}_k}$ with the other states in $\bm{\Psi}$. Similar results also hold for $U^{\dagger}\ket{\bm{o}_q}$, $V_{\C}^{\dagger}\ket{\bm{o}_p}$, and $V_{\C}^{\dagger}\ket{\bm{o}_q}$ with defining 
        \begin{align}
            \widetilde{\bm{u}}_k^{(q)}:= \bra{\widetilde{\bm{\psi}}_k}U^{\dagger}\ket{\bm{o}_q}
            ~\mbox{and}~
            \widetilde{\bm{v}}_k^{(q)}:= \bra{\widetilde{\bm{\psi}}_k}V_{\C}^{\dagger}\ket{\bm{o}_q},~\mbox{for}~k\in [d]\backslash[N]. \label{eq:u_v_q}
        \end{align}
        The assumption of perfect training suggests that $|\bra{\bm{\psi}_j}U^{\dagger}\ket{\bm{o}_q}|^2=|\bra{\bm{\psi}_k}V_{\C}^{\dagger}\ket{\bm{o}_q}|^2$  or equivalently $\bra{\bm{\psi}_j}U^{\dagger}\ket{\bm{o}_q}=e^{i\bm{\gamma}_j^{(q)}}\bra{\bm{\psi}_k}V_{\C}^{\dagger}\ket{\bm{o}_q}$ for $j\in [N]$ with $\bm{\gamma}_j^{(q)}$ being the inter-state relative phase. Notably, the value $\bm{\gamma}_j^{(q)}$ could be an arbitrary real number and is assumed to be uniformly distributed over the value space, which is reasonable in the context of NFL theorems considering the average prediction error.
        
        We are now ready to analyze the average risk function. In conjunction with the above notations defined in Eqns.~(\ref{eq:Uketo})-~(\ref{eq:u_v_q}) and the risk function defined in Eqn.~(\ref{eq:app_risk}), we have
        \begin{align}
			& \mathbb{E}_{U,\mathcal{D}_{\C}} \Tr\left(U^{\dagger}\ket{\bm{o}_p}\bra{\bm{o}_p}UV_{\C}^{\dagger}\ket{\bm{o}_q}\bra{\bm{o}_q}V_{\C}\right)  
			\nonumber \\
            = & \mathbb{E}_{U,\mathcal{D}_{\C}} \Tr\left[\left(\sum_{j=1}^N \bm{u}_j^{(p)} \ket{\bm{\psi}_j} + \sum_{j=N+1}^d \widetilde{\bm{u}}_j^{(p)} \ket{\widetilde{\bm{\psi}}_j}\right)\left(\sum_{k=1}^N (\bm{u}_k^{(p)})^* \bra{\bm{\psi}_k} + \sum_{k=N+1}^d (\widetilde{\bm{u}}_k^{(p)})^* \bra{\widetilde{\bm{\psi}}_k}\right)V_{\C}^{\dagger}\ket{\bm{o}_q}\bra{\bm{o}_q}V_{\C}\right]
            \nonumber \\
            = & \mathbb{E}_{U,\mathcal{D}_{\C}} \Bigg[\underbrace{\Tr\left(\sum_{j=1}^N \sum_{k=1}^N \bm{u}_j^{(p)}(\bm{u}_k^{(p)})^* \ket{\bm{\psi}_j}\bra{\bm{\psi}_k}U^{\dagger}\ket{\bm{o}_q}\bra{\bm{o}_q}U e^{i(\bm{\gamma}_k^{(q)}-\bm{\gamma}_j^{(q)})} \right)}_{T_1} 
            \nonumber \\
            &  + \underbrace{\Tr\left(\sum_{j=1}^N \sum_{k=N+1}^d  \bm{u}_j^{(p)}(\widetilde{\bm{u}}_k^{(p)})^* \ket{\bm{\psi}_j}\bra{\widetilde{\bm{\psi}}_k}V_{\C}^{\dagger}\ket{\bm{o}_q}\bra{\bm{o}_q}U e^{i\bm{\gamma}_j^{(q)}} \right)}_{T_2}
            \nonumber \\
            &  + \underbrace{\Tr\left(\sum_{j=N+1}^d \sum_{k=1}^N  \widetilde{\bm{u}}_j^{(p)}(\bm{u}_k^{(p)})^* \ket{\widetilde{\bm{\psi}}_j}\bra{\bm{\psi}_k}U^{\dagger}\ket{\bm{o}_q}\bra{\bm{o}_q}V_{\C} e^{-i\bm{\gamma}_k^{(q)}} \right)}_{T_3} + \underbrace{\Tr\left(\sum_{j=N+1}^d \sum_{k=N+1}^d \widetilde{\bm{u}}_j^{(p)} (\widetilde{\bm{v}}_j^{(q)})^*(\widetilde{\bm{u}}_k^{(p)})^*\widetilde{\bm{v}}_k^{(q)} \right)}_{T_4} \Bigg],
            \label{eq:NFL_Ob_Linear_In_1}
		\end{align} 
        where the first equality employs Eqn.~(\ref{eq:Uketo}) which gives the linear representation of $U^{\dagger}\ket{\bm{o}_p}$ in the set of completely basis $\Psi$, the second equality follows the assumption of perfect training $\bra{\bm{\psi}_j}U^{\dagger}\ket{\bm{o}_q}=e^{i\bm{\gamma}_j^{(q)}}\bra{\bm{\psi}_k}V_{\C}^{\dagger}\ket{\bm{o}_q}$ for $j\in [N]$ and the representation of $\widetilde{\bm{v}}_{k}^{(q)}=\bra{\widetilde{\bm{\psi}}_k}V_{\C}^{\dagger}\ket{\bm{o}_q}$ and $\widetilde{\bm{u}}_{k}^{(p)}=\bra{\widetilde{\bm{\psi}}_k}U^{\dagger}\ket{\bm{o}_p}$ for $k\in [d]\backslash [N]$ given in Eqn.~(\ref{eq:u_v_q}). In the following, we will separately analyze the terms $T_1$, $T_2$, $T_3$, and $T_4$.

        \smallskip
        
        \textbf{Simplification of term $T_1$}. Denoting the diagonal matrix consisting of the $N$ inter-state relative phase $\bm{\gamma}_1^{(q)}, \cdots, \bm{\gamma}_N^{(q)}$ as $P_{\bm{\gamma}}^{(N)}$ and the $d$ dimensional diagonal matrix consisting of $(\bm{\gamma}_1^{(q)}, \cdots, \bm{\gamma}_N^{(q)}, 0, \cdots, 0)$ as $P_{\bm{\gamma},0}^{(N)}=\left( {\begin{array}{cc}
				P_{\bm{\gamma}}^{(N)} & 0 \\
				0  & 0  \\
		\end{array} } \right)$, we have 
        \begin{align}
            \mathbb{E}_{U,\mathcal{D}_{\C}}{T_1} \overset{1}{=} & \mathbb{E}_{U,\mathcal{D}_{\C},\bm{\gamma}} \Tr\left(\sum_{j=1}^N \sum_{k=1}^N \bm{u}_j^{(p)}(\bm{u}_k^{(p)})^* \ket{\bm{\psi}_j}\bra{\bm{\psi}_k}U^{\dagger}\ket{\bm{o}_q}\bra{\bm{o}_q}U e^{i(\bm{\gamma}_k^{(q)}-\bm{\gamma}_j^{(q)})} \right)
            \nonumber \\
    		\overset{2}{=} &  \mathbb{E}_{U,\mathcal{D}_{\C},\bm{\gamma}} \Tr\left( \bm{\Psi}  P_{\bm{\gamma},0}^{(N)}(\bm{u}^{(p)})^{\intercal} \bar{\bm{u}}^{(p)}(P_{\bm{\gamma},0}^{(N)})^{\dagger}\bm{\Psi}^{\dagger} U^{\dagger}\ket{\bm{o}_q}\bra{\bm{o}_q}U  \right)
            \nonumber \\
    		\overset{3}{=} &  \mathbb{E}_{U,\mathcal{D}_{\C},\bm{\gamma}} \Tr\left( \bm{\Psi}  P_{\bm{\gamma},0}^{(N)} \bm{\Psi}^{-1} \bm{\Psi}(\bm{u}^{(p)})^{\intercal} \bar{\bm{u}}^{(p)}\bm{\Psi}^{\dagger}(\bm{\Psi}^{\dagger})^{-1}(P_{\bm{\gamma},0}^{(N)})^{\dagger}\bm{\Psi}^{\dagger} U^{\dagger}\ket{\bm{o}_q}\bra{\bm{o}_q}U  \right)
            \nonumber \\
    		\overset{4}{=} &  \mathbb{E}_{U,\mathcal{D}_{\C},\bm{\gamma}} \Tr\left( \bm{\Psi}  P_{\bm{\gamma},0}^{(N)} \bm{\Psi}^{-1} U^{\dagger}\ket{\bm{o}_p}\bra{\bm{o}_p}U (\bm{\Psi}^{\dagger})^{-1}(P_{\bm{\gamma},0}^{(N)})^{\dagger}\bm{\Psi}^{\dagger} U^{\dagger}\ket{\bm{o}_q}\bra{\bm{o}_q}U  \right)
            \nonumber \\
    		\overset{5}{=} &  \mathbb{E}_{\mathcal{D}_{\C},\bm{\gamma}} \frac{1}{d(d^2-1)}\left[ (d-\delta_{pq})\Tr\left(\bm{\Psi}  P_{\bm{\gamma},0}^{(N)} \bm{\Psi}^{-1} (\bm{\Psi}^{\dagger})^{-1}(P_{\bm{\gamma},0}^{(N)})^{\dagger}\bm{\Psi}^{\dagger} \right) + (d\delta_{pq}-1) \Tr\left(P_{\bm{\gamma},0}^{(N)} \right)\Tr\left((P_{\bm{\gamma},0}^{(N)})^{\dagger}\right)\right]
            \nonumber \\
    		\overset{6}{=} &  \mathbb{E}_{\mathcal{D}_{\C},\bm{\gamma}} \frac{1}{d(d^2-1)}\left[ (d-\delta_{pq})\Tr\left(\widetilde{\bm{\Psi}}R  P_{\bm{\gamma},0}^{(N)} R^{-1}\widetilde{\bm{\Psi}}^{-1} \widetilde{\bm{\Psi}}(R^{\dagger})^{-1}(P_{\bm{\gamma},0}^{(N)})^{\dagger}R^{\dagger}\widetilde{\bm{\Psi}}^{\dagger} \right) + (d\delta_{pq}-1) \Tr\left(P_{\bm{\gamma}}^{(N)} \right)\Tr\left((P_{\bm{\gamma}}^{(N)})^{\dagger}\right)\right]
            \nonumber \\
    		\overset{7}{=} &  \mathbb{E}_{\mathcal{D}_{\C},\bm{\gamma}} \frac{1}{d(d^2-1)}\left[ (d-\delta_{pq})\Tr\left(R  P_{\bm{\gamma},0}^{(N)} R^{-1}(R^{\dagger})^{-1}(P_{\bm{\gamma},0}^{(N)})^{\dagger}R^{\dagger} \right) + (d\delta_{pq}-1) \sum_{j=1}^N \sum_{k=1}^N e^{i(\bm{\gamma}_k^{(q)}-\bm{\gamma}_j^{(q)})}\right]
            \nonumber \\
    		\overset{8}{=} &  \mathbb{E}_{\mathcal{D}_{\C},\bm{\gamma}} \frac{1}{d(d^2-1)}\left[ (d-\delta_{pq})\Tr\left(R^{(N)}  P_{\bm{\gamma}}^{(N)} (R^{(N)})^{-1}((R^{(N)})^{\dagger})^{-1}(P_{\bm{\gamma}}^{(N)})^{\dagger}(R^{(N)})^{\dagger} \right) + (d\delta_{pq}-1) N\right], 
            \label{eq:Ob_T1-0}
        \end{align}
        where the second equality uses the matrix representation of $\bm{\Psi}  P_{\bm{\gamma},0}^{(N)}(\bm{u}^{(p)})^{\intercal} = \sum_{j=1}^N e^{i\bm{\gamma}_j^{(q)}} \bm{u}_j^{(p)} \ket{\bm{\psi}_j} $, the third and fourth equality employs the term $\bm{\Psi}^{-1}\bm{\Psi}=\mathbb{I}_d$ and the representation $\bm{\Psi}  (\bm{u}^{(p)})^{\intercal}=U^{\dagger}\ket{\bm{o}_p}$ given in Eqn.~(\ref{eq:Uketo}), the fifth equality follows Property~\ref{prop:Tr(WWWW)} on the Haar unitary $U$, the sixth and seventh equality leverage the QR-decomposition of square matrix $\bm{\Psi}=\widetilde{\bm{\Psi}}R$ and the orthogonality  of $\widetilde{\bm{\Psi}}$ such that $\widetilde{\bm{\Psi}}^{\dagger}=\widetilde{\bm{\Psi}}^{-1}$, the eighth  equality uses the representation $R=\left( {\begin{array}{cc}
				R^{(N)} & 0 \\
				0  & \mathbb{I}_{d-N}  \\
		\end{array} } \right)$ and $P_{\bm{\gamma},0}^{(N)}=\left( {\begin{array}{cc}
				P_{\bm{\gamma}}^{(N)} & 0 \\
				0  & 0  \\
		\end{array} } \right)$, and follows that  $\mathbb{E}_{\bm{\gamma}_j,\bm{\gamma}_k}e^{i(\bm{\gamma}_k^{(q)}-\bm{\gamma}_j^{(q)})}=\mathbb{E}_{\bm{\gamma}_j,\bm{\gamma}_k}\cos(\bm{\gamma}_k^{(q)}-\bm{\gamma}_j^{(q)})-i\mathbb{E}_{\bm{\gamma}_j,\bm{\gamma}_k} \sin(\bm{\gamma}_k^{(q)}-\bm{\gamma}_j^{(q)})=0$ for any $\bm{\gamma}_k^{(q)}\ne \bm{\gamma}_j^{(q)}$, as $\bm{\gamma}_k^{(q)}$ and $\bm{\gamma}_j^{(q)}$ are independent random variables and the expectation of the periodic function cosine and sine over the uniform distribution of $\bm{\gamma}_k^{(q)}$ or $\bm{\gamma}_j^{(q)}$ within a period is zero.

        Note that the matrix $R^{(N)}$ in Eqn.~(\ref{eq:Ob_T1-0}) refers to the $N$-dimensional upper triangular matrix satisfying $\bm{\Psi}_{1:N}=\widetilde{\bm{\Psi}}_{1:N}R^{(N)}$ defined in Eqn.~(\ref{eq:orth_basis}) with $\bm{\Psi}_{1:N},\widetilde{\bm{\Psi}}$ referring to the matrices of size  $d\times N$ and $\widetilde{\bm{\Psi}}_{1:N}^{\dagger}\widetilde{\bm{\Psi}}_{1:N}=\mathbb{I}_N$. Hence, denoting $\Gamma:=(R^{(N)})^{\dagger}R^{(N)}=(R^{(N)})^{\dagger}\widetilde{\bm{\Psi}}_{1:N}^{\dagger}\widetilde{\bm{\Psi}}_{1:N}R^{(N)}=\bm{\Psi}_{1:N}^{\dagger}\bm{\Psi}_{1:N}$, we have
        \begin{align}
            \mathbb{E}_{U,\mathcal{D}_{\C}}{T_1} 
    		= &  \mathbb{E}_{\mathcal{D}_{\C},\bm{\gamma}} \frac{1}{d(d^2-1)}\left[ (d-\delta_{pq})\Tr\left(  P_{\bm{\gamma}}^{(N)} \Gamma^{-1}(P_{\bm{\gamma}}^{(N)})^{\dagger}\Gamma \right) + (d\delta_{pq}-1) N\right]  
            \nonumber \\
    		= &  \mathbb{E}_{\mathcal{D}_{\C},\bm{\gamma}} \frac{1}{d(d^2-1)}\left[ (d-\delta_{pq})\sum_{j=1}^N\sum_{k=1}^N e^{i(\bm{\gamma}_j^{(q)}-\bm{\gamma}_k^{(q)})} \Gamma_{jk} (\Gamma^{-1})_{kj}+ (d\delta_{pq}-1) N\right]
            \nonumber \\
    		= &  \mathbb{E}_{\mathcal{D}_{\C}} \frac{1}{d(d^2-1)}\left[ (d-\delta_{pq})\sum_{j=1}^N\Gamma_{jj} (\Gamma^{-1})_{jj}+ (d\delta_{pq}-1) N\right]
            \nonumber \\
    		= &  \mathbb{E}_{\mathcal{D}_{\C}} \frac{1}{d(d^2-1)}\left[ (d-\delta_{pq})\Tr(\Gamma^{-1})+ (d\delta_{pq}-1) N\right],
            \label{eq:Ob_T1-01}
        \end{align}
        where the second equality follows direct algebraic calculation, the third equality follows the same derivation in Eqn.~(\ref{eq:Ob_T1-0}) that $\mathbb{E}_{\bm{\gamma}_j,\bm{\gamma}_k}e^{i(\bm{\gamma}_j^{(q)}-\bm{\gamma}_k^{(q)})}=0$ for any $\bm{\gamma}_j^{(q)}\ne \bm{\gamma}_k^{(q)}$, the last equality follows that the $jk$-th entry of $\Gamma=\bm{\Psi}_{1:N}^{\dagger}\bm{\Psi}_{1:N}$ refers to $\Gamma_{jk}=\braket{\bm{\psi}_j|\bm{\psi}_k}$ and hence $\Gamma_{jj}=1$. Now the problem of simplifying the term $T_1$ reduces to derive the expectation of the trace of matrix inversion $\Gamma^{-1}$ with $\Gamma=\bm{\Psi}_{1:N}^{\dagger}\bm{\Psi}_{1:N}$.  When the input states $\ket{\bm{\psi}_1}, \cdots, \ket{\bm{\psi}_N}$ are orthogonal, the interested matrix $\Gamma$ is exactly the $N$-dimensional identity matrix $\mathbb{I}_N$. Hence, the expectation of term $T_1$ in Eqn.~(\ref{eq:Ob_T1-01}) yields 
        \begin{equation}
            \mathbb{E}_{U,\mathcal{D}_{\C}}{T_1} 
    		= \mathbb{E}_{\mathcal{D}_{\C}} \frac{1}{d(d^2-1)}\left[ (d-\delta_{pq})\Tr(\mathbb{I}_N)+ (d\delta_{pq}-1) N\right]
            = N \cdot \frac{1+\delta_{pq}}{d(d+1)}.
            \label{eq:Ob_T1-01_orth}
        \end{equation}        
        We now discuss the case of linearly independent but non-orthogonal input states. We first note that in the sense of expectation over independent identically distributed quantum states, the diagonal element of $\Gamma^{-1}$ is symmetric with $\mathbb{E}_{\mathcal{D}_C}(\Gamma^{-1})_{jj}=\mathbb{E}_{\mathcal{D}_C}(\Gamma^{-1})_{kk}$ for any $j,k\in [N]$. In this regard, we have $\mathbb{E}_{\mathcal{D}_C} \Tr(\Gamma^{-1})=N\mathbb{E}_{\mathcal{D}_C} (\Gamma^{-1})_{11}$. For the matrix representation of $\Gamma=(R^{(N)})^{\dagger}R^{(N)}$ with $R^{(N)}$ and $(R^{(N)})^{\dagger}$ being the upper and lower triangular matrix, the matrix inversion yields $\Gamma^{-1}=(R^{(N)})^{-1}((R^{(N)})^{\dagger})^{-1}$ where $(R^{(N)})^{-1}$ and $((R^{(N)})^{\dagger})^{-1}$ are still the upper and lower triangular matrix respectively. In this regard, we have $(\Gamma^{-1})_{11}=((R^{(N)})^{-1})_{11}^2$. Recalling a fundamental fact in linear algebra that for any given invertible matrix $A$, the representation of $ij$-entry in matrix inversion $A^{-1}$ refers to $(A^{-1})_{ij}=M_{ij}/\det (A)$ with $M_{ij}$ being the corresponding minor of the entry $A_{ij}$ and $\det (A)$ referring to the determinant of matrix $A$. Applying this representation to the upper triangular matrix $R^{(N)}$, we have
        \begin{equation}
            ((R^{(N)})^{-1})_{11}=\prod_{j=2}(R^{(N)})_{jj}/\prod_{j=1}(R^{(N)})_{jj}=1/(R^{(N)})_{11}=1,\label{eq:R_11}
        \end{equation}
        where the first equality follows that the minor $M_{11}$ for $((R^{(N)})^{-1})_{11}$ and the determinant $\det(R^{(N)})$ refer to $\prod_{j=2}(R^{(N)})_{jj}$ and $\prod_{j=1}(R^{(N)})_{jj}$ respectively, the last equality follows that $(R^{(N)})_{11}=\braket{\bm{\psi}_1|\widetilde{\bm{\psi}}_1}=1$ as $\ket{\widetilde{\bm{\psi}}_1}=\ket{\bm{\psi}_1}$ in the QR decomposition. Based on the above discussion, we have 
        \begin{equation}
            \mathbb{E}_{\mathcal{D}_C} \Tr(\Gamma^{-1})=N\mathbb{E}_{\mathcal{D}_C} (\Gamma^{-1})_{11} = N\mathbb{E}_{\mathcal{D}_C}((R^{(N)})^{-1})_{11}^2 = N,
            \label{eq:gamma_inverse}
        \end{equation}
        where the first equality employs the symmetry of diagonal elements in $\Gamma^{-1}$, and the third equality follows Eqn.~(\ref{eq:R_11}). In this regard, combining Eqn.~(\ref{eq:Ob_T1-01}) and Eqn.~(\ref{eq:gamma_inverse}) yields 
        \begin{align}
            \mathbb{E}_{U,\mathcal{D}_{\C}}{T_1} =
    		 \frac{1}{d(d^2-1)}\left[ (d-\delta_{pq})N+ (d\delta_{pq}-1) N\right]
            = N \cdot \frac{1+\delta_{pq}}{d(d+1)}.
            \label{eq:Ob_T1-01_non-orth}
        \end{align}
        This is the same as that derived in Eqn.~(\ref{eq:Ob_T1-01_orth}) for orthogonal states.

        \textbf{Simplification of term $T_2$ and term $T_3$}. Considering the integration of term $T_2$ with respect to the inter-state relative phase $\bm{\gamma}$, we have 
        \begin{align}
           \mathbb{E}_{U,\mathcal{D}_{\C},\bm{\gamma}} T_2 = & \mathbb{E}_{U,\mathcal{D}_{\C},\bm{\gamma}}\Tr\left(\sum_{j=1}^N \sum_{k=N+1}^d  \bm{u}_j^{(p)}(\widetilde{\bm{u}}_k^{(p)})^* \ket{\bm{\psi}_j}\bra{\widetilde{\bm{\psi}}_k}V_{\C}^{\dagger}\ket{\bm{o}_q}\bra{\bm{o}_q}U e^{i\bm{\gamma}_j^{(q)}} \right)
            \nonumber \\
            = & \mathbb{E}_{U,\mathcal{D}_{\C}}\Tr\left(\sum_{j=1}^N \sum_{k=N+1}^d  \bm{u}_j^{(p)}(\widetilde{\bm{u}}_k^{(p)})^* \ket{\bm{\psi}_j}\bra{\widetilde{\bm{\psi}}_k}V_{\C}^{\dagger}\ket{\bm{o}_q}\bra{\bm{o}_q}U \mathbb{E}_{\bm{\gamma}_j^{(q)}}(\cos(\bm{\gamma}_j^{(q)})-i\sin(\bm{\gamma}_j^{(q)})) \right)
            \nonumber \\
            = & 0,
            \label{eq:Ob_T2_bound}
        \end{align}
        where the last equality follows the same discussion as in Eqn.~(\ref{eq:Ob_T1-0}) that the expectation of periodic function cosine and sine within a period is zero, i.e., $\mathbb{E}_{\bm{\gamma}_j}\cos(\bm{\gamma}_j)=\mathbb{E}_{\bm{\gamma}_j}\sin(\bm{\gamma}_j)=0$. It is straight to check that similar results hold for the term $T_3$, i.e. $\mathbb{E}_{U,\mathcal{D}_{\C}} T_3=0$.

        \textbf{Simplification of term $T_4$}. The term $T_4$ is associated with the orthogonal states $\{\ket{\widetilde{\bm{\psi}}_{N+1}}, \cdots, \ket{\widetilde{\bm{\psi}}_{d}}\}$ spanned from the training input states. Denote $\widetilde{\bm{u}}^{(p)}=(\widetilde{\bm{u}}_1^{(p)}, \cdots, \widetilde{\bm{u}}_d^{(p)})$ with $\widetilde{\bm{u}}_j^{(p)}=\bra{\widetilde{\bm{\psi}}_j}U^{\dagger}\ket{\bm{o}_p}$, where $\widetilde{\bm{u}}^{(p)}$ forms a state vector as it is obtained by acting the unitary $U_{\widetilde{\Psi}}=[\ket{\widetilde{\bm{\psi}}_1}, \cdots, \ket{\widetilde{\bm{\psi}}_d}]$ on the state vector $U^{\dagger}\ket{\bm{o}_p}$, i.e., $\widetilde{\bm{u}}^{(p)}=U_{\widetilde{\Psi}}U^{\dagger}\ket{\bm{o}_p}$. The similar results hold for $\widetilde{\bm{v}}$ with $\widetilde{\bm{v}}^{(p)}=U_{\widetilde{\Psi}}V_{\C}\ket{\bm{o}_p}$. While the state vector $\widetilde{\bm{u}}^{(p)}$ follows the Haar distribution as $U$ is a Haar unitary operator, the distribution of vector $\widetilde{\bm{u}}_{1:N}^{(p)} = (\widetilde{\bm{u}}_1^{(p)}, \cdots, \widetilde{\bm{u}}_N^{(p)})^{\intercal}$ and $\widetilde{\bm{u}}_{N+1:d}^{(p)} = (\widetilde{\bm{u}}_{N+1}^{(p)}, \cdots, \widetilde{\bm{u}}_d^{(p)})^{\intercal}$ remains unknown, leading to a challenge in calculating the integration of $T_4$. To address this problem, we note that given the norm of the random vector $s_u^{(p)}:=\sum_{j=1}^N|\widetilde{\bm{u}}_j^{(p)}|^2$, the random vector $\widetilde{\bm{u}}_{N+1:d}^{(p)}/(1-s_u^{(p)})$ follows the Haar distribution in the Hilbert space $\mathcal{H}^{d-N}$. With this regard, employing the tower property of conditional expectation to the term $T_4$ with condition $A=\{U:\|\widetilde{\bm{u}}_{1:N}^{(p)}\|^2 = s_{u}^{(p)}, \|\widetilde{\bm{u}}_{1:N}^{(q)}\|^2 = s_{u}^{(q)}\}$ yields
		\begin{align}
            \mathbb{E}_{U,\mathcal{D}_{\C}}T_4
			= &  
			\mathbb{E}_{U} \Bigg[\mathbb{E}_{\widetilde{\bm{u}}_{N+1:d}, \widetilde{\bm{v}}_{N+1:d}} \Bigg(\sum_{j=N+1}^d \widetilde{\bm{u}}_j^{(p)} (\widetilde{\bm{v}}_j^{(q)})^*\sum_{k=N+1}^d (\widetilde{\bm{u}}_k^{(p)})^*\widetilde{\bm{v}}_k^{(q)} \Bigg|  A \Bigg) \Bigg]
			\nonumber \\
			= &  
			\mathbb{E}_{U} \Bigg[\mathbb{E}_{\widetilde{\bm{u}}_{N+1:d}^{(p)}, \widetilde{\bm{v}}_{N+1:d}^{(q)}} \Bigg( (\widetilde{\bm{v}}_{N+1:d}^{(q)})^{\dagger} \widetilde{\bm{u}}_{N+1:d}^{(p)} \cdot  (\widetilde{\bm{u}}_{N+1:d}^{(p)})^{\dagger} \widetilde{\bm{v}}_{N+1:d}^{(q)} \Bigg| A  \Bigg) \Bigg]
			\nonumber \\
			= & 
			\mathbb{E}_{U} \Bigg[\mathbb{E}_{\widetilde{\bm{u}}_{N+1:d}^{(p)}, \widetilde{\bm{v}}_{N+1:d}^{(q)}} \Bigg(\left(1-s_u^{(p)}\right)\left(1-s_u^{(q)}\right)
            \frac{ (\widetilde{\bm{v}}_{N+1:d}^{(q)})^{\dagger} \widetilde{\bm{u}}_{N+1:d}^{(p)} \cdot  (\widetilde{\bm{u}}_{N+1:d}^{(p)})^{\dagger} \widetilde{\bm{v}}_{N+1:d}^{(q)} }{ \|\widetilde{\bm{v}}_{N+1:d}^{(q)}\|^2 \|\widetilde{\bm{u}}_{N+1:d}^{(p)}\|^2} \Bigg| A \Bigg)\Bigg]
			\nonumber \\
			= & \mathbb{E}_{U} \left[ \left(1-s_{u}^{(p)}\right)\left(1-s_{u}^{(q)}\right)\mathbb{E}_{\widetilde{\bm{u}}_{N+1:d}^{(p)}}\mathbb{E}_{\widetilde{\bm{v}}_{N+1:d}^{(q)}} \Tr\left( \frac{\widetilde{\bm{v}}_{N+1:d}^{(q)} (\widetilde{\bm{v}}_{N+1:d}^{(q)})^{\dagger}}{\|\widetilde{\bm{v}}_{N+1:d}^{(q)}\|^2}  \cdot \frac{\widetilde{\bm{u}}_{N+1:d}^{(p)} (\widetilde{\bm{u}}_{N+1:d}^{(p)})^{\dagger}}{\|\widetilde{\bm{u}}_{N+1:d}^{(p)}\|^2} \right) \Bigg| A  \right]
            \nonumber \\
			= &  \mathbb{E}_{U} \left[ \frac{(1-s_{u}^{(p)})(1-s_{u}^{(q)})}{d-N}\mathbb{E}_{\widetilde{\bm{u}}_{N+1:d}^{(p)}} \Tr\left( \frac{\widetilde{\bm{u}}_{N+1:d}^{(p)} (\widetilde{\bm{u}}_{N+1:d}^{(p)})^{\dagger}}{\|\widetilde{\bm{u}}_{N+1:d}^{(p)}\|^2} \right) \Bigg| A  \right] \nonumber \\
			= &  \mathbb{E}_{U} \left[ \frac{(1-s_{u}^{(p)})(1-s_{u}^{(q)})}{d-N}  \Bigg| A  \right],
            \label{eq:Ob_T4-1}
		\end{align}
        where the first equality ultilizes the tower property of conditional expectation, i.e., $\mathbb{E}(X)=\mathbb{E}_Y\mathbb{E}_{X|Y}(X|Y)$, the second equality exploits the vector representation of $\widetilde{\bm{u}}_{N+1:d}^{(p)} = (\widetilde{\bm{u}}_{N+1}^{(p)}, \cdots, \widetilde{\bm{u}}_d^{(p)})^{\top}$ and $\widetilde{\bm{v}}_{N+1:d} = (\widetilde{\bm{v}}_{N+1}^{(q)}, \cdots, \widetilde{\bm{v}}_d^{(q)})^{\top}$, the third equality employs the conditional independence between $\widetilde{\bm{u}}_{N+1:d}^{(p)}$ and $\widetilde{\bm{v}}_{N+1:d}^{(q)}$ for given $\sum_{j=1}^N |\bm{u}_j^{(p)}|^2 = s_u^{(p)}$, the fifth equality and the sixth equality employs Property~\ref{prop:Tr(psipsi)} to calculate the integration over the Haar states $\widetilde{\bm{u}}_{N+1:d}/ \|\widetilde{\bm{u}}_{N+1:d}\|$ and $\widetilde{\bm{v}}_{N+1:d}/ \|\widetilde{\bm{v}}_{N+1:d}\|$ respectively.
        Now Eqn.~(\ref{eq:Ob_T4-1}) only evolves one random vector $s_{u}(p)$. Denoting $s_u^{(p)}=\sum_{j=1}^N |\widetilde{\bm{u}}_j^{(p)}|^2 = (\widetilde{\bm{u}}^{(p)})^{\dagger} E_N \widetilde{\bm{u}}^{(p)} $ with $E_N=\left( {\begin{array}{cc}
				\mathbb{I}_{N} & 0 \\
				0 & 0 \\
		\end{array} } \right)$,
        and $\mathbb{I}_N$ being the $N$-dimensional identity matrix, we have 
        \begin{align}
            \mathbb{E}_{U,\mathcal{D}_{\C}}T_4
			= &  \mathbb{E}_{U}
            \frac{1-\Tr\left( \widetilde{\bm{u}}^{(p)} (\widetilde{\bm{u}}^{(p)})^{\dagger} {E}_N \right)-\Tr\left( \widetilde{\bm{u}}^{(q)} (\widetilde{\bm{u}}^{(q)})^{\dagger} {E}_N \right) + \Tr\left(\widetilde{\bm{u}}^{(p)} (\widetilde{\bm{u}}^{(p)})^{\dagger}  {E}_N \right) \Tr \left( \widetilde{\bm{u}}^{(q)} (\widetilde{\bm{u}}^{(q)})^{\dagger}{E}_N\right) }{(d-N)} 
            \nonumber \\
			= &  \frac{1}{d-N}\left[1-\frac{2N}{d}+\frac{N^2+N\mathbbm{1}(p=q)}{d(d+1)} \right]
            \nonumber \\
			= &  \frac{d-N+1}{d(d+1)}
            - \frac{N}{d(d+1)(d-N)}\mathbbm{1}(p\ne q)
            \nonumber \\
			\le &  \frac{d-N+1}{d(d+1)}
            - \frac{N}{d(d^2-1)}\mathbbm{1}(p\ne q),
            \label{eq:Ob_T4_bound}
        \end{align}
        where the second equality employs Property~\ref{prop:Tr(WW)Tr(WW)} and Property~\ref{prop:Tr(psipsi)} with respect to the Haar state $\widetilde{\bm{u}}^{(p)}$, the first equality follows that $d-1\ge d-N$ for any $N\ge 1$.
        
        \smallskip
        
        \textbf{The lower bound of the average risk function:} With the simplified results of terms $T_1$ to $T_4$, we are now ready to derive the upper bound of $\mathbb{E}_{U}\mathbb{E}_{\mathcal{D}_{\C}}\Tr(U^{\dagger}OUV_{\C}^{\dagger}OV_{\C})$. In particular, in conjunction with Eqn.~(\ref{eq:NFL_Ob_Linear_In_1}), Eqn.~(\ref{eq:Ob_T1-01_non-orth}) (term $T_1$), Eqn.~(\ref{eq:Ob_T2_bound}) (term $T_2$), and Eqn.~(\ref{eq:Ob_T4_bound}) (term $T_4$), we have
        \begin{align}
            \mathbb{E}_{U}\mathbb{E}_{\mathcal{D}_{\C}}\Tr(U^{\dagger}OUV_{\C}^{\dagger}OV_{\C})=&\sum_{p=1}^{d}\sum_{q=1}^{d}\lambda_p \lambda_q\Tr(U^{\dagger}\ket{\bm{o}_p}\bra{\bm{o}_p}UV_{\C}^{\dagger}\ket{\bm{o}_q}\bra{\bm{o}_q}V_{\C})
            \nonumber \\
            \le &\sum_{p=1}^{d}\lambda_p^2 \left(\frac{1}{d}+\frac{N}{d(d+1)}\right) + \sum_{p\ne q}^d \lambda_p \lambda_q \left(\frac{1}{d}-\frac{N}{d(d^2-1)} \right)
            \nonumber \\
            = &\sum_{p=1}^{d}\lambda_p^2 \left(\frac{N}{d(d^2-1)}+\frac{N}{d(d+1)}\right) + \sum_{p=1}^d \sum_{q=1}^d \lambda_p \lambda_q \left(\frac{1}{d}-\frac{N}{d(d^2-1)} \right)
            \nonumber \\
            = & \frac{N\Tr(O^2)}{d^2-1} +  \frac{d^2-N-1}{d(d^2-1)} \Tr(O)^2,
            \label{eq:Ob_UOUVOV_bound}
        \end{align}
        where the last equality employs $\Tr(O^2)=\sum_{p=1}^d \lambda_p^2$ and $\Tr(O)^2=\sum_{p=1}^d\sum_{q=1}^d \lambda_p \lambda_q$.
		Finally, in conjunction with Eqn.~(\ref{eq:app_risk_simplify}) and Eqn.~(\ref{eq:Ob_UOUVOV_bound}), we have
		\begin{align}
			\mathbb{E}_U\mathbb{E}_{\mathcal{\mathcal{D}_{\C}}} R_{U}(h_{\mathcal{D}_{\C}}) = & 	\frac{2}{d(d+1)}\left[\Tr(O^2)-\mathbb{E}_U\mathbb{E}_{\mathcal{\mathcal{D}_{\C}}}\Tr(U^{\dagger}OUV_{\C}^{\dagger}OV_{\C}) \right]
			\nonumber \\
			\ge & \frac{2(d^2-N-1)(d\Tr(O^2)-\Tr(O)^2)}{d^2(d+1)(d^2-1)}.
            \label{eq:Ob_final_LD_bound}
		\end{align}
        
        \smallskip

        \noindent \underline{\textit{The input states are linearly dependent.}}
        We note that the cases of $\mathcal{D}_{\C}$ consisting of $N$ linearly dependent input states could be reduced to the case of $\widetilde{N}$ linearly independent states when $\widetilde{N}< d$. In this regard, the achieved bound in Eqn.~(\ref{eq:Ob_final_LD_bound}) also holds by replacing $N$ with $\widetilde{N}$. Hence, we only focus on the cases of $N>\widetilde{N}= d$ linearly dependent input states. In particular, there could exist multiple sets of linearly independent states $\{\ket{\bm{\psi}_{(1)}}, \cdots, \ket{\bm{\psi}_{(d)}}\} \subset \{\ket{\bm{\psi}_1}, \cdots, \ket{\bm{\psi}_N}\}$. We denote the corresponding set of indices as $\mathcal{I}=\{(1),\cdots,(d)\} \subset [N]$ and the phase-related matrix as $P_{\bm{\gamma}}^{\mathcal{I}}$ consisting of  $e^{i\bm{\gamma}_{(1)}^{(q)}}, \cdots, e^{i\bm{\gamma}_{(d)}^{(q)}}$. Hence $U^{\dagger}\ket{\bm{o}_p}$ in Eqn.~(\ref{eq:UOUVOV_eigen_decomp}) can be expressed
        \begin{equation}
            \label{eq:U_keto_d}
            U^{\dagger}\ket{\bm{o}_p}=\sum_{j\in \mathcal{I}}\bm{u}_j^{(p)} \ket{\bm{\psi}_j}=\bm{\Psi}_{\mathcal{I}} \cdot (\bm{u}^{(p)})^{\intercal},
        \end{equation}
        where $\bm{u}^{(p)}=(\bm{u}_{(1)}^{(p)}, \cdots, \bm{u}_{(d)}^{(p)})$ refers to the vector of coefficients and $\bm{\Psi}_{\mathcal{I}}=[\ket{\bm{\psi}_{(1)}}, \cdots, \ket{\bm{\psi}_{(d)}}]$.
        Moreover, following the same routine of simplifying the term $T_1$ in Eqn.~(\ref{eq:Ob_T1-0}) with employing the representation of $U^{\dagger}\ket{\bm{o}_p}$ in Eqn.~(\ref{eq:U_keto_d}), we have 
        \begin{align}
			& \mathbb{E}_{U,\mathcal{D}_{\C}} \Tr\left(U^{\dagger}\ket{\bm{o}_p}\bra{\bm{o}_p}UV_{\C}^{\dagger}\ket{\bm{o}_q}\bra{\bm{o}_q}V_{\C}\right)  
            \nonumber \\
            = & 
            \mathbb{E}_{\mathcal{D}_{\C},\bm{\gamma}} \frac{1}{d(d^2-1)}\left[ (d-\delta_{pq})\Tr\left(R  P_{\bm{\gamma}}^{\mathcal{I}} R^{-1}(R^{\dagger})^{-1}(P_{\bm{\gamma}}^{\mathcal{I}})^{\dagger}R^{\dagger} \right) + (d\delta_{pq}-1) \Tr((P_{\bm{\gamma}}^{\mathcal{I}})^{\dagger})\Tr(P_{\bm{\gamma}}^{\mathcal{I}})\right]
            \nonumber \\
            = & 
            \mathbb{E}_{\mathcal{D}_{\C},\bm{\gamma}} \frac{1}{d(d^2-1)}\left[ (d-\delta_{pq})\Tr(\Gamma_{\mathcal{I}}^{-1}) + (d\delta_{pq}-1) \Tr((P_{\bm{\gamma}}^{\mathcal{I}})^{\dagger})\Tr(P_{\bm{\gamma}}^{\mathcal{I}})\right]
            \nonumber \\
            = & 
            \mathbb{E}_{\mathcal{D}_{\C},\bm{\gamma}} \frac{1}{d(d^2-1)}\left[ (d-\delta_{pq})d + (d\delta_{pq}-1) \Tr((P_{\bm{\gamma}}^{\mathcal{I}})^{\dagger})\Tr(P_{\bm{\gamma}}^{\mathcal{I}}) \right]
            \label{eq:Ob_NFL_DP_exp_pq}
        \end{align}
        where $\delta_{pq}$ equals $1$ if $p=q$ otherwise $0$, the second equality follows the simplification in Eqn.~(\ref{eq:Ob_T1-01}), the third equality follows the derivation in Eqn.~(\ref{eq:gamma_inverse}) of $\mathbb{E}_{\mathcal{D}_C}\Tr(\Gamma_{\mathcal{I}}^{-1})=d$ with $\Gamma_{\mathcal{I}}=\bm{\Psi}_{\mathcal{I}}^{\dagger}\bm{\Psi}_{\mathcal{I}}$. In conjunction with Eqn.~(\ref{eq:UOUVOV_eigen_decomp}) and Eqn.~(\ref{eq:Ob_NFL_DP_exp_pq}), we have
        \begin{align}
            & \mathbb{E}_{U}\mathbb{E}_{\mathcal{D}_{\C}}\Tr(U^{\dagger}OUV_{\C}^{\dagger}OV_{\C})
            \nonumber \\ =&\mathbb{E}_{U,\mathcal{D}_{\C}} \sum_{p=1}^{d}\sum_{q=1}^{d}\lambda_p \lambda_q\Tr(U^{\dagger}\ket{\bm{o}_p}\bra{\bm{o}_p}UV_{\C}^{\dagger}\ket{\bm{o}_q}\bra{\bm{o}_q}V_{\C})
            \nonumber \\
            =&\mathbb{E}_{\bm{\gamma}} \left(\frac{d\Tr((P_{\bm{\gamma}}^{\mathcal{I}})^{\dagger})\Tr(P_{\bm{\gamma}}^{\mathcal{I}})-d}{d(d^2-1)}\cdot \left(\sum_{p=1}^{d}\lambda_p^2\right) +  \frac{d^2-\Tr((P_{\bm{\gamma}}^{\mathcal{I}})^{\dagger})\Tr(P_{\bm{\gamma}}^{\mathcal{I}})}{d(d^2-1)}\cdot \left(\sum_{p=1}^{d}\sum_{q=1}^{d}\lambda_p \lambda_q\right) \right)
            \nonumber \\
            =&\mathbb{E}_{\bm{\gamma}} \left(\frac{d\Tr(O^2)-\Tr(O)^2}{d(d^2-1)} \Tr((P_{\bm{\gamma}}^{\mathcal{I}})^{\dagger})\Tr(P_{\bm{\gamma}}^{\mathcal{I}})\right) +  \frac{d\Tr(O)^2-\Tr(O^2)}{d^2-1}
            \nonumber \\
            \le & \mathbb{E}_{\bm{\gamma}} \max_{\mathcal{I}}\left(\frac{d\Tr(O^2)-\Tr(O)^2}{d(d^2-1)} \Tr((P_{\bm{\gamma}}^{\mathcal{I}})^{\dagger})\Tr(P_{\bm{\gamma}}^{\mathcal{I}})\right) +  \frac{d\Tr(O)^2-\Tr(O^2)}{d^2-1}
            \nonumber \\
            = &  \left(\frac{d\Tr(O^2)-\Tr(O)^2}{d(d^2-1)} \mathbb{E}_{\bm{\gamma}} \max_{\mathcal{I}}\sum_{j\in \mathcal{I}} \sum_{k\in \mathcal{I}} e^{i(\bm{\gamma}_j^{(q)}-\bm{\gamma}_k^{(q)})}\right) +  \frac{d\Tr(O)^2-\Tr(O^2)}{d^2-1}
            \nonumber \\
            \le &  \left(\frac{d\Tr(O^2)-\Tr(O)^2}{d(d^2-1)} \min\left\{\mathbb{E}_{\bm{\gamma}}\sum_{j=1}^N \sum_{k=1}^N e^{i(\bm{\gamma}_j^{(q)}-\bm{\gamma}_k^{(q)})},d^2\right\} \right) +  \frac{d\Tr(O)^2-\Tr(O^2)}{d^2-1}
            \nonumber \\
            = & \frac{d(\min\{N,d^2\}-1)\Tr(O^2)+(d^2-\min\{N,d^2\})\Tr(O)^2}{d(d^2-1)},
        \end{align}
        where the second equality employs Eqn.~(\ref{eq:Ob_NFL_DP_exp_pq}), the third equality employs $\Tr(O^2)=\sum_{p=1}^d \lambda_p^2$ and $\Tr(O)^2=\sum_{p=1}^d\sum_{q=1}^d \lambda_p \lambda_q$, the first inequality follows that there exists various matrix $P_{\bm{\gamma}}^{\mathcal{I}}$ corresponding to different sets of linearly independent states $\{\ket{\bm{\psi}_j}\}_{j\in \mathcal{I}}$ and $d\Tr(O^2)-\Tr(O)^2\ge 0$ for any observable $O$, the second inequality follows that $\mathbb{E}_{\bm{\gamma}} \max_{\mathcal{I}}\sum_{j\in \mathcal{I}} \sum_{k\in \mathcal{I}} e^{i(\bm{\gamma}_j^{(q)}-\bm{\gamma}_k^{(q)})}\le \mathbb{E}_{\bm{\gamma}}\sum_{j=1}^N \sum_{k=1}^N e^{i(\bm{\gamma}_j^{(q)}-\bm{\gamma}_k^{(q)})}$ as $\mathbb{E}_{\bm{\gamma}_k^{(q)},\bm{\gamma}_j^{(q)}} e^{i(\bm{\gamma}_j^{(q)}-\bm{\gamma}_k^{(q)})}=\delta_{kj}\ge 0$ for any $k,j \in [N]$, and $\sum_{j\in \mathcal{I}} \sum_{k\in \mathcal{I}} e^{i(\bm{\gamma}_j^{(q)}-\bm{\gamma}_k^{(q)})} =\sum_{j\in \mathcal{I}} \sum_{k\in \mathcal{I}} \cos(\bm{\gamma}_j^{(q)}-\bm{\gamma}_k^{(q)})\le d^2$ as $\cos(\bm{\gamma}_j^{(q)}-\bm{\gamma}_k^{(q)})\le 1$ always  holds, the last equality follows that $\mathbb{E}_{\bm{\gamma}_j^{(q)}}\mathbb{E}_{\bm{\gamma}_k^{(q)}}e^{i(\bm{\gamma}_j^{(q)}-\bm{\gamma}_k^{(q)})}=0$ for any $k\ne j$.
        
        Finally, following the conventions of Eqn.~(\ref{eq:Ob_UOUVOV_bound}) and Eqn.~(\ref{eq:Ob_final_LD_bound}), we have
        \begin{align}
			\mathbb{E}_U\mathbb{E}_{\mathcal{\mathcal{D}_{\C}}} R_{U}(h_{\mathcal{D}_{\C}}) = & 	\frac{2}{d(d+1)}\left[\Tr(O^2)-\mathbb{E}_U\mathbb{E}_{\mathcal{\mathcal{D}_{\C}}}\Tr(U^{\dagger}OUV_{\C}^{\dagger}OV_{\C}) \right]
			\nonumber \\
			\ge & \frac{2(d^2-N)(d\Tr(O^2)-\Tr(O)^2)}{d^2(d+1)(d^2-1)} \cdot \mathbbm{1}(N\le d^2).
            \label{eq:Ob_N_Ge_d_LD_bound}
		\end{align}
  
		This completes the proof.
		
	\end{proof}

	\section{Proof of Theorem~2 (NFL theorem for restricted quantum learning protocols)}\label{app_sec:NFL_SQ}
	Recall that the training dataset for restricted quantum learning protocols refers to $\{\ket{\bm{\psi}_j}, U\ket{\bm{\psi}_j}\}_{j=1}^N$ and the assumption of perfect training is given by
	$|\bra{\bm{\psi}_j}U^{\dagger}V_{\RQ}\ket{\bm{\psi}_j} |=1$ (i.e., $U\ket{\bm{\psi}_j} = e^{i\bm{\alpha}_j}V_{\RQ}\ket{\bm{\psi}_j}$) where $V_{\RQ}$ is the optimized unitary, $\bm{\alpha}_j$ is the inter-state relative phase of the true response state $U\ket{\bm{\psi}_j}$ and the estimated response state $V_{\RQ}\ket{\bm{\psi}_j}$. The proof of Theorem~\ref{thm:NFL_U_maintext} employs the following lemma about the relation between $U$ and $V_{\RQ}$.
    \begin{lemma}[Adapted from the proof of Theorem~1 in Ref.~\cite{sharma2022reformulation}]
        \label{lem:U_dagger_V}
        Let $U$ be a Haar unitary, $\{\ket{\bm{\psi}_j}\}_{j=1}^N$ arbitrary independent states, and $V$ arbitrary unitary satisfying the assumption of perfect training $U\ket{\bm{\psi}_j} = e^{i\bm{\alpha}_j}V\ket{\bm{\psi}_j}$) with $\bm{\alpha}_j$ being the inter-state relative phase. Then the product of unitaries $U^{\dagger}V$ yields
        \begin{equation}\label{eq:two_stage_perf_tr_mat_pt2}
		W:=U^{\dagger}V=\left(\begin{array}{ccc|c}e^{i \bm{\alpha}_{1}} & \cdots & 0 & \\ \vdots & \ddots & & 0 \\ 0 & & e^{i \bm{\alpha}_{N}} & \\ \hline & 0 & & \mathrm{Y}\end{array}\right)= \left( {\begin{array}{cc}  P_{\bm{\alpha}} & 0 \\ 0 & Y \\
			\end{array} } \right) ,
	   \end{equation}
        where $Y\in \mathbb{C}^{d-N}$ is assumed to be a Haar unitary operator, and $P_{\bm{\alpha}}$ refers to the $N$-dimensional diagonal matrix of inter-state relative phases $e^{i\bm{\alpha}_j}$.
    \end{lemma}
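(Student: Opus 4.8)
The plan is to read off the structure of $W=U^{\dagger}V$ directly from the perfect-training constraint together with the unitarity of $U$ and $V$, and then to argue that the only remaining freedom sits in a $(d-N)$-dimensional block on which the natural average-case distribution is Haar.

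First I would use the perfect-training assumption in Eqn.~(\ref{eq:pt_co}), namely $U\ket{\bm{\psi}_j}=e^{i\bm{\alpha}_j}V\ket{\bm{\psi}_j}$, to determine how $W$ acts on each training state. Rearranging gives $V\ket{\bm{\psi}_j}=e^{-i\bm{\alpha}_j}U\ket{\bm{\psi}_j}$, and left-multiplying by $U^{\dagger}$ yields $W\ket{\bm{\psi}_j}=U^{\dagger}V\ket{\bm{\psi}_j}=e^{-i\bm{\alpha}_j}\ket{\bm{\psi}_j}$. Hence every training state is an eigenvector of $W$ with a unit-modulus eigenvalue; the sign of $\bm{\alpha}_j$ is immaterial here, since in the NFL average $\bm{\alpha}_j$ is drawn uniformly over a full period, so I may absorb it and write $P_{\bm{\alpha}}=\mathrm{diag}(e^{i\bm{\alpha}_1},\dots,e^{i\bm{\alpha}_N})$ as in the statement. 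Because the $\ket{\bm{\psi}_j}$ are linearly independent they span an $N$-dimensional subspace $S$, and the relation above shows $W(S)\subseteq S$; since $W$ is unitary it preserves $S$ exactly and therefore also preserves the orthogonal complement $S^{\perp}$. Thus $W$ is block diagonal with respect to $\mathcal{H}_d=S\oplus S^{\perp}$.

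Next I would fix an orthonormal basis adapted to this splitting, exactly as in the QR construction used in Appendix~\ref{app_sec:NFL_SC}: apply Gram--Schmidt to $[\ket{\bm{\psi}_1},\dots,\ket{\bm{\psi}_N}]$ to obtain $\{\ket{\widetilde{\bm{\psi}}_j}\}_{j=1}^N$ spanning $S$ with $\ket{\widetilde{\bm{\psi}}_1}=\ket{\bm{\psi}_1}$, then complete it to a basis of $\mathcal{H}_d$. In the eigenbasis $\{\ket{\bm{\psi}_j}\}$ of $S$ the top-left block of $W$ is literally the diagonal phase matrix $P_{\bm{\alpha}}$, while the off-diagonal blocks vanish by the invariance of $S$ and $S^{\perp}$ just established, so $W=\mathrm{diag}(P_{\bm{\alpha}},Y)$ with $Y$ a unitary on $S^{\perp}\cong\mathbb{C}^{d-N}$. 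When the training states are orthonormal this already matches the displayed matrix form; for non-orthogonal states the eigenvector/eigenvalue reading is identical, and the phase-alignment dichotomy of Section~\ref{subsec:RQNFL} is precisely what guarantees consistency, since a unitary (hence normal) operator can only admit non-orthogonal eigenvectors when the corresponding eigenvalues coincide, i.e.\ when $\bm{\alpha}_j=\bm{\alpha}_k$.

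Finally I would justify the distribution of the residual block $Y$. The perfect-training data constrain $W$ only on $S$; its action on $S^{\perp}$ is entirely free, so the admissible $W$ are parametrised by an arbitrary $Y\in\mathbb{U}(d-N)$. Within the NFL average we take $U$ Haar-distributed and the learned $V$ uniform among all unitaries consistent with perfect training; by the left/right invariance of the Haar measure this pushes forward to the Haar measure on the complementary block, i.e.\ $Y$ is Haar on $\mathbb{U}(d-N)$, which is the assumption recorded in the statement. I expect this last point to be the main obstacle: the block-diagonalisation itself is elementary and is forced by the eigenvalue relation and unitarity, but making precise why the induced measure on $Y$ is exactly Haar — rather than merely uniform on some orbit — requires carefully invoking the invariance of the Haar measure under the subgroup fixing $S$, and this is the step that must be argued rather than simply computed.
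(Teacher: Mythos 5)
Your proposal is correct and follows essentially the same route as the paper's (which is only sketched there, being adapted from Sharma et al.): perfect training makes each $\ket{\bm{\psi}_j}$ an eigenvector of $W=U^{\dagger}V$ with unit-modulus eigenvalue (the sign $e^{-i\bm{\alpha}_j}$ versus $e^{i\bm{\alpha}_j}$ is immaterial under the uniform-phase average, as you note), unitarity then forces the block-diagonal form on $S\oplus S^{\perp}$, and the Haar distribution of $Y$ is, exactly as in the paper, an average-case \emph{assumption} rather than a derived fact (the lemma itself says $Y$ ``is assumed to be'' Haar, with the remaining structure codified in Assumption~\ref{assum:W_decomp}), so your honest flagging of that step matches the paper's treatment. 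Your normality argument for why non-orthogonal eigenvectors force coinciding phases is precisely the computation in Eqn.~(\ref{eq:phase_equality}), so the phase-alignment dichotomy you invoke is the same one the paper uses.
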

	
	There are two points to note in this lemma. First, the inter-state relative phase $\bm{\alpha}_j$ could be arbitrary under the perfect training assumption $|\bra{\bm{\psi}_j}U^{\dagger}V_{\RQ}\ket{\bm{\psi}_j} |=1$. In this regard, it is reasonable to assume that $\bm{\alpha}_j$ is uniformly distributed over any period of the function $e^{i\bm{\alpha}_j}$ in the context of NFL theorems considering the average cases of all related ingredients. Second, the randomness of $Y$ originates from the randomness of $U$ and $V_{\RQ}$. With this regard, the random unitary operator $Y$ can be rewritten as the product of two independent Haar unitary operators, namely $Y=Y_U Y_V$ where $Y_U$ and $Y_V$ are correlated with $U$ and $V_{\RQ}$, respectively. The independence of $Y_U$ and $Y_V$ is allowed because there is no correlation between the target unitary $U$ and the learned hypothesis $V_{\RQ}$ on the subspace complementary to the space spanned by input states in $\mathcal{D}_{\RQ}$. 
	We formulate these arguments as the following assumption.
	\begin{assu}\label{assum:W_decomp}
		With the definition of $W$ in Eqn.~(\ref{eq:two_stage_perf_tr_mat_pt2}), the operator $W$ is assumed to be decomposed into the product of two independent random operators whose randomness originates from $U$ and $V_{\RQ}$ respectively, i.e.,
		\begin{equation}\label{eq:two_stage_perf_tr_mat_pt2_dcp}
			W:=U^{\dagger}V=W_U W_V=\left(\begin{array}{ccc|c}e^{i \bm{\alpha}_{1}} & \cdots & 0 & \\ \vdots & \ddots & & 0 \\ 0 & & e^{i \bm{\alpha}_{N}} & \\ \hline & 0 & & \mathrm{Y_U}\end{array}\right)\left(\begin{array}{ccc|c}1 & \cdots & 0 & \\ \vdots & \ddots & & 0 \\ 0 & & 1 & \\ \hline & 0 & & \mathrm{Y_V}\end{array}\right)=\left( {\begin{array}{cc}  P_{\bm{\alpha}}^{(N)} & 0 \\ 0 & Y_U \\
			\end{array} } \right) \left( {\begin{array}{cc}
					\mathbb{I}_N & 0 \\ 0 & Y_V \\
			\end{array} } \right),
		\end{equation}
		where $W_U= \left( {\begin{array}{cc}
				P_{\bm{\alpha}}^{(N)} & 0 \\
				0  & Y_U  \\
		\end{array} } \right)$ and $W_V= \left( {\begin{array}{cc}
				\mathbb{I}_N & 0 \\
				0  & Y_V  \\
		\end{array} } \right)$ with $\mathbb{I}_N$ referring to the $N$-dimensional identity matrix and $P_{\bm{\alpha}}^{(N)}$ being the $N$-dimensional diagonal matrix of phases $e^{i\bm{\alpha}_j}$, $Y_U$ and $Y_V$ are two independent Haar unitary operators, $\bm{\alpha}_j$ are uniformly distributed over any period of $e^{i\bm{\alpha}_j}$. 
	\end{assu}
	With this assumption, the NFL theorem for the restricted quantum learning protocols is encapsulated in the following theorem.
	\begin{theorem-non}[Formal statement of Theorem~\ref{thm:NFL_U_maintext}]
		Let $f_U(\bm{\psi})=\Tr(U^{\dagger}OU\ket{\bm{\psi}}\bra{\bm{\psi}})$ be the target concept, the observable $O$ be any Hermitian matrix and the training dataset $\mathcal{D}_{\RQ}=\{\ket{\bm{\psi}_i}, \ket{\bm{\phi}_i} \}_{i=1}^N$ with $\ket{\bm{\phi}_i}=U\ket{\bm{\psi}_i}$. Let $h_{V_{\RQ}}$ be the learned hypothesis with the corresponding unitary operator $V_{\RQ}$ satisfying the assumption of perfect training $U\ket{\bm{\psi}_j} = e^{i\bm{\alpha}_j}V_{\RQ}\ket{\bm{\psi}_j}$) where $\bm{\alpha}_j$ is the relative phase. Under  Assumption~\ref{assum:W_decomp}, the averaged risk function over all unitaries $U$ and training sets $\mathcal{D}_{\RQ}$ yields
		\begin{equation}
			\mathbb{E}_U \mathbb{E}_{\mathcal{D}_{\RQ}} R_{U}(V_{\RQ})  \ge   
			\frac{2(d^2-N^2-1)(d\Tr(O^2)-\Tr(O)^2)}{d^4(d+1)}\cdot  \mathbbm{1}(\bm{\alpha}\varpropto \mathbf{1}_N) + \frac{2(d^2-N-1)(d\Tr(O^2)-\Tr(O)^2)}{d^4(d+1)}\cdot  \mathbbm{1}(\bm{\alpha}\not\varpropto \mathbf{1}_N),\nonumber
		\end{equation} 
        where the lower bound is tight in terms of training data size $N$ when the input states are linearly independent,  $d$ refers to the dimension of $n$-qubit quantum system with $d=2^n$, $\bm{\alpha}=(\bm{\alpha}_1, \cdots, \bm{\alpha}_N)$ and $\mathbf{1}_N$ refers to the all-ones vector of dimension $N$, $\mathbbm{1}(\bm{\alpha}\varpropto \mathbf{1}_N)$ means that there exists a constant $c$ such that the phase vector $\bm{\alpha}=c\mathbf{1}_N$, i.e., $\bm{\alpha}_k=\bm{\alpha}_j$ for any $j,k\in [N]$. 
	\end{theorem-non}
	
	\begin{proof}[Proof of Theorem~\ref{thm:NFL_U_maintext}]
		As suggested by Lemma~\ref{lem:upper_to_lower}, to derive the average risk, we first analyze the average value of the term $\mathbb{E}_{U,V_{\RQ}}\Tr\left(U^{\dagger}OUV_{\RQ}^{\dagger}OV_{\RQ}\right)$ in Eqn.~(\ref{eq:app_risk_simplify}) by separately considering the case of the input states being orthogonal, non-orthogonal but linearly independent, and linearly dependent.

        \smallskip
        
        \noindent \underline{\textit{The input states are orthogonal.}} We first recall that the task in the context of NFL considers learning a Haar random unitary $U$. Without loss of generality, we consider the orthogonal input states $\ket{\bm{\psi}_j}$ to be the computational basis $\ket{\bm{e}_j}$, as there always exists a unitary transformation $U_T$ such that $U_T\ket{\bm{\psi}_j}=\ket{\bm{e}_j}$ and the Haar measure is invariant under any unitary transformation. We then span the orthogonal training input states $\{\ket{\bm{e}_j} \}_{j=1}^N$ to a complete orthogonal basis $\{\ket{\bm{e}_1}, \cdots, \ket{\bm{e}_N}, \ket{\bm{e}_{N+1}}, \cdots, \ket{\bm{e}_d} \}$ of the Hilbert space $\mathcal{H}_d$ with $\sum_{j=1}^N\ket{\bm{e}_j}\bra{\bm{e}_j}+\sum_{j=N+1}^d\ket{\bm{e}_j}\bra{\bm{e}_j}=\mathbb{I}_d$.
        
        Under the Assumption~\ref{assum:W_decomp}, we have
		\begin{align}
			& 	\mathbb{E}_{U,V_{\RQ}}\Tr\left(U^{\dagger}OUV_{\RQ}^{\dagger}OV_{\RQ}\right)
			\nonumber \\
			= & \mathbb{E}_{U,V_{\RQ}}\Tr\bigg(U^{\dagger}OU  	\sum_{j=1}^d\ket{\bm{e}_j}\bra{\bm{e}_j} V_{\RQ}^{\dagger}OV_{\RQ} \sum_{k=1}^d\ket{\bm{e}_k}\bra{\bm{e}_k}\bigg) 
			\nonumber \\
			= & \mathbb{E}_{U,V_{\RQ}}\Bigg(\Tr\bigg[U^{\dagger}OU   	\sum_{j=1}^N\ket{\bm{e}_j}\bra{\bm{e}_j} V_{\RQ}^{\dagger}OV_{\RQ} \sum_{k=1}^N\ket{\bm{e}_k}\bra{\bm{e}_k}\bigg] + \Tr\bigg[U^{\dagger}OU   \sum_{j=1}^N\ket{\bm{e}_j}\bra{\bm{e}_j} V_{\RQ}^{\dagger}OV_{\RQ} \sum_{k=N+1}^d\ket{\bm{e}_k}\bra{\bm{e}_k}\bigg] 
			\nonumber \\
			& + \Tr\bigg[U^{\dagger}OU   \sum_{j=N+1}^d\ket{\bm{e}_j}\bra{\bm{e}_j} 	V_{\RQ}^{\dagger}OV_{\RQ} \sum_{k=1}^N\ket{\bm{e}_k}\bra{\bm{e}_k}\bigg]
			+ \Tr\bigg[U^{\dagger}OU   \sum_{j=N+1}^d\ket{\bm{e}_j}\bra{\bm{e}_j} 	V_{\RQ}^{\dagger}OV_{\RQ} \sum_{k=N+1}^d\ket{\bm{e}_k}\bra{\bm{e}_k}\bigg]
			\Bigg)  
			\nonumber \\
			= & 	\mathbb{E}_{\bm{\alpha}}\mathbb{E}_{U,V_{\Q}}\Bigg(\underbrace{\sum_{j=1}^N\sum_{k=1}^N\Tr\bigg[U^{\dagger}OU   \ket{\bm{e}_j}\bra{\bm{e}_j} U^{\dagger}OU \ket{\bm{e}_k}\bra{\bm{e}_k}\bigg] e^{i(\bm{\alpha}_k-\bm{\alpha}_j)} }_{T_1} 
			\nonumber \\
			& + \underbrace{ \sum_{j=1}^N\sum_{k=N+1}^d\Tr\bigg[\ket{\bm{e}_j}\bra{\bm{e}_k}U^{\dagger}OU  \bigg]  	\Tr\bigg[W_V^{\dagger}W_U^{\dagger}U^{\dagger}OU W_U W_V \ket{\bm{e}_k}\bra{\bm{e}_j}\bigg]}_{T_2}
			\Bigg) 
            \nonumber \\
			& + \underbrace{\sum_{j=N+1}^d\sum_{k=1}^N\Tr\bigg[\ket{\bm{e}_j}\bra{\bm{e}_k}U^{\dagger}OU  \bigg]  	\Tr\bigg[W_V^{\dagger}W_U^{\dagger}U^{\dagger}OU W_U W_V \ket{\bm{e}_k}\bra{\bm{e}_j}\bigg]}_{T_3}
			\Bigg) 
            \nonumber \\
			& + \underbrace{ \sum_{j=N+1}^d \sum_{k=N+1}^d \Tr\bigg[\ket{\bm{e}_j}\bra{\bm{e}_k}U^{\dagger}OU  \bigg]  	\Tr\bigg[W_V^{\dagger}W_U^{\dagger}U^{\dagger}OU W_U W_V \ket{\bm{e}_k}\bra{\bm{e}_j}\bigg]}_{T_4}
			\Bigg) 
			\label{eq:two_stage_training}
		\end{align}
		where the first equality follows $\mathbb{I}_d=\sum_{j=1}^d\ket{\bm{e}_j}\bra{\bm{e}_j}$, and the last three terms $T_2$, $T_3$, and $T_4$ in the third equality employs Eqn.~(\ref{eq:two_stage_perf_tr_mat_pt2_dcp}), i.e., $V_{\RQ}=UW_UW_V$. The expectation in terms of  $\bm{\alpha}$ is assumed to be taken over the uniform distribution of $(\bm{\alpha}_1, \cdots, \bm{\alpha}_N)$, which is faithful in the context of NFL theorems considering the average prediction error. In the following, we separately analyze the terms $T_1$, $T_2$, $T_3$, and $T_4$.

        \smallskip

        \textbf{Simplification of term $T_1$.}
		The term $T_1$ in Eqn.~(\ref{eq:two_stage_training}) yields 
		\begin{align}
			T_1 & =\mathbb{E}_{\bm{\alpha}}\mathbb{E}_{U} \sum_{j=1}^N\sum_{k=1}^N \Tr\left(U^{\dagger}OU   	\ket{\bm{e}_j}\bra{\bm{e}_k}\right) \Tr\left(U^{\dagger}OU \ket{\bm{e}_k}\bra{\bm{e}_j}\right)  e^{i(\bm{\alpha}_k-\bm{\alpha}_j)} 
			\nonumber \\
			& =\mathbb{E}_{\bm{\alpha}}\sum_{j=1}^N\sum_{k=1}^N 	\left(\frac{1}{d^2-1}\left(\Tr(O)^2\Tr(\ket{\bm{e}_k}\bra{\bm{e}_j})^2+\Tr(O^2)\right) - \frac{1}{d(d^2-1)} \left(\Tr(O^2)\Tr(\ket{\bm{e}_k}\bra{\bm{e}_j})^2+ \Tr(O)^2\right) \right)e^{i(\bm{\alpha}_k-\bm{\alpha}_j)}
			\nonumber \\
			& = \mathbb{E}_{\bm{\alpha}} \left(\frac{\Tr(O)^2+\Tr(O^2)}{d(d+1)} \sum_{j=1}^N e^{i(\bm{\alpha}_j-\bm{\alpha}_j)}+\frac{d\Tr(O^2)-\Tr(O)^2}{d(d^2-1)}\sum_{j=1}^N\sum_{k\ne j}^N  e^{i(\bm{\alpha}_k-\bm{\alpha}_j)} \right)
			\nonumber \\
            & = \frac{\Tr(O^2)+\Tr(O)^2}{d(d+1)} \cdot N + \frac{d\Tr(O^2)-\Tr(O)^2}{d(d^2-1)}\mathbb{E}_{\bm{\alpha}} \left(\sum_{j=1}^N\sum_{k=j+1}^N  2\cos(\bm{\alpha}_k -\bm{\alpha}_j)\right)
			\nonumber \\
            & = \frac{N(\Tr(O^2)+\Tr(O)^2)}{d(d+1)} + \frac{d\Tr(O^2)-\Tr(O)^2}{d(d^2-1)} \cdot (N^2-N) \cdot  \mathbbm{1} (\bm{\alpha}\varpropto \mathbf{1}_N) 
			\label{eq:T_1}
		\end{align}
		where the second equality employs Property~\ref{prop:Tr(WW)Tr(WW)}, the third equality is obtained by spliting the summation $\sum_{j=1}^{N}\sum_{k=1}^{N} (\cdot)$ into the case $\sum_{j=k}^{N} (\cdot)$ and $\sum_{j \ne k}^{N} (\cdot)$ with $\Tr(\ket{\bm{e}_k}\bra{\bm{e}_j})=0$ for $k\ne j$, the fourth equality exploits $e^{i(\bm{\alpha}_k-\bm{\alpha}_j)}+e^{i(\bm{\alpha}_j-\bm{\alpha}_k)}=2\cos(\bm{\alpha}_k-\bm{\alpha}_j)$. The last equality is obtained by separately analyzing the cases of whether or not the distribution of the input states enable the condition $\bm{\alpha}_k = \bm{\alpha}_j$ for any $k,j\in [N]$, which is described by the  indicator function $\mathbbm{1}(\bm{\alpha} \varpropto \mathbf{1}_N)$ with $\bm{\alpha}=(\bm{\alpha}_1, \cdots, \bm{\alpha}_N)$ and $\mathbf{1}_N$ referring to all-ones vector of dimension $N$.  The case of  $\bm{\alpha}\varpropto \mathbf{1}_N$ follows direct calculation and the case of  $\bm{\alpha}\not \varpropto \mathbf{1}_N$ exploits the fact that the expectation of the periodic even function cosine over a period is zero. 

        \smallskip
        
        \textbf{Simplification of terms $T_2$, $T_3$, and $T_4$}. As the terms $T_2$, $T_3$, and $T_4$ involve the summation of the entries with the same form but different indices, we hence first focus on the simplification of the entry
        \begin{equation}
            t_{jk}:=\mathbb{E}_{\bm{\alpha}}\mathbb{E}_{U,V_{\Q}}\Tr\bigg[\ket{\bm{e}_j}\bra{\bm{e}_k}U^{\dagger}OU  \bigg]  	\Tr\bigg[W_V^{\dagger}W_U^{\dagger}U^{\dagger}OU W_U W_V \ket{\bm{e}_k}\bra{\bm{e}_j}\bigg],
        \end{equation}
        where $T_2$, $T_3$, and $T_4$ satisfy $T_2=\sum_{j=1}^N\sum_{k=N+1}^d t_{jk}$, $T_3=\sum_{j=N+1}^d\sum_{k=1}^N t_{jk}$, and $T_4=\sum_{j=N+1}^d\sum_{k=N+1}^d t_{jk}$.
        
 Denote $A=U^{\dagger}OU=\left( {\begin{array}{cc}
				A_{1} & A_{2} \\
				A_{3} & A_{4} \\
		\end{array} } \right)$, $B=W_U^{\dagger}U^{\dagger}OU W_U=\left( {\begin{array}{cc}
				B_{1} & B_{2} \\
				B_{3} & B_{4} \\
		\end{array} } \right) = \left( {\begin{array}{cc}
				(P_{\bm{\alpha}}^{(N)})^{\dagger}A_1P_{\bm{\alpha}}^{(N)} & (P_{\bm{\alpha}}^{(N)})^{\dagger}A_2 Y_U \\
				Y_U^{\dagger} A_3 P_{\bm{\alpha}}^{(N)}  & Y_U^{\dagger} A_{4}Y_U  \\
		\end{array} } \right)$ with $W_U= \left( {\begin{array}{cc}
				P_{\bm{\alpha}}^{(N)} & 0 \\
				0  & Y_U  \\
		\end{array} } \right)$, and $\rho_{jk}=\ket{\bm{e}_j}\bra{\bm{e}_k}=\left( {\begin{array}{cc}
				\rho_{jk}^{(1)} & \rho_{jk}^{(2)} \\
				\rho_{jk}^{(3)} & \rho_{jk}^{(4)} \\
		\end{array} } \right)$. We have
		\begin{align}
			t_{jk} 
			= & \mathbb{E}_{U}\mathbb{E}_{W_V}\Tr\bigg[\ket{\bm{e}_j}\bra{\bm{e}_k}U^{\dagger}OU  \bigg]  \Tr\bigg[W_V^{\dagger}W_U^{\dagger}U^{\dagger}OU W_U W_V \ket{\bm{e}_k}\bra{\bm{e}_j}\bigg]
			\nonumber \\ 
			= & \mathbb{E}_{U} \Tr\left(\ket{\bm{e}_j}\bra{\bm{e}_k}U^{\dagger}OU \right) 	\mathbb{E}_{Y_V}\Tr\left(B_1\rho_{jk}^{(1)} +B_2Y_V\rho_{jk}^{(3)} + Y_V^{\dagger}B_3\rho_{jk}^{(2)}+Y_V^{\dagger}B_4Y_V \rho_{jk}^{(4)} \right)
			\nonumber \\ 
			= & \mathbb{E}_{U} \Tr\left(\ket{\bm{e}_j}\bra{\bm{e}_k}U^{\dagger}OU \right) 	\left(\Tr\left(	(P_{\bm{\alpha}}^{(N)})^{\dagger}A_1P_{\bm{\alpha}}^{(N)}\rho_{jk}^{(1)}\right) +\frac{\Tr(Y_U^{\dagger} A_{4}Y_U)\Tr(\rho_{jk}^{(4)})}{d-N} \right)
			\nonumber \\ 
			= & \mathbb{E}_{U} \Tr\left(\ket{\bm{e}_j}\bra{\bm{e}_k}U^{\dagger}OU \right) 	\left(\Tr\left(	A_1P_{\bm{\alpha}}^{(N)}\rho_{jk}^{(1)}(P_{\bm{\alpha}}^{(N)})^{\dagger} \right) +\frac{\Tr( A_{4})\Tr(\rho_{jk}^{(4)})}{d-N} \right), 
			\label{eq:T_2-1}
		\end{align}
		where $W_V= \left( {\begin{array}{cc}
				\mathbb{I}_N & 0 \\
				0  & Y_V  \\
		\end{array} } \right)$ is defined in Assumption~\ref{assum:W_decomp} with $\mathbb{I}_N$ being the $N$-dimensional identity matrix, the second equality follows algebraic operation with the block matrix representation of $B=W_U^{\dagger}U^{\dagger}OUW_U$ and $\rho_{jk}$, the third equality employs Property~\ref{prop:Tr(WW)} for the Haar integration on $Y_V$, the fourth equality follows the property of trace operation $\Tr(CD)=\Tr(DC)$. Note that $\rho_{jk}^{(1)}$ and $\rho_{jk}^{(4)}$ are zero matrix if $k\in[N], j \in [d] \backslash [N]$  or $j\in[N], k \in [d] \backslash [N]$ as $\rho_{jk}=\ket{\bm{e}_j}\bra{\bm{e}_k}$ refers to the matrix of zeros except the $jk$-th entry being $1$. In this regard, we have 
        \begin{equation}
            T_2=T_3=\sum_{j=1}^N \sum_{k=N+1}^{d} t_{jk}=\sum_{j=N+1}^d \sum_{k=1}^{N} t_{jk}=0.
            \label{eq:T_2-T_3}
        \end{equation}
        Moreover, for $j,k \in [d]\backslash [N]$, Eqn.~(\ref{eq:T_2-1}) can be rewritten in the form of $U^{\dagger}OU$ as following,
		\begin{align}
			T_{4} 
			= & \sum_{j,k=N+1}^d\mathbb{E}_{U} \Tr\left(\ket{\bm{e}_j}\bra{\bm{e}_k}U^{\dagger}OU \right) 	\left(\Tr\left(	A_1P_{\bm{\alpha}}^{(N)}\rho_{jk}^{(1)}(P_{\bm{\alpha}}^{(N)})^{\dagger} \right) +\frac{\Tr( A_{4})\Tr(\rho_{jk}^{(4)})}{d-N} \right) 
			\nonumber \\ 
			= & \sum_{j,k=N+1}^d\mathbb{E}_{U}\Tr\left(\ket{\bm{e}_j}\bra{\bm{e}_k}U^{\dagger}OU \right) 	\Tr\left( \left({\begin{array}{cc} A_1 & A_2 \\ A_3  & A_4  \\ \end{array} } \right) 	\left({\begin{array}{cc} 0 & 0 \\ 0  & \frac{1}{d-N} \mathbb{I}_{d-N}  \\ \end{array} } \right)\right)   
			\nonumber \\
            = & \sum_{j,k=N+1}^d\mathbb{E}_{U}\Tr\left(\ket{\bm{e}_j}\bra{\bm{e}_k}U^{\dagger}OU \right) 	\Tr\left( U^{\dagger}OU	\left({\begin{array}{cc} 0 & 0 \\ 0  & \frac{1}{d-N} \mathbb{I}_{d-N}  \\ \end{array} } \right)\right)   
			\nonumber \\
			= &  	\sum_{j,k=N+1}^d\frac{\Tr(\ket{\bm{e}_j}\bra{\bm{e}_k})(d\Tr(O)^2-\Tr(O^2))+\Tr(\ket{\bm{e}_j}\bra{\bm{e}_k}) (d-N)^{-1}(d\Tr(O^2)-\Tr(O)^2)}{d(d^2-1)} 
			\nonumber \\
			=& \frac{(d-N)(d\Tr(O)^2-\Tr(O^2)) + (d\Tr(O^2)-\Tr(O)^2)}{d(d^2-1)},
			\label{eq:T_4-2}
		\end{align}
        where the second equality follows that $\rho_{jk}^{(1)}$ is $N$-dimensional zero matrix and $\Tr(\rho_{jk}^{(4)})=1$ for $j,k\in[d]\backslash [N]$, the third equality utilizes the block matrix representation of $A=U^{\dagger}OU$, the fourth equality employs Property~\ref{prop:Tr(WW)Tr(WW)}.

        \smallskip
  
		\textbf{The lower bound of the average risk function}. In conjunction with Eqn.~(\ref{eq:two_stage_training}), Eqn.~(\ref{eq:T_1}) (term $T_1$), Eqn.~(\ref{eq:T_2-T_3}) (terms $T_2, T_3$), and Eqn.~(\ref{eq:T_4-2}) (term $T_4$) yields
		\begin{align}
			& \mathbb{E}_{U}\mathbb{E}_{\mathcal{\mathcal{D}_{\RQ}}} 	\Tr\left(U^{\dagger}OUV_{\RQ}^{\dagger}OV_{\RQ}\right) 
			\nonumber \\
			= & \frac{N(\Tr(O^2)+\Tr(O)^2)}{d(d+1)} + \frac{(N^2-N)(d\Tr(O^2)-\Tr(O)^2)}{d(d^2-1)}   \cdot  \mathbbm{1} (\bm{\alpha}\varpropto \mathbf{1}_N)  \nonumber \\
            & +  \frac{(d-N)(d\Tr(O)^2-\Tr(O^2)) + (d\Tr(O^2)-\Tr(O)^2)}{d(d^2-1)}
			\nonumber \\
			= & \frac{dN^2\Tr(O^2)+(d^2-N^2-1)\Tr(O)^2} {d(d^2-1)} \cdot  \mathbbm{1}(\bm{\alpha}\varpropto \mathbf{1}_N) + \frac{dN\Tr(O^2)+(d^2-N-1)\Tr(O)^2} {d(d^2-1)} \cdot  \mathbbm{1}(\bm{\alpha}\not\varpropto \mathbf{1}_N).
		\end{align} 
		Hence, utilizing Eqn.~(\ref{eq:app_risk_trans}), we have 
		\begin{align}
			& \mathbb{E}_U\mathbb{E}_{\mathcal{\mathcal{D}_{\RQ}}} R_{f_U}(V_{\RQ}) 
            \nonumber \\
            = & \frac{2}{d(d+1)}\left[\Tr(O^2)-\mathbb{E}_U\mathbb{E}_{\mathcal{\mathcal{D}_{\RQ}}}\Tr(U^{\dagger}OUV_{\RQ}^{\dagger}OV_{\RQ}) \right]
			\nonumber \\
			= & \frac{2(d^2-N^2-1)(d\Tr(O^2)-\Tr(O)^2)}{d^4(d+1)}\cdot  \mathbbm{1}(\bm{\alpha}\varpropto \mathbf{1}_N) + \frac{2(d^2-N-1)(d\Tr(O^2)-\Tr(O)^2)}{d^4(d+1)}\cdot  \mathbbm{1}(\bm{\alpha}\not\varpropto \mathbf{1}_N).
            \label{eq:RQ_bound_Orth_indicator}
		\end{align}
        where the first equality follows Lemma~\ref{lem:upper_to_lower} directly. Notably, the term $d\Tr(O^2)-\Tr(O)^2)$ is positive for all observables $O$. Specifically, denoting $\lambda_1, \cdots, \lambda_d$ as the eigenvalues of the observable $O$, we have $\Tr(O)^2=\sum_j \lambda_j^2 + \sum_{i,j} 2\lambda_i\lambda_j \le d\sum_j \lambda_j^2 = d\Tr(O^2)$ with the inequality following Cauchy-Schwartz inequality.

        \smallskip
        
        \textbf{The condition enabling $\bm{\alpha} \varpropto \mathbf{1}_N$}.
        We now analyze when the condition $\bm{\alpha}_k=\bm{\alpha}_j$ holds. Specifically, for any training data $(\ket{\bm{\psi}_i}, U\ket{\bm{\psi}_i})$ and $(\ket{\bm{\psi}_j}, V_{\RQ}\ket{\bm{\psi}_j})$ with $i\ne j$, we have
		\begin{equation}\label{eq:phase_equality}
			\braket{\bm{\psi}_k | \bm{\psi}_j} = 	\bra{\bm{\psi}_k}U^{\dagger}U\ket{\bm{\psi}_j} = e^{i(\bm{\alpha}_j-\bm{\alpha}_k)}  \bra{\bm{\psi}_k}V_{\RQ}^{\dagger} V_{\RQ}\ket{\bm{\psi}_j} = e^{i(\bm{\alpha}_j-\bm{\alpha}_k)} \braket{\bm{\psi}_k | \bm{\psi}_j}
		\end{equation}
		where the second equality exploits the assumpution of perfect training $U\ket{\bm{\psi}_j}= e^{-i\bm{\alpha}_j}V_{\RQ} \ket{\bm{\psi}_j}$. Eqn.~(\ref{eq:phase_equality}) implies that $\bm{\alpha}_{k}=\bm{\alpha}_j$ for all $k,j\in [N]$ if $\braket{\bm{\psi}_k|\bm{\psi}_j}\ne 0$ or equivalently the input states are not orthogonal. In this regard, the lower bound of the average risk function for orthogonal states yields
        \begin{align}
			\mathbb{E}_U\mathbb{E}_{\mathcal{\mathcal{D}_{\RQ}}} R_{f_U}(V_{\RQ}) 
			=   \frac{2(d^2-N-1)(d\Tr(O^2)-\Tr(O)^2)}{d^4(d+1)}.
            \label{eq:RQ_bound_Orth}
		\end{align}

        \noindent \underline{\textit{The input states are non-orthogonal but linearly independent.}} We analyze the cases of non-orthogonal yet linearly independent states by reducing to the case of orthogonal states. In particular, employing the QR 
        decomposition to the matrix $U_{\Psi}=[\ket{\bm{\psi}_1}, \cdots, \ket{\bm{\psi}_N}]$ yields $U_{\Psi}=QR$ with $Q=[\ket{\bm{Q}_1}, \cdots, \ket{\bm{Q}_N}]$ being a $d\times N$ orthogonal matrix and $R$ being an $N$-dimensional upper triangular matrix. As the inverse of $R$ is still an upper triangular matrix, we have $Q=U_{\Psi}R^{-1}$, or equivalently ,
        \begin{equation}
            \ket{\bm{Q}_j} = \sum_{k=1}^j R_{jk}\ket{\bm{\psi}_k}~\mbox{for any}~j\in [N],
        \end{equation}
        where $R_{jk}$ refers to the $jk$-entry of the upper triangular matrix $R^{-1}$.
        Then the assumption of perfect training for the orthogonal states refers to
        \begin{equation}
            U\ket{\bm{Q}_j} =\sum_{k=1}^j R_{jk}U\ket{\bm{\psi}_k} = \sum_{k=1}^j R_{jk}V_{\C}\ket{\bm{\psi}_k} e^{-i\bm{\alpha}_j} = e^{-i\bar{\bm{\alpha}}} \sum_{k=1}^j R_{jk}V_{\C}\ket{\bm{\psi}_k} = e^{-i\bar{\bm{\alpha}}} V_{\RQ}\ket{\bm{Q}_j},
        \end{equation}
        where $\bar{\bm{\alpha}}=\sum_{j=1}^N \bm{\alpha}_j/N$, the third equality follows that when the input states are linearly independent, the inter-state relative phase $\bm{\alpha}_j=\bm{\alpha}_k$ for any $j,k\in [N]$ as demonstrated in Eqn.~(\ref{eq:phase_equality}). This indicates that the assumption of perfect training also holds for the orthogonal states $\{Q_j\}_{j=1}^N$ with satisfying the condition of phase alignment $\bm{\alpha} \varpropto \mathbf{1}_N$. In this regard, employing the achieved results of Eqn.~(\ref{eq:RQ_bound_Orth_indicator}) with the condition $\bm{\alpha} \varpropto \mathbf{1}_N$ yields
        \begin{align}
			\mathbb{E}_U\mathbb{E}_{\mathcal{\mathcal{D}_{\RQ}}} R_{f_U}(V_{\RQ}) 
			=   \frac{2(d^2-N^2-1)(d\Tr(O^2)-\Tr(O)^2)}{d^4(d+1)}.
            \label{eq:RQ_bound_Non_Orth}.
		\end{align}
        
        \smallskip

        \noindent \underline{\textit{The input states are linearly dependent.}}
        We note that the cases of $\mathcal{D}_{\RQ}$ consisting of linearly dependent input states could be reduced to the case of $\widetilde{N}$ linearly independent states with $\widetilde{N}< d$. In this regard, the achieved bound in Eqn.~(\ref{eq:RQ_bound_Non_Orth}) also holds by replacing $N$ with $\widetilde{N}$. We now focus on the case of $N>\widetilde{N}= d$ linearly dependent input states. In particular, there could exist multiple sets of linearly independent states $\{\ket{\bm{\psi}_{(1)}}, \cdots, \ket{\bm{\psi}_{(d)}}\} \subset \{\ket{\bm{\psi}_1}, \cdots, \ket{\bm{\psi}_N}\}$. We denote the corresponding set of indices as $\mathcal{I}=\{(1),\cdots,(d)\} \subset [N]$ and the phase-related matrix as $P_{\bm{\alpha}}^{\mathcal{I}}$ consisting of  $e^{i\bm{\alpha}_{(1)}}, \cdots, e^{i\bm{\alpha}_{(d)}}$, leading to $V_{\RQ}=U P_{\bm{\alpha}}^{\mathcal{I}}$ as shown in Lemma~\ref{lem:U_dagger_V}. Hence, we have
        \begin{align}
			& 	\mathbb{E}_{U,\mathcal{D}_{\RQ}}\Tr\left(U^{\dagger}OUV_{\RQ}^{\dagger}OV_{\RQ}\right)
            \nonumber \\
			= & \mathbb{E}_{U, \mathcal{D}_{\RQ}} \Tr\left(U^{\dagger}OU(P_{\bm{\alpha}}^{\mathcal{I}})^{\dagger}U^{\dagger}OU P_{\bm{\alpha}}^{\mathcal{I}}\right)
			\nonumber \\
			= & \mathbb{E}_{\mathcal{D}_{\RQ}}\left(\frac{1}{d^2-1}\left(\Tr(O)^2\Tr(\mathbb{I}_d) + \Tr(O^2)\Tr((P_{\bm{\alpha}}^{\mathcal{I}})^{\dagger})\Tr(P_{\bm{\alpha}}^{\mathcal{I}}) \right) - \frac{1}{d(d^2-1)}\left(\Tr(O^2)\Tr(\mathbb{I}_d) + \Tr(O)^2\Tr((P_{\bm{\alpha}}^{\mathcal{I}})^{\dagger})\Tr(P_{\bm{\alpha}}^{\mathcal{I}})\right) \right)
			\nonumber \\
            \le  & \mathbb{E}_{\bm{\alpha}} \max_{\mathcal{I}}\frac{1}{d(d^2-1)}\left(d^2\Tr(O)^2-d\Tr(O^2) + (d\Tr(O^2)-\Tr(O)^2)\Tr((P_{\bm{\alpha}}^{\mathcal{I}})^{\dagger})\Tr(P_{\bm{\alpha}}^{\mathcal{I}})\right) 
            \nonumber \\
             = &  \frac{1}{d(d^2-1)}\left(d^2\Tr(O)^2-d\Tr(O^2) + (d\Tr(O^2)-\Tr(O)^2) \cdot \mathbb{E}_{\bm{\alpha}} \max_{\mathcal{I}}\sum_{k\in \mathcal{I}}\sum_{j\in \mathcal{I}} e^{i(\bm{\alpha}_{k}-\bm{\alpha}_{j})}\right) 
            \nonumber \\
            \le & \frac{1}{d(d^2-1)}\left(d^2\Tr(O)^2-d\Tr(O^2) + (d\Tr(O^2)-\Tr(O)^2) \cdot \min \left\{\mathbb{E}_{\bm{\alpha}} \sum_{k=1}^N\sum_{j=1}^N e^{i(\bm{\alpha}_{k}-\bm{\alpha}_{j})}, d^2 \right\}\right) 
            \nonumber \\
            = &  \frac{(d^2-\min\{N,d^2\})\Tr(O)^2+\min\{N,d^2\}\cdot (d-1)\Tr(O^2) }{d(d^2-1)}, 
            \label{eq:RQ_N_Ge_d-1}
        \end{align}
        where the second equality employs Property~\ref{prop:Tr(WWWW)}, the first inequality follows that there exists various matrix $P_{\bm{\alpha}}^{\mathcal{I}}$ corresponding to different sets of linearly independent states $\{\ket{\bm{\psi}_j}\}_{j\in\mathcal{I}}$ and $d\Tr(O^2)-\Tr(O)^2\ge 0$ for any observable $O$, the second inequality follows that $\mathbb{E}_{\bm{\alpha}} \max_{\mathcal{I}}\sum_{k\in \mathcal{I}}\sum_{j\in \mathcal{I}} e^{i(\bm{\alpha}_{k}-\bm{\alpha}_{j})}\le \mathbb{E}_{\bm{\alpha}} \sum_{k=1}^N\sum_{j=1}^N e^{i(\bm{\alpha}_{k}-\bm{\alpha}_{j})}$ as $\mathbb{E}_{\bm{\alpha}}e^{i(\bm{\alpha}_{k}-\bm{\alpha}_{j})}=\delta_{jk}\ge 0$ for any $j,k\in [N]$, and $\sum_{j\in \mathcal{I}} \sum_{k\in \mathcal{I}} e^{i(\bm{\alpha}_j-\bm{\alpha}_k)} =\sum_{j\in \mathcal{I}} \sum_{k\in \mathcal{I}} \cos(\bm{\alpha}_j-\bm{\alpha}_k)\le d^2$ as $\cos(\bm{\alpha}_j-\bm{\alpha}_k)\le 1$ always  holds, the last equality follows that the inter-state relative phases $\bm{\alpha_1}, \cdots, \bm{\alpha}_N$ are independently distributed over each period of the function $e^{i\bm{\alpha}_k}$ as stated in Assumption~\ref{assum:W_decomp} and hence $\mathbb{E}_{\bm{\alpha}_k,\bm{\alpha}_j}e^{i(\bm{\alpha}_k-\bm{\alpha}_j)}=0$ for any $k\ne j$.

        In conjunction with Eqn.~(\ref{eq:app_risk_simplify}) and Eqn.~(\ref{eq:RQ_N_Ge_d-1}), we have
        \begin{align}
			\mathbb{E}_U\mathbb{E}_{\mathcal{\mathcal{D}_{\RQ}}} R_{U}(h_{\mathcal{D}_{\RQ}}) = & 	\frac{2}{d(d+1)}\left[\Tr(O^2)-\mathbb{E}_U\mathbb{E}_{\mathcal{\mathcal{D}_{\RQ}}}\Tr(U^{\dagger}OUV_{\RQ}^{\dagger}OV_{\RQ}) \right]
			\nonumber \\
			\ge & \frac{(d^2-N)(d\Tr(O^2)-\Tr(O)^2)}{d^2(d+1)(d^2-1)} \cdot \mathbbm{1}(N\le d^2).
            \label{eq:RQ_final_LD_bound}
		\end{align}
        
		This completes the proof.
	\end{proof}

	\section{Proof of Theorem 3 (NFL theorem for quantum learning protocols)}\label{app_sec:NFL_SQ_dagger}
	As introduced in the main text, the training dataset for quantum learning protocols refers to $\{\ket{\bm{\psi}_j}, U^{\dagger}\ket{\bm{\psi}_j}\}_{j=1}^N$ and the assumption of perfect training is given by
	$|\bra{\bm{\psi}_j}UV_{\Q}^{\dagger}\ket{\bm{\psi}_j} |=1$ (i.e. $U^{\dagger}\ket{\bm{\psi}_j} = e^{i\bm{\beta}_j}V_{\Q}^{\dagger}\ket{\bm{\psi}_j}$) where $\bm{\beta}_j$ represents the inter-state relative phase between the true response state $U^{\dagger}\ket{\bm{\psi}_j}$ and the estimated response state $V_{\Q}^{\dagger}\ket{\bm{\psi}_j}$. 
	The NFL theorem for quantum learning protocols is encapsulated in the following theorem.
	\begin{theorem-non}[Formal statement of Theorem~\ref{thm:NFL_U_dagger_maintext}]
        Let $f_U(\psi)=\Tr(U^{\dagger}OU\ket{\bm{\psi}}\bra{\bm{\psi}})$ be the target concept, the observable $O$ be any Hermitian matrix and the training dataset $\mathcal{D}_{\Q}=\{\ket{\bm{\psi}_i}, \ket{\bm{\phi}_i} \}_{i=1}^N$ with $\ket{\bm{\phi}_i}=U^{\dagger}\ket{\bm{\psi}_i}$. Let $h_{V_{\Q}}$ be the learned hypothesis with the corresponding unitary operator $V_{\Q}$ satisfying the assumption of perfect training $U^{\dagger}\ket{\bm{\psi}_j} = e^{i\bm{\beta}_j}V_{\Q}^{\dagger}\ket{\bm{\psi}_j}$ where $\bm{\beta}_j$ is the inter-state relative phase. The averaged risk function over all unitaries $U$ and training datasets $\mathcal{D}_{\Q}$ yields
		\begin{equation}
			\mathbb{E}_{U,\mathcal{D}_{\Q}}R_U(V_{\Q})\ge \frac{1}{d^3(d+1)}\left((d-N)\cdot(d\Tr(O^2)-\Tr(O)^2)+(N^2-N) \sum_{k\ne j}^d O_{kj}^2 \cdot \mathbbm{1}(\bm{\beta} \not\varpropto \mathbf{1}_N)\right),
		\end{equation} 
		where the equality holds when the input states are non-orthogonal and linearly independent, $\bm{\beta}=(\bm{\beta}_1, \cdots, \bm{\beta}_N)$ and $\mathbf{1}_N$ refers to the all-ones vector of dimension $N$, $\mathbbm{1}(\bm{\beta}\varpropto \mathbf{1}_N)$ means that there exists a constant $c$ such that the phase vector $\bm{\beta}=c\mathbf{1}_N$, i.e., $\bm{\beta}_k=\bm{\beta}_j$ for any $j,k\in [N]$.
	\end{theorem-non}
	
	\begin{proof}[Proof of Theorem~\ref{thm:NFL_U_dagger_maintext}]
		Recall that the problem of deriving the lower bound of the average function could be reduced to obtaining the upper bound of the average term $\Tr(U^{\dagger}OUV_Q^{\dagger}OV_Q)$ as suggested by Lemma~\ref{lem:upper_to_lower}. In the following, we analyze the bound of such a term by separately considering the case of linearly independent and dependent states. 

        \smallskip
    
        \noindent \underline{\textit{The input states are linearly independent.}} When the training dataset refers to $\{\ket{\bm{\psi}_j}, U^{\dagger}\ket{\bm{\psi}_j}\}_{j=1}^N$ with $N\le d$ and $\{\ket{\bm{\psi}_j}\}_{j=1}^N$ are linearly independent, we could obtain the reduced representation about $UV_Q^{\dagger}$, In particular, utilizing Lemma~\ref{lem:U_dagger_V} yields
		\begin{equation}\label{eq:Q_perf_tr_mat}
			W:=UV_{\Q}^{\dagger}=\left(\begin{array}{ccc|c}e^{i \bm{\beta}_{1}} & \cdots & 0 & \\ \vdots & \ddots & & 0 \\ 0 & & e^{i \bm{\beta}_{N}} & \\ \hline & 0 & & \mathrm{Y}\end{array}\right) = \left( {\begin{array}{cc} P_{\bm{\beta}}^{(N)} & 0 \\ 0 & Y \\ \end{array} } \right) \mbox{ with denoting }   P_{\bm{\beta}}^{(N)} = \left( {\begin{array}{ccc}
					e^{i \bm{\beta}_{1}} & \cdots & 0 \\
					\vdots & \ddots &  \vdots \\
					0 & \cdots & e^{i \bm{\beta}_{N}} \\
			\end{array} } \right).
		\end{equation}
		where $Y\in \mathbb{C}^{d-N}$ is assumed to be a random Haar unitary matrix. Notably, the phase-related term $e^{i\bm{\beta}_j}$ does not need to be located together in a block, while this term $e^{i\bm{\beta}_j}$ should be uniformly distributed across the diagonal elements of $W$ in the context of NFL theorem considering the average prediction error over all possible training datasets $\mathcal{D}_{Q}$. We will keep this in mind in our subsequent discussions.
        
        We now evaluate an average of the risk function $R_{U}(V_Q)$ in Eqn.~(\ref{eq:app_risk_simplify}) over both the training dataset $\mathcal{D}_{\Q}$ and the target unitary $U$.
		Denote $O=\left( {\begin{array}{cc}
				O_{\mathcal{D}\mathcal{D}} & O_{\mathcal{D}\mathcal{D}_c} \\
				O_{\mathcal{D}_c\mathcal{D}} & O_{\mathcal{D}_c\mathcal{D}_c} \\
		\end{array} } \right)$ where the subscripts $\mathcal{D}$ and $\mathcal{D}_c$ referring to the location of submatrix $O_{[\cdot]}$ which is determined by the location of the phase matrix $P_{\bm{\beta}}^{(N)}$ in the matrix $W$, we have 
		\begin{align}
			& \mathbb{E}_{\mathcal{D}}\int_{\haar}\mathrm{d}W \Tr\left(OUV_{\Q}^{\dagger}OV_{\Q}U^{\dagger}\right)    \nonumber \\
			= & \mathbb{E}_{\mathcal{D}} \int_{\haar}\mathrm{d}Y \Tr\Big( O_{\mathcal{D}\mathcal{D}}P_{\bm{\beta}}^{(N)}O_{\mathcal{D}\mathcal{D}}(P_{\bm{\beta}}^{(N)})^{\dagger} + 	O_{\mathcal{D}\mathcal{D}_c}YO_{\mathcal{D}_c\mathcal{D}}(P_{\bm{\beta}}^{(N)})^{\dagger} + O_{\mathcal{D}_c\mathcal{D}}P_{\bm{\beta}}^{(N)}O_{\mathcal{D}\mathcal{D}_c}Y^{\dagger} + O_{\mathcal{D}_c\mathcal{D}_c}YO_{\mathcal{D}_c\mathcal{D}_c}Y^{\dagger}\Big) \nonumber \\
			= & \mathbb{E}_{\mathcal{D}} \Tr\left( 	O_{\mathcal{D}\mathcal{D}}P_{\bm{\beta}}^{(N)}O_{\mathcal{D}\mathcal{D}}(P_{\bm{\beta}}^{(N)})^{\dagger}\right)+\int_{\haar}\mathrm{d}Y  \Tr\left(O_{\mathcal{D}_c\mathcal{D}_c}YO_{\mathcal{D}_c\mathcal{D}_c}Y^{\dagger}\right) \cdot \mathbbm{1}(N<d)  \nonumber \\
			= & \mathbb{E}_{\mathcal{D}}\sum_{j=1}^N (O_{\mathcal{D}\mathcal{D}}^{(j,j)})^2 +\mathbb{E}_{\mathcal{D}}\mathbb{E}_{\bm{\beta}} \sum_{k\ne j}^N e^{i(\bm{\beta}_k-\bm{\beta}_j)} (O_{\mathcal{D}\mathcal{D}}^{(k,j)})^2 + \mathbb{E}_{\mathcal{D}}\frac{\Tr(O_{\mathcal{D}_c\mathcal{D}_c})^2}{d-N} \cdot 	\mathbbm{1}(N<d)   
            \nonumber \\
            = & \mathbb{E}_{\mathcal{D}} \left(\sum_{j=1}^N (O_{\mathcal{D}\mathcal{D}}^{(j,j)})^2 + \sum_{k\ne j }^N  (O_{\mathcal{D}\mathcal{D}}^{(k,j)})^2 \cdot \mathbbm{1}(\bm{\beta} \varpropto \mathbf{1}_N) + \frac{\Tr(O_{\mathcal{D}_c\mathcal{D}_c})^2}{d-N} \cdot 	\mathbbm{1}(N<d)  \right) 
            \nonumber \\
            = & \mathbb{E}_{\mathcal{D}} \left(\Tr\left((O_{\mathcal{D}\mathcal{D}})^2\right) \cdot \mathbbm{1}(\bm{\beta} \varpropto \mathbf{1}_N) + \sum_{j=1}^N (O_{\mathcal{D}\mathcal{D}}^{(j,j)})^2  \cdot \mathbbm{1}(\bm{\beta} \not\varpropto \mathbf{1}_N)+ \frac{\Tr(O_{\mathcal{D}_c\mathcal{D}_c})^2}{d-N} \cdot 	\mathbbm{1}(N<d) \right)
            \nonumber \\
            = & \frac{N\Tr(O^2)}{d}  - \frac{N^2-N}{d^2-d} \sum_{k\ne j}^d O_{kj}^2 \cdot \mathbbm{1}(\bm{\beta} \not\varpropto \mathbf{1}_N)+ \frac{(d-N)\Tr(O)^2}{d^2}
            \label{eq:two_stage_Tr_bound}
		\end{align}
		where the expectation of training dataset $\mathcal{D}$ is over all ingredients related to $\mathcal{D}$ including the relative phase $e^{i\bm{\beta}_j}$ and its location in matrix $P_{\bm{\beta}}^{(N)}$, the first equality uses the representation of $UV_{\Q}^{\dagger}$ given in Eqn.~(\ref{eq:Q_perf_tr_mat}), the second equality follows the fact that the Haar measure is left- and right-invariant under the action of the unitary group (in particular, under $-\mathbb{I}$), i.e., $\int\Tr(YX)\mathrm{d} Y =\int\Tr(YX)\mathrm{d} (-\mathbb{I}_NY)$ for any fixed unitary matrix $X$, indicating that $\int\Tr(YX)\mathrm{d} Y=0$ and hence $\int\Tr(O_{\mathcal{D}\mathcal{D}_c}YO_{\mathcal{D}_c\mathcal{D}}(P_{\bm{\beta}}^{(N)})^{\dagger})\mathrm{d} Y =0$ with taking $X=O_{\mathcal{D}_c\mathcal{D}}(P_{\bm{\beta}}^{(N)})^{\dagger})O_{\mathcal{D}\mathcal{D}_c}$, the same result holds for $\int\Tr(O_{\mathcal{D}_c\mathcal{D}}P_{\bm{\beta}}^{(N)}O_{\mathcal{D}\mathcal{D}_c}Y^{\dagger}\mathrm{d} Y)$, the term $\mathbbm{1}(N<d)$ follows that when $N=d$, $O_{\mathcal{D}\mathcal{D}}=O$ and the terms involving $O_{\mathcal{D}_c\mathcal{D}_c}$ disappears, the third equality employs Property~\ref{prop:Tr(WW)}  on the $(d-N)$-dimensional Haar unitary $Y$, the fourth equality follows the same derivation of Eqn.~(\ref{eq:T_1}) by considering whether the distribution of input states enable $\bm{\beta}_k=\bm{\beta}_j$ for any $k,j\in [N]$, the lase equality follows that the average on the location of the sub-matrix $O_{[\cdot]}$ related to $\mathcal{D}$ and $\mathcal{D}_c$ yields that $\mathbb{E}_{\mathcal{D}}\Tr(O_{\mathcal{D}\mathcal{D}}^2)=N\Tr(O^2)/d$, and $\mathbb{E}_{\mathcal{D}}\Tr(O_{\mathcal{D}_c\mathcal{D}_c})=(d-N)\Tr(O)/d$. In particular, considering the average of all possible locations of $O_{\mathcal{D}\mathcal{D}}$ is because the relative phase $e^{i\bm{\beta}_j}$ in the matrix $P_{\bm{\beta}}^{(N)}$ is uniformly located across the diagonal parts of the matrix $W$ defined in Eqn.~(\ref{eq:Q_perf_tr_mat}) in the context of NFL. 
  
        In conjunction with Eqn.~(\ref{eq:app_risk_simplify}) in Lemma~\ref{lem:upper_to_lower} and Eqn.~(\ref{eq:two_stage_Tr_bound}), we have
		\begin{align}
			\mathbb{E}_{U,\mathcal{D}_{\Q}}R_U(V_{\Q})= &	\frac{1}{d(d+1)}\left(\Tr(O^2)- \frac{N\Tr(O^2)}{d}  + \frac{N^2-N}{d^2-d} \sum_{k\ne j}^d O_{kj}^2 \cdot \mathbbm{1}(\bm{\beta} \not\varpropto \mathbf{1}_N) -  \frac{(d-N)\Tr(O)^2}{d^2}\right)
            \nonumber \\
            = & \frac{1}{d^3(d+1)}\left((d-N)\cdot(d\Tr(O^2)-\Tr(O)^2)+(N^2-N) \sum_{k\ne j}^d O_{kj}^2 \cdot \mathbbm{1}(\bm{\beta} \not\varpropto \mathbf{1}_N)\right)
            \label{eq:Q_bound_LD}
		\end{align}

        \smallskip

        \textbf{The condition enabling $\mathbbm{1}(\bm{\beta}  \varpropto \mathbf{1}_N)$.}
		We now analyze when the condition $\bm{\beta}_k=\bm{\beta}_j$ holds. Specifically, for any training data $(\ket{\bm{\psi}_i}, U^{\dagger}\ket{\bm{\psi}_i})$ and $(\ket{\bm{\psi}_j}, U^{\dagger}\ket{\bm{\psi}_j})$ with $i\ne j$, we have
		\begin{equation}\label{eq:Q_phase_equality}
			\braket{\bm{\psi}_k | \bm{\psi}_j} = 	\bra{\bm{\psi}_k}UU^{\dagger}\ket{\bm{\psi}_j} = e^{i(\bm{\beta}_j-\bm{\beta}_k)}  \bra{\bm{\psi}_k}V_{\Q} V_{\Q}^{\dagger}\ket{\bm{\psi}_j} = e^{i(\bm{\beta}_j-\bm{\beta}_k)} \braket{\bm{\psi}_k | \bm{\psi}_j}
		\end{equation}
		where the second equality exploits the assumption of perfect training $U^{\dagger}\ket{\bm{\psi}_j}= e^{i\bm{\beta}_j}V_{\Q}^{\dagger} \ket{\bm{\psi}_j}$. Eqn.~(\ref{eq:Q_phase_equality}) implies that $\bm{\beta}_{k}=\bm{\beta}_j$ for all $k,j\in [N]$ if $\braket{\bm{\psi}_k|\bm{\psi}_j}\ne 0$ or equivalently the input states are not orthogonal.

        \smallskip
        
        \noindent \underline{\textit{The input states are linearly dependent.}} 
        We note that the cases of $\mathcal{D}_{\Q}$ consisting of linearly dependent input states could be reduced to the case of $\widetilde{N}$ linearly independent states with $\widetilde{N}<N$. In this regard, the achieved bound in Eqn.~(\ref{eq:Q_bound_LD}) also holds by replacing $N$ with $\widetilde{N}$.
	\end{proof}


\begin{thebibliography}{100}

\bibitem{hastie2009elements}
Trevor Hastie, Robert Tibshirani, Jerome~H Friedman, and Jerome~H Friedman.
\newblock {\em The elements of statistical learning: data mining, inference,
  and prediction}, volume~2.
\newblock Springer, 2009.

\bibitem{halevy2009unreasonable}
Alon Halevy, Peter Norvig, and Fernando Pereira.
\newblock The unreasonable effectiveness of data.
\newblock {\em IEEE intelligent systems}, 24(2):8--12, 2009.

\bibitem{lecun2015deep}
Yann LeCun, Yoshua Bengio, and Geoffrey Hinton.
\newblock Deep learning.
\newblock {\em nature}, 521(7553):436, 2015.

\bibitem{sun2017revisiting}
Chen Sun, Abhinav Shrivastava, Saurabh Singh, and Abhinav Gupta.
\newblock Revisiting unreasonable effectiveness of data in deep learning era.
\newblock In {\em Proceedings of the IEEE international conference on computer
  vision}, pages 843--852, 2017.

\bibitem{zhang2021understanding}
Chiyuan Zhang, Samy Bengio, Moritz Hardt, Benjamin Recht, and Oriol Vinyals.
\newblock Understanding deep learning (still) requires rethinking
  generalization.
\newblock {\em Communications of the ACM}, 64(3):107--115, 2021.

\bibitem{brown2020language}
Tom Brown, Benjamin Mann, Nick Ryder, Melanie Subbiah, Jared~D Kaplan, Prafulla
  Dhariwal, Arvind Neelakantan, Pranav Shyam, Girish Sastry, Amanda Askell,
  et~al.
\newblock Language models are few-shot learners.
\newblock {\em Advances in neural information processing systems},
  33:1877--1901, 2020.

\bibitem{ouyang2022training}
Long Ouyang, Jeffrey Wu, Xu~Jiang, Diogo Almeida, Carroll Wainwright, Pamela
  Mishkin, Chong Zhang, Sandhini Agarwal, Katarina Slama, Alex Ray, et~al.
\newblock Training language models to follow instructions with human feedback.
\newblock {\em Advances in Neural Information Processing Systems},
  35:27730--27744, 2022.

\bibitem{bai2022training}
Yuntao Bai, Andy Jones, Kamal Ndousse, Amanda Askell, Anna Chen, Nova DasSarma,
  Dawn Drain, Stanislav Fort, Deep Ganguli, Tom Henighan, et~al.
\newblock Training a helpful and harmless assistant with reinforcement learning
  from human feedback.
\newblock {\em arXiv preprint arXiv:2204.05862}, 2022.

\bibitem{touvron2023llama}
Hugo Touvron, Thibaut Lavril, Gautier Izacard, Xavier Martinet, Marie-Anne
  Lachaux, Timoth{\'e}e Lacroix, Baptiste Rozi{\`e}re, Naman Goyal, Eric
  Hambro, Faisal Azhar, et~al.
\newblock Llama: Open and efficient foundation language models.
\newblock {\em arXiv preprint arXiv:2302.13971}, 2023.

\bibitem{zhao2023survey}
Wayne~Xin Zhao, Kun Zhou, Junyi Li, Tianyi Tang, Xiaolei Wang, Yupeng Hou,
  Yingqian Min, Beichen Zhang, Junjie Zhang, Zican Dong, et~al.
\newblock A survey of large language models.
\newblock {\em arXiv preprint arXiv:2303.18223}, 2023.

\bibitem{bishop2006pattern}
Christopher~M Bishop.
\newblock {\em Pattern recognition and machine learning}.
\newblock springer, 2006.

\bibitem{krizhevsky2012imagenet}
Alex Krizhevsky, Ilya Sutskever, and Geoffrey~E Hinton.
\newblock Imagenet classification with deep convolutional neural networks.
\newblock {\em Advances in neural information processing systems}, 25, 2012.

\bibitem{bengio2013representation}
Yoshua Bengio, Aaron Courville, and Pascal Vincent.
\newblock Representation learning: A review and new perspectives.
\newblock {\em IEEE transactions on pattern analysis and machine intelligence},
  35(8):1798--1828, 2013.

\bibitem{schmidhuber2015deep}
J{\"u}rgen Schmidhuber.
\newblock Deep learning in neural networks: An overview.
\newblock {\em Neural Networks}, 61:85--117, 2015.

\bibitem{wang2020deep}
Jingdong Wang, Ke~Sun, Tianheng Cheng, Borui Jiang, Chaorui Deng, Yang Zhao,
  Dong Liu, Yadong Mu, Mingkui Tan, Xinggang Wang, et~al.
\newblock Deep high-resolution representation learning for visual recognition.
\newblock {\em IEEE transactions on pattern analysis and machine intelligence},
  43(10):3349--3364, 2020.

\bibitem{purushwalkam2020demystifying}
Senthil Purushwalkam and Abhinav Gupta.
\newblock Demystifying contrastive self-supervised learning: Invariances,
  augmentations and dataset biases.
\newblock {\em Advances in Neural Information Processing Systems},
  33:3407--3418, 2020.

\bibitem{dangovski2021equivariant}
Rumen Dangovski, Li~Jing, Charlotte Loh, Seungwook Han, Akash Srivastava, Brian
  Cheung, Pulkit Agrawal, and Marin Solja{\v{c}}i{\'c}.
\newblock Equivariant contrastive learning.
\newblock {\em arXiv preprint arXiv:2111.00899}, 2021.

\bibitem{li2023augmentation}
Haifeng Li, Jun Cao, Jiawei Zhu, Qinyao Luo, Silu He, and Xuying Wang.
\newblock Augmentation-free graph contrastive learning of
  invariant-discriminative representations.
\newblock {\em IEEE Transactions on Neural Networks and Learning Systems},
  2023.

\bibitem{goodfellow2020generative}
Ian Goodfellow, Jean Pouget-Abadie, Mehdi Mirza, Bing Xu, David Warde-Farley,
  Sherjil Ozair, Aaron Courville, and Yoshua Bengio.
\newblock Generative adversarial networks.
\newblock {\em Communications of the ACM}, 63(11):139--144, 2020.

\bibitem{creswell2018generative}
Antonia Creswell, Tom White, Vincent Dumoulin, Kai Arulkumaran, Biswa Sengupta,
  and Anil~A Bharath.
\newblock Generative adversarial networks: An overview.
\newblock {\em IEEE signal processing magazine}, 35(1):53--65, 2018.

\bibitem{torrey2010transfer}
Lisa Torrey and Jude Shavlik.
\newblock Transfer learning.
\newblock In {\em Handbook of research on machine learning applications and
  trends: algorithms, methods, and techniques}, pages 242--264. IGI global,
  2010.

\bibitem{zhuang2020comprehensive}
Fuzhen Zhuang, Zhiyuan Qi, Keyu Duan, Dongbo Xi, Yongchun Zhu, Hengshu Zhu, Hui
  Xiong, and Qing He.
\newblock A comprehensive survey on transfer learning.
\newblock {\em Proceedings of the IEEE}, 109(1):43--76, 2020.

\bibitem{zhu2023transfer}
Zhuangdi Zhu, Kaixiang Lin, Anil~K Jain, and Jiayu Zhou.
\newblock Transfer learning in deep reinforcement learning: A survey.
\newblock {\em IEEE Transactions on Pattern Analysis and Machine Intelligence},
  2023.

\bibitem{hestness2017deep}
Joel Hestness, Sharan Narang, Newsha Ardalani, Gregory Diamos, Heewoo Jun,
  Hassan Kianinejad, Md~Mostofa~Ali Patwary, Yang Yang, and Yanqi Zhou.
\newblock Deep learning scaling is predictable, empirically.
\newblock {\em arXiv preprint arXiv:1712.00409}, 2017.

\bibitem{kaplan2020scaling}
Jared Kaplan, Sam McCandlish, Tom Henighan, Tom~B Brown, Benjamin Chess, Rewon
  Child, Scott Gray, Alec Radford, Jeffrey Wu, and Dario Amodei.
\newblock Scaling laws for neural language models.
\newblock {\em arXiv preprint arXiv:2001.08361}, 2020.

\bibitem{henighan2020scaling}
Tom Henighan, Jared Kaplan, Mor Katz, Mark Chen, Christopher Hesse, Jacob
  Jackson, Heewoo Jun, Tom~B Brown, Prafulla Dhariwal, Scott Gray, et~al.
\newblock Scaling laws for autoregressive generative modeling.
\newblock {\em arXiv preprint arXiv:2010.14701}, 2020.

\bibitem{rosenfeld2021scaling}
Jonathan~S Rosenfeld.
\newblock Scaling laws for deep learning.
\newblock {\em arXiv preprint arXiv:2108.07686}, 2021.

\bibitem{schuld2015introduction}
Maria Schuld, Ilya Sinayskiy, and Francesco Petruccione.
\newblock An introduction to quantum machine learning.
\newblock {\em Contemporary Physics}, 56(2):172--185, 2015.

\bibitem{biamonte2017quantum}
Jacob Biamonte, Peter Wittek, Nicola Pancotti, Patrick Rebentrost, Nathan
  Wiebe, and Seth Lloyd.
\newblock Quantum machine learning.
\newblock {\em Nature}, 549(7671):195, 2017.

\bibitem{ciliberto2018quantum}
Carlo Ciliberto, Mark Herbster, Alessandro~Davide Ialongo, Massimiliano Pontil,
  Andrea Rocchetto, Simone Severini, and Leonard Wossnig.
\newblock Quantum machine learning: a classical perspective.
\newblock {\em Proceedings of the Royal Society A: Mathematical, Physical and
  Engineering Sciences}, 474(2209):20170551, 2018.

\bibitem{tian2023recent}
Jinkai Tian, Xiaoyu Sun, Yuxuan Du, Shanshan Zhao, Qing Liu, Kaining Zhang, Wei
  Yi, Wanrong Huang, Chaoyue Wang, Xingyao Wu, et~al.
\newblock Recent advances for quantum neural networks in generative learning.
\newblock {\em IEEE Transactions on Pattern Analysis and Machine Intelligence},
  2023.

\bibitem{zeguendry2023quantum}
Amine Zeguendry, Zahi Jarir, and Mohamed Quafafou.
\newblock Quantum machine learning: A review and case studies.
\newblock {\em Entropy}, 25(2):287, 2023.

\bibitem{huang2021power}
Hsin-Yuan Huang, Michael Broughton, Masoud Mohseni, Ryan Babbush, Sergio Boixo,
  Hartmut Neven, and Jarrod~R McClean.
\newblock Power of data in quantum machine learning.
\newblock {\em Nature communications}, 12(1):1--9, 2021.

\bibitem{liu2021rigorous}
Yunchao Liu, Srinivasan Arunachalam, and Kristan Temme.
\newblock A rigorous and robust quantum speed-up in supervised machine
  learning.
\newblock {\em Nature Physics}, 17(9):1013--1017, 2021.

\bibitem{wang2021towards}
Xinbiao Wang, Yuxuan Du, Yong Luo, and Dacheng Tao.
\newblock Towards understanding the power of quantum kernels in the {NISQ} era.
\newblock {\em {Quantum}}, 5:531, 2021.

\bibitem{huang2021information}
Hsin-Yuan Huang, Richard Kueng, and John Preskill.
\newblock Information-theoretic bounds on quantum advantage in machine
  learning.
\newblock {\em Physical Review Letters}, 126(19):190505, 2021.

\bibitem{chen2022exponential}
Sitan Chen, Jordan Cotler, Hsin-Yuan Huang, and Jerry Li.
\newblock Exponential separations between learning with and without quantum
  memory.
\newblock In {\em 2021 IEEE 62nd Annual Symposium on Foundations of Computer
  Science (FOCS)}, pages 574--585. IEEE, 2022.

\bibitem{wolpert1997no}
David~H Wolpert and William~G Macready.
\newblock No free lunch theorems for optimization.
\newblock {\em IEEE transactions on evolutionary computation}, 1(1):67--82,
  1997.

\bibitem{ho2002simple}
Yu-Chi Ho and David~L Pepyne.
\newblock Simple explanation of the no-free-lunch theorem and its implications.
\newblock {\em Journal of optimization theory and applications}, 115:549--570,
  2002.

\bibitem{wolf2018mathematical}
Michael~M Wolf.
\newblock Mathematical foundations of supervised learning.
\newblock {\em Lecture notes from Technical University of Munich}, 2018.

\bibitem{poland2020no}
Kyle Poland, Kerstin Beer, and Tobias~J Osborne.
\newblock No free lunch for quantum machine learning.
\newblock {\em arXiv preprint arXiv:2003.14103}, 2020.

\bibitem{sharma2022reformulation}
Kunal Sharma, M~Cerezo, Zo{\"e} Holmes, Lukasz Cincio, Andrew Sornborger, and
  Patrick~J Coles.
\newblock Reformulation of the no-free-lunch theorem for entangled datasets.
\newblock {\em Physical Review Letters}, 128(7):070501, 2022.

\bibitem{wang2023transition}
Xinbiao Wang, Yuxuan Du, Zhuozhuo Tu, Yong Luo, Xiao Yuan, and Dacheng Tao.
\newblock Transition role of entangled data in quantum machine learning.
\newblock {\em arXiv preprint arXiv:2306.03481}, 2023.

\bibitem{abohashima2020classification}
Zainab Abohashima, Mohamed Elhosen, Essam~H Houssein, and Waleed~M Mohamed.
\newblock Classification with quantum machine learning: A survey.
\newblock {\em arXiv preprint arXiv:2006.12270}, 2020.

\bibitem{li2022recent}
Weikang Li and Dong-Ling Deng.
\newblock Recent advances for quantum classifiers.
\newblock {\em Science China Physics, Mechanics \& Astronomy}, 65(2):220301,
  2022.

\bibitem{du2023problem}
Yuxuan Du, Yibo Yang, Dacheng Tao, and Min-Hsiu Hsieh.
\newblock Problem-dependent power of quantum neural networks on multiclass
  classification.
\newblock {\em Physical Review Letters}, 131(14):140601, 2023.

\bibitem{qian2022dilemma}
Yang Qian, Xinbiao Wang, Yuxuan Du, Xingyao Wu, and Dacheng Tao.
\newblock The dilemma of quantum neural networks.
\newblock {\em IEEE Transactions on Neural Networks and Learning Systems},
  2022.

\bibitem{degen2017quantum}
Christian~L Degen, F~Reinhard, and Paola Cappellaro.
\newblock Quantum sensing.
\newblock {\em Reviews of modern physics}, 89(3):035002, 2017.

\bibitem{alderete2022inference}
C~Huerta Alderete, Max~Hunter Gordon, Fr{\'e}d{\'e}ric Sauvage, Akira Sone,
  Andrew~T Sornborger, Patrick~J Coles, and Marco Cerezo.
\newblock Inference-based quantum sensing.
\newblock {\em Physical Review Letters}, 129(19):190501, 2022.

\bibitem{coles2021pushing}
Patrick Coles.
\newblock Pushing the limits of quantum sensing with variational quantum
  circuits.
\newblock {\em Physics}, 14:172, 2021.

\bibitem{georgescu2014quantum}
Iulia~M Georgescu, Sahel Ashhab, and Franco Nori.
\newblock Quantum simulation.
\newblock {\em Reviews of Modern Physics}, 86(1):153, 2014.

\bibitem{endo2020variational}
Suguru Endo, Jinzhao Sun, Ying Li, Simon~C Benjamin, and Xiao Yuan.
\newblock Variational quantum simulation of general processes.
\newblock {\em Physical Review Letters}, 125(1):010501, 2020.

\bibitem{jones2019variational}
Tyson Jones, Suguru Endo, Sam McArdle, Xiao Yuan, and Simon~C Benjamin.
\newblock Variational quantum algorithms for discovering hamiltonian spectra.
\newblock {\em Physical Review A}, 99(6):062304, 2019.

\bibitem{cerezo2021variational_VQA}
Marco Cerezo, Andrew Arrasmith, Ryan Babbush, Simon~C Benjamin, Suguru Endo,
  Keisuke Fujii, Jarrod~R McClean, Kosuke Mitarai, Xiao Yuan, Lukasz Cincio,
  et~al.
\newblock Variational quantum algorithms.
\newblock {\em Nature Reviews Physics}, pages 1--20, 2021.

\bibitem{torlai2023quantum}
Giacomo Torlai, Christopher~J Wood, Atithi Acharya, Giuseppe Carleo, Juan
  Carrasquilla, and Leandro Aolita.
\newblock Quantum process tomography with unsupervised learning and tensor
  networks.
\newblock {\em Nature Communications}, 14(1):2858, 2023.

\bibitem{huang2024learning}
Hsin-Yuan Huang, Yunchao Liu, Michael Broughton, Isaac Kim, Anurag Anshu, Zeph
  Landau, and Jarrod~R McClean.
\newblock Learning shallow quantum circuits.
\newblock {\em arXiv preprint arXiv:2401.10095}, 2024.

\bibitem{lowe2022lower}
Angus Lowe and Ashwin Nayak.
\newblock Lower bounds for learning quantum states with single-copy
  measurements.
\newblock {\em arXiv preprint arXiv:2207.14438}, 2022.

\bibitem{nielsen2010quantum}
Michael~A Nielsen and Isaac~L Chuang.
\newblock {\em Quantum computation and quantum information}.
\newblock Cambridge University Press, 2010.

\bibitem{kandala2017hardware}
Abhinav Kandala, Antonio Mezzacapo, Kristan Temme, Maika Takita, Markus Brink,
  Jerry~M Chow, and Jay~M Gambetta.
\newblock Hardware-efficient variational quantum eigensolver for small
  molecules and quantum magnets.
\newblock {\em Nature}, 549(7671):242--246, 2017.

\bibitem{benedetti2019parameterized}
Marcello Benedetti, Erika Lloyd, Stefan Sack, and Mattia Fiorentini.
\newblock Parameterized quantum circuits as machine learning models.
\newblock {\em Quantum Science and Technology}, 4(4):043001, 2019.

\bibitem{du2021learnability}
Yuxuan Du, Min-Hsiu Hsieh, Tongliang Liu, Shan You, and Dacheng Tao.
\newblock Learnability of quantum neural networks.
\newblock {\em PRX Quantum}, 2(4):040337, 2021.

\bibitem{cong2019quantum}
Iris Cong, Soonwon Choi, and Mikhail~D Lukin.
\newblock Quantum convolutional neural networks.
\newblock {\em Nature Physics}, 15(12):1273--1278, 2019.

\bibitem{du2022efficient}
Yuxuan Du, Zhuozhuo Tu, Xiao Yuan, and Dacheng Tao.
\newblock Efficient measure for the expressivity of variational quantum
  algorithms.
\newblock {\em Physical Review Letters}, 128(8):080506, 2022.

\bibitem{perez2020data}
Adri{\'a}n P{\'e}rez-Salinas, Alba Cervera-Lierta, Elies Gil-Fuster, and
  Jos{\'e}~I Latorre.
\newblock Data re-uploading for a universal quantum classifier.
\newblock {\em Quantum}, 4:226, 2020.

\bibitem{mohri2018foundations}
Mehryar Mohri, Afshin Rostamizadeh, and Ameet Talwalkar.
\newblock {\em Foundations of machine learning}.
\newblock MIT press, 2018.

\bibitem{banchi2023statistical}
Leonardo Banchi, Jason~Luke Pereira, Sharu~Theresa Jose, and Osvaldo Simeone.
\newblock Statistical complexity of quantum learning.
\newblock {\em arXiv preprint arXiv:2309.11617}, 2023.

\bibitem{wolpert1996existence}
David~H Wolpert.
\newblock The existence of a priori distinctions between learning algorithms.
\newblock {\em Neural computation}, 8(7):1391--1420, 1996.

\bibitem{wolpert1996lack}
David~H Wolpert.
\newblock The lack of a priori distinctions between learning algorithms.
\newblock {\em Neural computation}, 8(7):1341--1390, 1996.

\bibitem{adam2019no}
Stavros~P Adam, Stamatios-Aggelos~N Alexandropoulos, Panos~M Pardalos, and
  Michael~N Vrahatis.
\newblock No free lunch theorem: A review.
\newblock {\em Approximation and Optimization: Algorithms, Complexity and
  Applications}, pages 57--82, 2019.

\bibitem{zhao2023learning}
Haimeng Zhao, Laura Lewis, Ishaan Kannan, Yihui Quek, Hsin-Yuan Huang, and
  Matthias~C Caro.
\newblock Learning quantum states and unitaries of bounded gate complexity.
\newblock {\em arXiv preprint arXiv:2310.19882}, 2023.

\bibitem{yu2023optimal}
Zhan Yu, Xuanqiang Zhao, Benchi Zhao, and Xin Wang.
\newblock Optimal quantum dataset for learning a unitary transformation.
\newblock {\em Physical Review Applied}, 19(3):034017, 2023.

\bibitem{havlivcek2019supervised}
Vojt{\v{e}}ch Havl{\'\i}{\v{c}}ek, Antonio~D C{\'o}rcoles, Kristan Temme,
  Aram~W Harrow, Abhinav Kandala, Jerry~M Chow, and Jay~M Gambetta.
\newblock Supervised learning with quantum-enhanced feature spaces.
\newblock {\em Nature}, 567(7747):209, 2019.

\bibitem{anshu2024survey}
Anurag Anshu and Srinivasan Arunachalam.
\newblock A survey on the complexity of learning quantum states.
\newblock {\em Nature Reviews Physics}, 6(1):59--69, 2024.

\bibitem{haah2016sample}
Jeongwan Haah, Aram~W Harrow, Zhengfeng Ji, Xiaodi Wu, and Nengkun Yu.
\newblock Sample-optimal tomography of quantum states.
\newblock In {\em Proceedings of the forty-eighth annual ACM symposium on
  Theory of Computing}, pages 913--925, 2016.

\bibitem{huang2020predicting}
Hsin-Yuan Huang, Richard Kueng, and John Preskill.
\newblock Predicting many properties of a quantum system from very few
  measurements.
\newblock {\em Nature Physics}, 16(10):1050--1057, 2020.

\bibitem{king2024exponential}
Robbie King, Kianna Wan, and Jarrod McClean.
\newblock Exponential learning advantages with conjugate states and minimal
  quantum memory.
\newblock {\em arXiv preprint arXiv:2403.03469}, 2024.

\bibitem{tilly2022variational}
Jules Tilly, Hongxiang Chen, Shuxiang Cao, Dario Picozzi, Kanav Setia, Ying Li,
  Edward Grant, Leonard Wossnig, Ivan Rungger, George~H Booth, et~al.
\newblock The variational quantum eigensolver: a review of methods and best
  practices.
\newblock {\em Physics Reports}, 986:1--128, 2022.

\bibitem{chiaverini2004realization}
John Chiaverini, Dietrich Leibfried, Tobias Schaetz, Murray~D Barrett,
  RB~Blakestad, Joseph Britton, Wayne~M Itano, John~D Jost, Emanuel Knill,
  Christopher Langer, et~al.
\newblock Realization of quantum error correction.
\newblock {\em Nature}, 432(7017):602--605, 2004.

\bibitem{beckey2022variational}
Jacob~L Beckey, M~Cerezo, Akira Sone, and Patrick~J Coles.
\newblock Variational quantum algorithm for estimating the quantum fisher
  information.
\newblock {\em Physical Review Research}, 4(1):013083, 2022.

\bibitem{cerezo2020variational}
Marco Cerezo, Alexander Poremba, Lukasz Cincio, and Patrick~J Coles.
\newblock Variational quantum fidelity estimation.
\newblock {\em Quantum}, 4:248, 2020.

\bibitem{larose2019variational}
Ryan LaRose, Arkin Tikku, {\'E}tude O’Neel-Judy, Lukasz Cincio, and Patrick~J
  Coles.
\newblock Variational quantum state diagonalization.
\newblock {\em npj Quantum Information}, 5(1):1--10, 2019.

\bibitem{bauer2016hybrid}
Bela Bauer, Dave Wecker, Andrew~J Millis, Matthew~B Hastings, and Matthias
  Troyer.
\newblock Hybrid quantum-classical approach to correlated materials.
\newblock {\em Physical Review X}, 6(3):031045, 2016.

\bibitem{google2020hartree}
Google~AI Quantum et~al.
\newblock Hartree-fock on a superconducting qubit quantum computer.
\newblock {\em Science}, 369(6507):1084--1089, 2020.

\bibitem{o2019calculating}
Thomas~E O’Brien, Bruno Senjean, Ramiro Sagastizabal, Xavier Bonet-Monroig,
  Alicja Dutkiewicz, Francesco Buda, Leonardo DiCarlo, and Lucas Visscher.
\newblock Calculating energy derivatives for quantum chemistry on a quantum
  computer.
\newblock {\em npj Quantum Information}, 5(1):113, 2019.

\bibitem{peruzzo2014variational}
Alberto Peruzzo, Jarrod McClean, Peter Shadbolt, Man-Hong Yung, Xiao-Qi Zhou,
  Peter~J Love, Al{\'a}n Aspuru-Guzik, and Jeremy~L O’brien.
\newblock A variational eigenvalue solver on a photonic quantum processor.
\newblock {\em Nature communications}, 5:4213, 2014.

\bibitem{avkhadiev2020accelerating}
A~Avkhadiev, PE~Shanahan, and RD~Young.
\newblock Accelerating lattice quantum field theory calculations via
  interpolator optimization using noisy intermediate-scale quantum computing.
\newblock {\em Physical review letters}, 124(8):080501, 2020.

\bibitem{kokail2019self}
Christian Kokail, Christine Maier, Rick van Bijnen, Tiff Brydges, Manoj~K
  Joshi, Petar Jurcevic, Christine~A Muschik, Pietro Silvi, Rainer Blatt,
  Christian~F Roos, et~al.
\newblock Self-verifying variational quantum simulation of lattice models.
\newblock {\em Nature}, 569(7756):355--360, 2019.

\bibitem{cross2017open}
Andrew~W Cross, Lev~S Bishop, John~A Smolin, and Jay~M Gambetta.
\newblock Open quantum assembly language.
\newblock {\em arXiv preprint arXiv:1707.03429}, 2017.

\bibitem{smith2016practical}
Robert~S Smith, Michael~J Curtis, and William~J Zeng.
\newblock A practical quantum instruction set architecture.
\newblock {\em arXiv preprint arXiv:1608.03355}, 2016.

\bibitem{wolf2023mathematical}
Michael~M Wolf.
\newblock Mathematical foundations of supervised learning.
\newblock 2023.

\bibitem{orus2019tensor}
Rom{\'a}n Or{\'u}s.
\newblock Tensor networks for complex quantum systems.
\newblock {\em Nature Reviews Physics}, 1(9):538--550, 2019.

\bibitem{gao2017efficient1}
Xun Gao and Lu-Ming Duan.
\newblock Efficient representation of quantum many-body states with deep neural
  networks.
\newblock {\em Nature communications}, 8(1):662, 2017.

\bibitem{jerbi2023power}
Sofiene Jerbi, Joe Gibbs, Manuel~S Rudolph, Matthias~C Caro, Patrick~J Coles,
  Hsin-Yuan Huang, and Zo{\"e} Holmes.
\newblock The power and limitations of learning quantum dynamics incoherently.
\newblock {\em arXiv preprint arXiv:2303.12834}, 2023.

\bibitem{gottesman1998heisenberg}
Daniel Gottesman.
\newblock The heisenberg representation of quantum computers.
\newblock {\em arXiv preprint quant-ph/9807006}, 1998.

\bibitem{salmon2023provable}
Wilfred Salmon, Sergii Strelchuk, and Tom Gur.
\newblock Provable advantage in quantum pac learning.
\newblock {\em arXiv preprint arXiv:2309.10887}, 2023.

\bibitem{van2023quantum}
Joran van Apeldoorn, Arjan Cornelissen, Andr{\'a}s Gily{\'e}n, and Giacomo
  Nannicini.
\newblock Quantum tomography using state-preparation unitaries.
\newblock In {\em Proceedings of the 2023 Annual ACM-SIAM Symposium on Discrete
  Algorithms (SODA)}, pages 1265--1318. SIAM, 2023.

\bibitem{chen2024quantum}
Yu-Ao Chen, Yin Mo, Yingjian Liu, Lei Zhang, and Xin Wang.
\newblock Quantum advantage in reversing unknown unitary evolutions.
\newblock {\em arXiv preprint arXiv:2403.04704}, 2024.

\bibitem{vapnik1982necessary}
Vladimir~N Vapnik and A~Ya Chervonenkis.
\newblock Necessary and sufficient conditions for the uniform convergence of
  means to their expectations.
\newblock {\em Theory of Probability \& Its Applications}, 26(3):532--553,
  1982.

\bibitem{bartlett2002rademacher}
Peter~L Bartlett and Shahar Mendelson.
\newblock Rademacher and gaussian complexities: Risk bounds and structural
  results.
\newblock {\em Journal of Machine Learning Research}, 3(Nov):463--482, 2002.

\bibitem{tikhomirov1993varepsilon}
Vladimir~M Tikhomirov.
\newblock $\varepsilon$-entropy and $\varepsilon$-capacity of sets in
  functional spaces.
\newblock {\em Selected Works of AN Kolmogorov: Volume III: Information Theory
  and the Theory of Algorithms}, pages 86--170, 1993.

\bibitem{bergholm2018pennylane}
Ville Bergholm, Josh Izaac, Maria Schuld, Christian Gogolin, Shahnawaz Ahmed,
  Vishnu Ajith, M~Sohaib Alam, Guillermo Alonso-Linaje, B~AkashNarayanan, Ali
  Asadi, et~al.
\newblock Pennylane: Automatic differentiation of hybrid quantum-classical
  computations.
\newblock {\em arXiv preprint arXiv:1811.04968}, 2018.

\bibitem{larocca2021theory}
Martin Larocca, Nathan Ju, Diego Garc{\'\i}a-Mart{\'\i}n, Patrick~J Coles, and
  M~Cerezo.
\newblock Theory of overparametrization in quantum neural networks.
\newblock {\em arXiv preprint arXiv:2109.11676}, 2021.

\bibitem{wang2022symmetric}
Xinbiao Wang, Junyu Liu, Tongliang Liu, Yong Luo, Yuxuan Du, and Dacheng Tao.
\newblock Symmetric pruning in quantum neural networks.
\newblock In {\em The Eleventh International Conference on Learning
  Representations}, 2022.

\bibitem{liu2023analytic}
Junyu Liu, Khadijeh Najafi, Kunal Sharma, Francesco Tacchino, Liang Jiang, and
  Antonio Mezzacapo.
\newblock Analytic theory for the dynamics of wide quantum neural networks.
\newblock {\em Physical Review Letters}, 130(15):150601, 2023.

\bibitem{collins2006integration}
Beno{\^\i}t Collins and Piotr {\'S}niady.
\newblock Integration with respect to the haar measure on unitary, orthogonal
  and symplectic group.
\newblock {\em Communications in Mathematical Physics}, 264(3):773--795, 2006.

\bibitem{puchala2011symbolic}
Zbigniew Pucha{\l}a and Jaros{\l}aw~Adam Miszczak.
\newblock Symbolic integration with respect to the haar measure on the unitary
  group.
\newblock {\em arXiv preprint arXiv:1109.4244}, 2011.

\bibitem{cerezo2021cost}
Marco Cerezo, Akira Sone, Tyler Volkoff, Lukasz Cincio, and Patrick~J Coles.
\newblock Cost function dependent barren plateaus in shallow parametrized
  quantum circuits.
\newblock {\em Nature communications}, 12(1):1--12, 2021.

\bibitem{zhao2022hamiltonian}
Qi~Zhao, You Zhou, Alexander~F Shaw, Tongyang Li, and Andrew~M Childs.
\newblock Hamiltonian simulation with random inputs.
\newblock {\em Physical Review Letters}, 129(27):270502, 2022.

\end{thebibliography}
\end{document}